\title{Classifying unavoidable Tverberg partitions}
\author{Boris Bukh\thanks{Department of Mathematical Sciences, Carnegie
Mellon University, Pittsburgh, PA 15213, USA. Research was supported in
part by Churchill College, Cambridge, by U.S. taxpayers through NSF grant
DMS-1301548, and by Alfred P.\ Sloan Foundation through Sloan Research
Fellowship.} \and Po-Shen Loh\thanks{Department of Mathematical Sciences,
  Carnegie Mellon University, Pittsburgh, PA 15213, USA. Research supported
  in part by an NSA Young Investigators Grant, NSF grants 
DMS-1201380 and DMS-1455125, and by a USA-Israel BSF Grant.} \and Gabriel Nivasch\thanks{Department of Computer Science, Ariel University, Ariel, Israel. Research was partially supported by ERC Advanced Research Grant 267165 (DISCONV).}}
\date{}
\theoremstyle{plain}
\newtheorem{theorem}{Theorem}[section]
\newtheorem{lemma}[theorem]{Lemma}
\newtheorem{observation}[theorem]{Observation}
\newtheorem{corollary}[theorem]{Corollary}
\newtheorem{conjecture}[theorem]{Conjecture}                  
\newtheorem{claim}{Claim}                             
\theoremstyle{definition}
\newtheorem{definition}[theorem]{Definition}
\theoremstyle{remark}
\newtheorem{remark}[theorem]{Remark}
\newcommand*{\R}{\mathbb{R}}                          
\newcommand*{\eqdef}{\stackrel{\text{\tiny{def}}}{=}} 
\def\mygobble#1{}
\def\vepsapproximant/{$\varepsilon$\nobreakdash-approx\-i\-mant}
\def\vepsapproximants/{$\varepsilon$\nobreakdash-approx\-i\-mants}
\def\vepsnet/{$\varepsilon$\nobreakdash-net}
\def\vepsnets/{$\varepsilon$\nobreakdash-nets}
\newcommand*{\TODO}[1]{}                              
\newcommand*{\NB}[1]{}                                
\DeclareMathOperator{\conv}{conv}                     
\DeclareMathOperator{\ahull}{ahull}                   
\DeclareMathOperator{\orient}{orient}                 
\DeclareMathOperator{\tv}{Tv}                         
\DeclareMathOperator{\sixpt}{\Pi_{\mathrm{six}}}              
\newcommand{\Gs}{G_{\mathrm s}}                       
\newcommand{\B}{\mathcal{B}}                          
\DeclareMathOperator{\sconv}{\mathrm{stconv}}         
\newcommand{\codefont}[1]{{\small\texttt{#1}}}
\begin{document}

\maketitle

\begin{abstract}
Let $T(d,r) \eqdef (r-1)(d+1)+1$ be the parameter in Tverberg's theorem, and call a partition $\mathcal I$ of $\{1,2,\ldots,T(d,r)\}$ into $r$ parts a \emph{Tverberg type}. We say that $\mathcal I$ \emph{occurs} in an ordered point sequence $P$ if $P$ contains a subsequence $P'$ of $T(d,r)$ points such that the partition of $P'$ that is order-isomorphic to $\mathcal I$ is a Tverberg partition. We say that $\mathcal I$ is \emph{unavoidable} if it occurs in every sufficiently long point sequence.

In this paper we study the problem of determining which Tverberg types are unavoidable. We conjecture a complete characterization of the unavoidable Tverberg types, and we prove some cases of our conjecture for $d\le 4$.
Along the way, we study the avoidability of many other geometric predicates.

Our techniques also yield a large family of $T(d,r)$-point sets for which the number of Tverberg partitions is exactly $(r-1)!^d$. This lends further support for Sierksma's conjecture on the number of Tverberg partitions.

{\bf Keywords:} Geometric predicate, Ramsey theory, stair-convexity, Tverberg's theorem.
\end{abstract}

\section{Introduction}

By a \emph{(geometric) predicate} of arity $k$ we mean a $k$-ary relation $\Phi$ on $\R^d$, i.e. a property which $k$-tuples of points $(p_1, \ldots, p_k)\in(\R^d)^k$ might or might not satisfy. We focus on \emph{semialgebraic} predicates, which are predicates given by Boolean combinations of terms of the form $f(p_1,\ldots,p_k)>0$, where the $f$'s are nonzero polynomials.

An example of a geometric predicate is the planar predicate ``$p_1,p_2,\allowbreak p_3,p_4\in \R^2$ are in convex position''. (We show in Section~\ref{sec_semi} that all predicates we consider are semialgebraic.) Another example is the $(d+1)$-ary \emph{orientation} predicate in $\R^d$, $\orient(p_1,\ldots, p_{d+1})$, whose defining polynomial is
\begin{equation*}
\det\left[
\begin{array}{ccc}
1&\cdots&1\\
p_1&\cdots&p_{d+1}
\end{array}
\right].
\end{equation*}
For example, in the plane, $\orient(p_1,p_2,p_3)$ means that $p_1,p_2,p_3$ are in counterclockwise order.

Let $P\eqdef (p_1, \ldots, p_n)\in(\R^d)^n$ be an ordered sequence of $n$ points in $\R^d$. We say that a $k$-ary predicate $\Phi$ \emph{occurs} in $P$ if $P$ contains a subsequence $p_{i_1}, \ldots, p_{i_k}$, $i_1<\cdots<i_k$, for which $\Phi(p_{i_1}, \ldots, p_{i_k})$ holds. Otherwise, we say that $P$ \emph{avoids} $\Phi$. For simplicity, we will assume that $P$ is \emph{$\Phi$-generic}, which means that none of the polynomials defining $\Phi$ evaluate to $0$ on any $k$-tuple of distinct points of $P$. Genericity can be achieved, if necessary, by an appropriate arbitrarily small perturbation of $P$.

We say that $P$ is \emph{homogeneous} with respect to $\Phi$ if $P$ avoids either $\Phi$ or $\neg \Phi$ (the negation of $\Phi$). Ramsey's theorem implies the existence of arbitrarily-long homogeneous point sequences with respect to any predicate: Indeed, given $\ell$, Ramsey's theorem states that there exists a large enough $n$ such that \emph{every} $n$-point sequence $P$ contains a $\Phi$-homogeneous subsequence of length $\ell$.

Point sequences that are homogeneous with respect to the orientation predicate are important because they form the vertices of cyclic polytopes~(see e.g.~\cite{ziegler}); these point sequences have been studied previously e.g.~in \cite{BN_onesided,BMP,EMRPS,suk}, and specifically in the context of Tverberg's theorem in \cite{tolerated_tv,PS}. In the plane, $P$ is orientation-homogeneous if and only if its points are in convex position and are listed in the order they appear along the boundary of $\conv P$.

We say that a predicate $\Phi$ is \emph{unavoidable} if it occurs in every sufficiently-long $\Phi$-generic point sequence. Otherwise, if there exist arbitrarily long point sequences avoiding $\Phi$, then $\Phi$ is \emph{avoidable}.

For example, the planar $4$-points-in-convex-position predicate mentioned above is unavoidable, since every generic five-point set contains four points in convex position. This result is known as the ``happy ending problem'', and it is actually one of the original results that motivated the development of Ramsey theory~\cite{ramsey_erdos}.

Hence, for each predicate $\Phi$ there are three mutually exclusive possibilities: Either $\Phi$ is unavoidable, or $\neg \Phi$ (the negation of $\Phi$) is unavoidable, or both $\Phi$ and $\neg \Phi$ are avoidable. For us in this paper, to \emph{solve} a predicate means to determine on which of these categories it falls.

This problem has been shown to be decidable for semialgebraic predicates in dimension $1$~\cite{BM_decidability}. However, for dimensions $d\ge 2$ the problem remains open.

In this paper we focus on predicates related to Tverberg partitions. We present a conjecture and prove some partial results in low dimensions. We also examine along the way some related predicates. 

\subsection{Tverberg partitions}

For a positive integer $n$, denote $[n] \eqdef \{1,2,\ldots,n\}$. Define $T(d,r)\eqdef(r-1)(d+1)+1$. Tverberg's theorem \cite{tverberg_orig} (see also~\cite{matou_DG}) asserts that for every point sequence $P \eqdef (p_1, \ldots, p_{T(d,r)})$ in $\R^d$ there exists a partition $\mathcal I \eqdef \{I_1,\ldots,I_r\}$ of $[T(d,r)]$ into $r$ parts, such that the $r$ convex hulls $\conv\{p_i \mid i\in I_j\}$, $1\le j\le r$, intersect at a common point. Such a partition $\mathcal I$ is called a \emph{Tverberg partition for $P$}. If the convex hulls intersect at a single point (which happens whenever $P$ is generic), then that point is called the \emph{Tverberg point} of the partition. The special case $r=2$ of Tverberg's theorem is \emph{Radon's lemma}~\cite{radon_orig}; in that case we refer to the Tverberg point as the \emph{Radon point}.

We call a partition of $[T(d,r)]$ into $r$ parts a \emph{Tverberg type}.

Let $\tv_{\mathcal I}(P)$ be the $T(d,r)$-ary predicate stating that the Tverberg type $\mathcal I$ is a Tverberg partition for $P$.

Our main objective in this paper is to classify the Tverberg-type predicates according to the three possibilities mentioned above.

\begin{definition}
Let $\mathcal I \eqdef \{I_1, \ldots, I_r\}$ be a Tverberg type. We call $\mathcal I$ \emph{colorful} if, for each $1\le i\le d+1$, the $r$ consecutive integers $\{(r-1)(i-1)+1,\ldots, (r-1)i+1\}$ belong one to each of the $r$ parts $I_1, \ldots, I_r$.\footnote{Our colorful Tverberg types are unrelated to the \emph{colored Tverberg theorem} (see~\cite{matou_DG}).}
\end{definition}

It is sometimes convenient to encode a Tverberg type as a string $\sigma \in [r]^{T(d,r)}$, indicating to which part each integer belongs. Since the order of the parts within the partition does not matter, there are $r!$ different ways of encoding each Tverberg type. For example, for $d=2$, $r=3$, the Tverberg type $\bigl\{\{1,3,6\},\{2,7\},\{4,5\}\bigr\}$ can be encoded as $1213312$.

In this representation, the colorful Tverberg types are those $\sigma$ for which $\{\sigma(i+1), \ldots, \sigma(i+r)\} = [r]$ for each $i=0,r-1, 2(r-1), \ldots, d(r-1)$. An example of a colorful Tverberg type for $d=3$, $r=5$ is $\underline{1234\overline 5}\overline{241\underline 3}\underline{514\overline 2}\overline{5134}$. The lines above and below the digits indicate the intervals in which all ``colors'' must show up.

It is easily seen that the number of colorful Tverberg types with parameters $d,r$ is $(r-1)!^d$. We will expound on the significance of this number below.

\begin{theorem}\label{thm_stretched-diag}
For every Tverberg type $\mathcal I$, if $\mathcal I$ is colorful, then $\neg \tv_{\mathcal I}$ is avoidable; otherwise, $\tv_{\mathcal I}$ is avoidable.
\end{theorem}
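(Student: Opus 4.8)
The plan is to establish both halves of the theorem with a single family of arbitrarily long witnesses: the points of a sufficiently stretched diagonal. Let $P=(p_1,\dots,p_n)$ lie on the diagonal of a stretched grid $\Gs\subseteq\R^d$. By the stair-convexity framework of Bukh, Matou\v{s}ek, and Nivasch, if the grid is stretched far enough then the stair-convex hull is an exact combinatorial surrogate for the convex hull on $P$: for all index subsets $S_1,\dots,S_r$ one has $\bigcap_j\conv\{p_i:i\in S_j\}\neq\emptyset$ if and only if $\bigcap_j\sconv\{p_i:i\in S_j\}\neq\emptyset$ (here, as is standard, $\sconv$ is the stair-convex hull and the $p_i$ are identified with the diagonal of a uniform grid). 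Since any $T(d,r)$-term subsequence of a stretched diagonal is again a stretched diagonal, the theorem reduces to the following claim.
\begin{quote}
\emph{On the stretched diagonal $p_1,\dots,p_m$ with $m=T(d,r)$, a partition $\mathcal I=\{I_1,\dots,I_r\}$ of $[m]$ into $r$ parts satisfies $\bigcap_j\sconv\{p_i:i\in I_j\}\neq\emptyset$ if and only if $\mathcal I$ is colorful.}
\end{quote}
Granting the claim, a colorful $\mathcal I$ occurs as a Tverberg partition in every $T(d,r)$-term subsequence of $P$, so $P$ avoids $\neg\tv_{\mathcal I}$; a non-colorful $\mathcal I$ occurs in none, so $P$ avoids $\tv_{\mathcal I}$; and an arbitrarily small perturbation preserving these facts renders $P$ generic for the relevant predicate.

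I would prove the claim by induction on $d$. The base case $d=1$ is elementary: the stair-convex hull of a set of reals is its spanning interval, at most one part of $\mathcal I$ can be a singleton, a singleton part must consist of the median index, and the remaining parts must each pair an index below the median with one above — so the Tverberg partitions of $2r-1$ collinear points are exactly the colorful types. For $d\ge2$, write $p_i=(\bar p_i,p_{i,d})$ with $p_{1,d}<\dots<p_{m,d}$ and use that on the diagonal the stair-convex hull peels off the last coordinate: $z=(\bar z,z_d)\in\sconv\{p_i:i\in I\}$ exactly when $p_{\min I,d}\le z_d\le p_{\max I,d}$ and $\bar z\in\sconv\{\bar p_i:i\in I,\ p_{i,d}\ge z_d\}$. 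A common point $z$ of the $r$ hulls, with $z_d\in(p_{k-1,d},p_{k,d}]$, thus forces $\min I_j\le k$ and $\max I_j\ge k$ for every $j$ (with $z_d=p_{k,d}$ whenever $\min I_j=k$); since the $r$ part-minima are distinct, $k\ge r$. When $k=r$, those minima, being $r$ distinct indices each at most $r$, are exactly $\{1,\dots,r\}$ — so some part has minimum $r$, forcing $z_d=p_{r,d}$, and moreover this says precisely that the first colorfulness window $W_1=\{1,\dots,r\}$ is rainbow. The peeling identity then makes $\bar z$ a common point of the stair-convex hulls of the sets $J_j:=I_j\cap\{r,\dots,m\}$, which partition the $T(d-1,r)$ consecutive indices $\{r,\dots,m\}$ into $r$ nonempty parts; under the relabeling $i\mapsto i-r+1$ the colorfulness windows $W_2,\dots,W_{d+1}$ of $\mathcal I$ become precisely those of $\{J_j\}$, so by induction $\{J_j\}$ is colorful. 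Since ``$W_1$ rainbow together with $\{J_j\}$ colorful'' is by definition ``$\mathcal I$ colorful'', this gives the forward direction; for the converse, if $\mathcal I$ is colorful then $W_1$ is rainbow and $\{J_j\}$ is colorful, so putting $z_d:=p_{r,d}$ and letting $\bar z$ be a common point supplied by the inductive hypothesis produces a common point of all $r$ hulls.

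It remains to rule out $k>r$ (which also covers the case where $z_d$ lies strictly between two consecutive grid heights). This is handled by an auxiliary lemma, proved by the same kind of induction: \emph{fewer than $T(d,r)$ points on a stretched diagonal in $\R^d$ admit no partition into $r$ parts whose stair-convex hulls share a point.} Indeed, if $k>r$ then peeling the last coordinate yields a partition of at most $T(d-1,r)-1$ points in $\R^{d-1}$ into $r$ nonempty parts whose stair-convex hulls meet, contradicting the lemma; and the lemma's base case is just that $r$ intervals determined by at most $2r-2$ points of $\R$ cannot all contain a common point.

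The step I expect to be the real obstacle is not this double induction — which is largely bookkeeping — but certifying the passage between ordinary convexity and stair-convexity: one must verify that on a sufficiently stretched diagonal the nonemptiness of $\bigcap_j\conv\{p_i:i\in I_j\}$ is genuinely equivalent to that of $\bigcap_j\sconv\{p_i:i\in I_j\}$, uniformly over all the partitions that arise, and that the equivalence concerns honest set intersection rather than containment of a single prescribed point. This is precisely what the stair-convexity framework is built to supply, but it requires choosing the stretching parameters so that they dominate the bounds coming from all relevant subsets at once; with that in place, everything else follows from the combinatorial analysis sketched above.
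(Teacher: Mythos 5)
Your proposal is correct and follows essentially the same route as the paper: take the stretched diagonal, transfer the question to stair-convexity via the transference lemma (which the paper proves in Appendix~\ref{app_transference} for $r$ parts -- it is the genuinely tedious step, exactly as you anticipate), and classify the stair-Tverberg partitions of the diagonal by an inductive peeling of one coordinate (Lemmas~\ref{lemma_st_multiK} and~\ref{lemma_stTv}). The only cosmetic differences are that you peel from the bottom where the paper peels off the $r-1$ highest points (equivalent by reflection, since colorfulness is reversal-symmetric), and that the paper carries out the peeling for arbitrary point sets in stair-general position, which additionally yields the exact count $(r-1)!^d$ used for the Sierksma application.
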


\begin{proof}[Proof sketch]
Take $P$ to be the \emph{stretched diagonal} previously studied in \cite{BMN_epsilonnets,BMN_stabbing,N_thesis}. As we will show in Section~\ref{sec_stair}, the stretched diagonal is homogeneous with respect to all Tverberg-type predicates, and furthermore, the Tverberg types that occur in it are exactly the colorful ones.\footnote{This was noticed independently by Imre B\'ar\'any and Attila P\'or, as well as by Isaac Mabillard and Uli Wagner (private communication).} Since the number of points in the stretched diagonal can be made arbitrarily large, the claim follows.
\end{proof}

\begin{conjecture}\label{conj_colorful}
For every Tverberg type $\mathcal I$, if $\mathcal I$ is colorful, then $\tv_{\mathcal I}$ is unavoidable; otherwise, $\neg \tv_{\mathcal I}$ is unavoidable.
\end{conjecture}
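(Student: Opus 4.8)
The plan is to combine Ramsey's theorem with an analysis of orientation-homogeneous sequences. The key point is that two $\orient$-generic point sequences with the same order type realize the same Tverberg types, since whether $\conv\{p_i\mid i\in I_1\}\cap\cdots\cap\conv\{p_i\mid i\in I_r\}\neq\emptyset$ can be decided from the signs of the relevant determinants $\orient(p_{i_1},\ldots,p_{i_{d+1}})$. Rather than applying Ramsey's theorem to $\tv_{\mathcal I}$, I would apply it to $\orient$: every sufficiently long sequence $P$ then contains a subsequence $Q$ of exactly $T(d,r)$ points that is orientation-homogeneous, and every orientation-homogeneous sequence of $T(d,r)$ points has the same order type — the \emph{alternating} order type, realized by the vertices of a cyclic polytope listed along the moment curve (this order type is unique up to a reflection, which does not affect Tverberg types). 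Hence whether the partition of $Q$ order-isomorphic to $\mathcal I$ is a Tverberg partition depends only on $\mathcal I$ and on $d,r$; let $\mathcal A(d,r)$ be the set of $\mathcal I$ for which it is. If $\mathcal I\in\mathcal A(d,r)$ then $P$ realizes $\mathcal I$, and if $\mathcal I\notin\mathcal A(d,r)$ then $P$ realizes $\neg\mathcal I$. The conjecture thus reduces to a single statement about one explicit configuration: that the $T(d,r)$-vertex cyclic polytope realizes exactly the colorful Tverberg types, i.e. $\mathcal A(d,r)$ is the set of colorful types.

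The inclusion ``every colorful type lies in $\mathcal A(d,r)$'' is easy for $d\le 2$: there the stretched diagonal is in convex position, hence has the alternating order type, so Theorem~\ref{thm_stretched-diag} already places all colorful types in $\mathcal A(d,r)$. For $d\ge 3$ the stretched diagonal no longer has the alternating order type, and I would instead construct a Tverberg point directly on the moment curve for each colorful $\sigma$, using the color blocks $\{(r-1)(i-1)+1,\ldots,(r-1)i+1\}$ and Gale's evenness condition to control the parities that decide whether the $r$ simplices surround a chosen near-diagonal point; an alternative is to transport the stretched-diagonal computation of Section~\ref{sec_stair} to the moment curve by a controlled perturbation, tracking exactly which $(d+1)$-tuples reverse orientation.

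The main obstacle is the reverse inclusion, that no non-colorful type lies in $\mathcal A(d,r)$ — that in a cyclic polytope no non-colorful partition is a Tverberg partition. The cleanest route would be to show the $T(d,r)$-vertex cyclic polytope belongs to the family of $T(d,r)$-point configurations with exactly $(r-1)!^d$ Tverberg partitions that we construct along the way; since the second paragraph supplies $(r-1)!^d$ distinct colorful Tverberg partitions, this forces equality. Failing that, one must exhibit, for each non-colorful $\mathcal I$, a separation certificate in the alternating oriented matroid — a linear functional, or a nested family of separating hyperplanes — showing that the full intersection, or some partial intersection, of the $r$ simplices is empty; the case analysis needed for this seems to grow rapidly with $d$, which is why only $d\le 4$ is currently within reach. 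The auxiliary predicates studied along the way (recording which spanned simplices cross, and how the Tverberg point moves under insertion of a new point) are exactly the intermediate combinatorial devices that organize this case analysis.
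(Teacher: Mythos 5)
The statement you are proving is stated in the paper as a \emph{conjecture}: the paper itself establishes only the special cases listed in Theorem~\ref{thm_main} (via the six-point lemma, the movement lemma, and dimension-specific case analyses), and notes that a full proof was announced independently by P\'or using different methods. So a complete argument here would be a genuinely new contribution --- but your proposal fails at its first step.

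The fatal gap is the claim that ``two $\orient$-generic point sequences with the same order type realize the same Tverberg types.'' Whether $\bigcap_j \conv\{p_i \mid i\in I_j\}\neq\emptyset$ is decided by the signs of the Cramer determinants of a $T(d,r)\times T(d,r)$ system; these are \emph{not} functions of the $(d+1)$-point orientations, and the property is in fact not an order-type invariant. Figure~\ref{fig_sixpt} exhibits two orientation-homogeneous (hence order-isomorphic) $6$-point planar sequences, one satisfying $\sixpt$ and one not; and for $d=2$, $r=3$ the colorful type $3213231$ on an orientation-homogeneous $7$-point sequence is equivalent to the instance $\sixpt(1,2,3,4,5,7)$ (the other two line-separation conjuncts hold automatically by the same-side rule). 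Inserting a seventh point into the right-hand configuration of Figure~\ref{fig_sixpt} therefore gives an orientation-homogeneous $7$-point sequence for which this colorful type is \emph{not} a Tverberg partition, while other realizations of the same order type do realize it. Hence your set $\mathcal A(d,r)$ is not well defined, and the conjecture does not reduce to a statement about ``the'' cyclic polytope. This is precisely why the paper cannot stop at orientation-homogeneity: it must identify further predicates (such as $\sixpt$ and the movement predicates) that are undetermined by the order type, prove that one polarity of each is unavoidable by a Ramsey-type argument on \emph{longer} sequences (e.g.\ the $7$-point argument of Lemma~\ref{lem_sixpt}), and only then homogenize with respect to them. Your remaining steps (Gale's evenness condition for the colorful inclusion in $d\ge 3$, separation certificates for the non-colorful types) are sketches resting on this ill-founded reduction, so the proposal does not constitute a proof of any case beyond what is already in Theorem~\ref{thm_main}, and in particular not of the full conjecture.
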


We call the colorful Tverberg type encoded by $12\cdots r\cdots 212\cdots r\cdots$ the \emph{zigzag type}.
For example, the zigzag type for $r=3$ and $d=4$ is $12321232123$.

\begin{theorem}\label{thm_main}
Conjecture~\ref{conj_colorful} holds in the following cases:
\begin{itemize}
\item For $d\le 2$ and all $r$.
\item For $d=3$, for all Tverberg types that have parts of sizes $\{2,3,\allowbreak 4,4,\ldots,4\}$.
\item For $d=r=3$, for all the four colorful Tverberg types that have parts of sizes $\{3,3,3\}$.
\item For the zigzag type for $d=4$ and all $r$.
\end{itemize}
\end{theorem}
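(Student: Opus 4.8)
The plan is to prove Conjecture~\ref{conj_colorful} in each listed case by a combination of Ramsey-type reductions to homogeneous point sequences and explicit geometric constructions.

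First, I would establish a general reduction principle: by Ramsey's theorem it suffices to prove that every sufficiently long point sequence that is homogeneous with respect to \emph{all} the relevant geometric predicates (orientation, the various Tverberg-type predicates, and the auxiliary predicates introduced along the way) contains the desired Tverberg partition. Combined with Theorem~\ref{thm_stretched-diag}, which shows the colorful types \emph{do} occur in the stretched diagonal, the whole game becomes: for a homogeneous sequence $P$, show (a) every colorful type occurs, and (b) no non-colorful type occurs. Part (b) is already delivered by the avoidability half of Theorem~\ref{thm_stretched-diag} together with homogeneity — if a non-colorful $\tv_{\mathcal I}$ occurred in every long homogeneous sequence it would occur in the stretched diagonal, contradiction — so the real content is (a). For $d \le 1$ the claim is essentially the decidability result plus a direct check; for $d\le 2$ the key input is that orientation-homogeneous planar sequences are points in convex position, and one analyzes directly which partitions of $T(2,r)=3r-2$ convex-position points into $r$ groups have intersecting hulls, showing the zigzag and its colorful siblings always work via a rotation/interleaving argument.

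For the $d=3$ cases the strategy is to pass to a sequence that is orientation-homogeneous in $\R^3$ — i.e., a cyclic-polytope-like ``moment curve'' order type — and then reduce the Tverberg condition to a system of sign conditions on $4\times 4$ determinants. For parts of sizes $\{2,3,4,\ldots,4\}$ one can try to locate the common point by a continuity/degree argument: build a continuous map from a product of simplices (one factor per part) into $\R^3$ recording the difference of barycenters, and show the image winds around the origin using the homogeneity-forced orientation signs, so a zero exists. The $\{3,3,3\}$ subcase with $d=r=3$ (nine points, so $T(3,3)=9$) is small enough to be attacked by an essentially finite case analysis on the order type, checking the four colorful types one at a time. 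For the zigzag type in $\R^4$ with general $r$, I would again work on an orientation-homogeneous sequence and exploit the special ``alternating'' structure of the zigzag: the parts $I_j = \{j, 2r-j, 2r+j-1,\ldots\}$ interleave perfectly along the curve, so one can pair up consecutive points and reduce to an inductive claim about Radon partitions of interleaved blocks, pushing the dimension down by suitable projections.

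The main obstacle I expect is part~(a) in the higher cases: proving that a colorful type \emph{always} produces intersecting convex hulls on a homogeneous sequence. Homogeneity fixes all the orientation signs, but turning those signs into a witness for a common intersection point of $r$ convex hulls requires either an explicit barycentric-coordinate solution or a topological (degree/Borsuk–Ulam-flavored) argument, and controlling the signs of the relevant determinants under the ``all colors appear in each window'' combinatorics is delicate — this is presumably why the conjecture is only proved for $d\le 4$ and special part-size profiles. I would budget most of the effort here, likely introducing and ``solving'' several auxiliary predicates (as the abstract promises) precisely to isolate the minimal sign information needed to force the intersection.
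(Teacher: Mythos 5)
There is a genuine logical gap at the very start of your plan, in the claim that part (b) (``no non-colorful type occurs'') is ``already delivered'' by Theorem~\ref{thm_stretched-diag}. That theorem shows that for a non-colorful $\mathcal I$ the predicate $\tv_{\mathcal I}$ is \emph{avoidable} (the stretched diagonal avoids it). The conjecture asks for something strictly stronger: that $\neg\tv_{\mathcal I}$ is \emph{unavoidable}, i.e.\ that there do not exist arbitrarily long sequences in which \emph{every} order-isomorphic copy of $\mathcal I$ is a Tverberg partition. Your proposed contradiction does not exist: a long homogeneous sequence $P'$ that avoids $\neg\tv_{\mathcal I}$ is simply a different sequence from the stretched diagonal, and the two coexist without contradiction. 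The paper has to do real work here --- the Radon-partition lemma for non-interlacing small parts, the no-consecutive-points lemma (an argument about a line meeting at most two faces of a simplex), the movement lemma, and an explicit case analysis --- and none of that is replaced by citing Theorem~\ref{thm_stretched-diag}.

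For part (a) your plan also diverges from what is actually needed. In the plane, orientation-homogeneity (convex position) does \emph{not} determine whether a colorful type is a Tverberg partition: the paper's Figure~\ref{fig_sixpt} exhibits two convex-position $6$-point configurations, one satisfying and one violating the key line-separation predicate $\sixpt$. So a ``direct analysis of which partitions of convex-position points have intersecting hulls'' cannot work; one must introduce the auxiliary predicate $\sixpt$, prove it unavoidable by a separate pigeonhole-style argument on $7$ points, and only then decompose each colorful type (via Lemma~\ref{lem_ptinconv}) into separation predicates each of which is either forced by the same-side rule or equivalent to $\sixpt$. The higher-dimensional cases in the paper follow the same pattern (central projection to lift $\sixpt$, the movement lemma, and a delicate planar analysis inside the triangle $158$ for $d=4$), rather than the degree-theoretic or Borsuk--Ulam-flavored arguments you gesture at; those are not developed in your proposal and it is far from clear they could be made to work, since homogeneity alone does not pin down the needed signs. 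As written, the proposal identifies the right reduction framework but neither half of the actual proof is supplied.
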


The results of Theorem~\ref{thm_main} regarding zigzag types for $d\le 4$ were previously announced in~\cite{BM_decidability}.

\paragraph{Motivation.}
We started studying this topic when we tried to prove that the zigzag type is unavoidable for $r=3$ and all $d$. This was the missing link in our argument that there exist one-sided epsilon-approximants of constant size with respect to convex sets~\cite{BN_onesided}. We managed to prove the zigzag-type claim---and hence, the one-sided epsilon-approximant corollary---only for $d\le 4$. However, we subsequently realized that we could do without the zigzag-type claim, relying instead on what is now Lemma~4 in \cite{BN_onesided}, which is an easier Ramsey-type result and holds for all $d$. 

\subsection{Proof strategy}

Our proof strategy for proving that a given predicate $\Phi$ is unavoidable is as follows: Suppose for a contradiction that for every $n$ there exists an $n$-point sequence $P$ that avoids $\Phi$. Let $\Psi_1, \ldots, \Psi_k$ be other predicates. By Ramsey's theorem, by making $n$ large enough, we can guarantee the existence of arbitrarily large subsequences $P'$ of $P$ that are homogeneous with respect to $\Psi_1, \ldots, \Psi_k$.

Hence, in our search for a contradiction, we can assume without loss of generality that our sequence $P$ not only avoids $\Phi$, but is also homogeneous with respect to a fixed finite family $\overline \Psi \eqdef (\Psi_1, \ldots, \Psi_k)$ of other predicates. Furthermore, if we previously showed that some of the $\Psi_i$'s are unavoidable, then we can assume that $P$ specifically avoids their negations $\neg \Psi_i$.

In our proofs, we will start with an orientation-homogeneous sequence (which corresponds to taking $\Psi_1=\orient$), and we will add additional predicates to $\overline \Psi$ ``on-the-fly''; this is OK as long as we do it only a finite number of times.

We say that predicates $\Phi_1$ and $\Phi_2$ are \emph{equivalent} if there is an unavoidable predicate $\Phi$ such that $\Phi\implies(\Phi_1\iff \Phi_2)$. In our proofs,
$\Phi$ will usually be the orientation predicate. We write $\Phi_1\equiv \Phi_2$ if $\Phi_1$ and $\Phi_2$ are equivalent.

\subsection{Sierksma's conjecture}

Let $P$ be a sequence of $T(d,r)$ points in $\R^d$. Tverberg's theorem asserts the existence of at least one Tverberg partition for $P$. However, usually there is more than one Tverberg partition. If $r=2$ (the case of Radon's lemma), and the points of $P$ are in general position, then the Radon partition is indeed unique. However, it seems that for $r\ge 3$ the Tverberg partition is \emph{never} unique.

Sierksma conjectured in 1979 (\cite{sierksma_cheese}; cited by Reay in~\cite[Problem 14]{reay_problems}) that the number of Tverberg partitions is always at least $(r-1)!^d$. Sierksma pointed out that there exist $T(d,r)$-point sets that have exactly $(r-1)!^d$ Tverberg partitions: Choose $d+1$ affinely independent points $p_1, \ldots, p_{d+1}$, and let $q$ be a point in the interior of the simplex $\conv\{p_1, \ldots, p_{d+1}\}$. Replace each $p_i$ by a tiny cloud $\mathcal P_i$ of $r-1$ points. Then, the Tverberg partitions of this point set are exactly those that have $r-1$ parts containing exactly one point from each cloud, plus an $r$-th part containing only $q$. Hence, the number of Tverberg partitions here equals exactly $(r-1)!^d$. (However, for $d\ge 2$ none of these partitions are colorful, no matter how the points are linearly ordered, since for $d\ge 2$, colorful partitions never contain parts of size $1$.)

White~\cite{white} recently found a more general family of $T(d,r)$-point sets that have exactly $(r-1)!^d$ Tverberg partitions. In fact, he constructs, for every partition $T(d,r) = n_1+\cdots+n_r$ of $T(d,r)$ into $r$ integers satisfying $1\le n_i\le d+1$, a $T(d,r)$-point set $P$ that has exactly $(r-1)!^d$ Tverberg partitions, all of which have parts of sizes $n_1, \ldots, n_r$ and have the origin as their Tverberg point. Furthermore, in his construction, each point $p_i$ is only specified by a vector of signs $v_i \in \{+,0,-\}^d$, which indicates the sign of each coordinate of $p_i$; the magnitudes of the coordinates can be chosen arbitrarily.

Regarding lower bounds, Hell~\cite{hell_number}, extending earlier work of Vu\v ci\'c and Zivaljevi\'c~\cite{VZ_note_sierksma}, showed that the number of Tverberg partitions is always at least $(r-d)!$, and that if $r=p^k$ is a prime power, then the number is at least
\begin{equation*}
\frac{1}{(r-1)!} \left(\frac{r}{k+1}\right)^{\lfloor T(d,r)/2\rfloor}.
\end{equation*}
For large $d$ and $r$, this number is roughly the square root of Sierksma's conjectured bound. 

\paragraph{Our result.} In this paper we construct a broader family of $T(d,r)$-point sets that have exactly $(r-1)!^d$ Tverberg partitions. Our result is a corollary of the proof of Theorem~\ref{thm_stretched-diag}. We show that in stair-convex geometry (previously studied by the authors in \cite{BMN_epsilonnets,BN_centerlines,N_thesis}) \emph{every} generic $T(d,r)$-point set has exactly $(r-1)!^d$ \emph{stair-Tverberg} partitions. As a consequence, in Euclidean geometry, $T(d,r)$ randomly chosen points from the \emph{stretched grid} (\cite{BMN_epsilonnets,BN_centerlines,N_thesis}) will almost surely have exactly $(r-1)!^d$ Tverberg partitions.

\subsection{Concurrent work}

Very recently, and independently, Attila P\'or announced~\cite{por_announcement} that he has found a full proof of Conjecture~\ref{conj_colorful}. P\'or uses a different approach from the one we use in this paper.

\subsection{Organization of the paper}

In Section~\ref{sec_semi} we briefly show that the predicates we consider in this paper are semialgebraic.
In Section~\ref{sec_unavoidable} we prove Theorem~\ref{thm_main}. We first show how solving Tverberg-type predicates reduces to solving \emph{hyperplane-side predicates}. Then we solve many hyperplane-side predicates, first in the plane, then in $d=3$, and then in $d=4$.
In Section~\ref{sec_stair} we prove Theorem~\ref{thm_stretched-diag} and our result regarding Sierksma's conjecture. A key technical lemma that is obvious but whose proof is quite tedious is proven in Appendix~\ref{app_transference}. Finally, Appendix~\ref{app_code} contains Mathematica code for some computer enumerations that we performed.

\section{Our predicates are semialgebraic}\label{sec_semi}

In this section we show that all the predicates defined in the Introduction are semialgebraic.
The orientation predicate is clearly semialgebraic.

\begin{lemma}\label{lem_ptinconv}
Given generic points $q, p_1, \ldots, p_{d+1}\in\R^d$, we have $q \in \conv\{p_1,\allowbreak \ldots,\allowbreak p_{d+1}\}$ if and only if, for each $1\le i\le d+1$, $\orient\{p_1, \ldots, p_{d+1}\}$ equals the orientation obtained by replacing $p_i$ by $q$.
\end{lemma}

\begin{proof}
This follows immediately from the definition of convex combination and Cramer's rule.
\end{proof}

\begin{corollary}\label{cor_ptinconv_semi}
The $(d+2)$-ary predicate ``$q \in \conv\{p_1, \ldots, p_{d+1}\}$'' is semialgebraic.
\end{corollary}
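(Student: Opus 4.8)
The plan is to read the result off directly from Lemma~\ref{lem_ptinconv}. That lemma rewrites $q\in\conv\{p_1,\dots,p_{d+1}\}$ as the conjunction, over $i=1,\dots,d+1$, of the assertion that the orientation of $(p_1,\dots,p_{d+1})$ agrees with the orientation of the tuple obtained by substituting $q$ for $p_i$. Since each orientation is by definition the sign of a single polynomial --- the corresponding orientation determinant --- the statement ``two orientations agree'' is equivalent to ``the product of the two determinant polynomials is positive'', which is a single atomic semialgebraic condition. Taking the conjunction over the $d+1$ choices of $i$ then exhibits the membership predicate as a Boolean combination of strict polynomial inequalities in the coordinates of $q,p_1,\dots,p_{d+1}$, hence as a semialgebraic predicate.

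Concretely, I would write $D_0$ for the orientation determinant of $(p_1,\dots,p_{d+1})$ and, for $1\le i\le d+1$, write $D_i$ for the same determinant with $p_i$ replaced by $q$; each $D_j$ is a polynomial in the coordinates of the $d+2$ points. By Lemma~\ref{lem_ptinconv}, for generic input we have $q\in\conv\{p_1,\dots,p_{d+1}\}$ if and only if $\sgn D_i = \sgn D_0$ for every $i$, and the $i$-th clause can be rewritten as $D_0D_i>0$. The predicate is therefore $\bigwedge_{i=1}^{d+1}\bigl(D_0D_i>0\bigr)$, which is semialgebraic by definition.

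I do not expect any genuine obstacle here. The only point meriting a remark is that Lemma~\ref{lem_ptinconv} is stated for generic point configurations, whereas the notion of ``semialgebraic predicate'' is a priori about all tuples; but this is harmless, because in this paper predicates are only ever evaluated on $\Phi$-generic sequences, so the genericity hypothesis costs us nothing. (Were one to insist on a characterization valid without genericity, one would replace the strict inequalities by non-strict ones and separately handle the degenerate faces of the simplex, but that refinement is not needed for our purposes.)
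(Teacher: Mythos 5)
Your proposal is correct and is exactly the argument the paper intends: the corollary is stated as an immediate consequence of Lemma~\ref{lem_ptinconv}, since each orientation condition is the sign of a determinant polynomial and the conjunction of sign-agreement conditions (e.g.\ $D_0D_i>0$) is a Boolean combination of polynomial inequalities. Your remark on genericity is also consistent with the paper's conventions, which only ever evaluate predicates on generic tuples.
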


\begin{observation}
All predicates mentioned in the Introduction are semialgebraic.
\end{observation}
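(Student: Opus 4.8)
The predicates mentioned in the Introduction that still require an argument are the planar ``four points in convex position'' predicate (and, with the same reasoning, ``$k$ points in convex position'' in any dimension) and the Tverberg-type predicates $\tv_{\mathcal I}$; the orientation predicate is semialgebraic by inspection, and Corollary~\ref{cor_ptinconv_semi} already covers ``$q\in\conv\{p_1,\dots,p_{d+1}\}$'' for $d+1$ affinely independent points. The plan is to bootstrap everything from Corollary~\ref{cor_ptinconv_semi} together with the fact that Boolean combinations and projections of semialgebraic sets are semialgebraic.

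First I would upgrade Corollary~\ref{cor_ptinconv_semi} to arbitrary cloud sizes, i.e.\ show that ``$q\in\conv\{p_1,\dots,p_m\}$'' is semialgebraic for every fixed $m$. One route is Carath\'eodory's theorem: $q\in\conv\{p_1,\dots,p_m\}$ iff $q\in\conv S$ for some $S\subseteq\{p_1,\dots,p_m\}$ with $\abs S\le d+1$, which exhibits the predicate as a finite disjunction of point-in-hull predicates on at most $d+1$ points. A cleaner route, which avoids having to discuss affinely dependent subsets, is to observe that $q\in\conv\{p_1,\dots,p_m\}$ holds precisely when there exist reals $\lambda_1,\dots,\lambda_m$ with $\lambda_i\ge 0$, $\sum_i\lambda_i=1$ and $\sum_i\lambda_i p_i=q$; the set of tuples $(\lambda_1,\dots,\lambda_m,q,p_1,\dots,p_m)$ cut out by these polynomial (in)equalities is semialgebraic, and by the Tarski--Seidenberg theorem its projection to the $(q,p_1,\dots,p_m)$-coordinates, obtained by eliminating the $\lambda_i$, is semialgebraic. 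Either way, ``$k$ points in convex position'' is then semialgebraic, being the Boolean combination $\bigwedge_{i=1}^{k}\neg\bigl(p_i\in\conv\{p_j:j\ne i\}\bigr)$.

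It remains to handle $\tv_{\mathcal I}$ for a Tverberg type $\mathcal I=\{I_1,\dots,I_r\}$. By definition, $\tv_{\mathcal I}(p_1,\dots,p_{T(d,r)})$ holds iff the convex hulls $\conv\{p_i:i\in I_j\}$, $1\le j\le r$, share a common point, i.e.\ iff there exists $q\in\R^d$ with $q\in\conv\{p_i:i\in I_j\}$ for all $j$. By the previous step, the set of $(q,p_1,\dots,p_{T(d,r)})$ satisfying the conjunction over $j$ is semialgebraic; projecting out $q$ and applying Tarski--Seidenberg once more shows that $\tv_{\mathcal I}$ is a semialgebraic predicate on $(p_1,\dots,p_{T(d,r)})$, which completes the proof.

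I do not expect a genuine obstacle here. The only mild annoyance is that the Carath\'eodory reduction spawns convex hulls of fewer than $d+1$ points, which may be affinely dependent and so are not literally instances of Corollary~\ref{cor_ptinconv_semi}; invoking instead the Tarski--Seidenberg theorem (projections of semialgebraic sets are semialgebraic) bypasses this cleanly, and the remainder is routine Boolean bookkeeping.
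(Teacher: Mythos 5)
Your proposal is correct, but for the Tverberg-type predicates it takes a genuinely different route from the paper. You existentially quantify the common point $q$ (and, in the convex-hull step, the coefficients $\lambda_i$) and then invoke the Tarski--Seidenberg theorem to project these witnesses away. The paper instead avoids quantifier elimination entirely: it observes that for a Tverberg type the bookkeeping is exact --- writing the candidate common point as an affine combination of each part introduces $T(d,r)$ unknown coefficients, and the normalization equations together with $x_1=\cdots=x_r$ give exactly $T(d,r)$ linear equations --- so for generic input the system has a unique solution, which Cramer's rule expresses as explicit ratios of determinants; the predicate then just asserts positivity of all these coefficients. The trade-off is the usual one. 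Your argument is shorter to state, needs no genericity discussion, and generalizes immediately to intersections of convex hulls of arbitrarily many points (where the witness is not uniquely determined); but it leans on the comparatively heavy Tarski--Seidenberg theorem and yields the defining polynomials only non-constructively. The paper's argument is elementary and produces the defining polynomials explicitly, which fits the paper's convention of defining $\Phi$-genericity in terms of the polynomials defining $\Phi$. For the convex-position predicate your treatment coincides with the paper's (a Boolean combination of point-in-hull predicates via Corollary~\ref{cor_ptinconv_semi}); your worry about affinely dependent subsets in the Carath\'eodory route does not arise there because the hulls involved have exactly $d+1$ points.
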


\begin{proof}
The four-point planar convex-position predicate can be formulated by stating that none of the four given points lies in the convex hull of the other three. Hence, by Corollary~\ref{cor_ptinconv_semi}, this predicate is semialgebraic.

Now, consider a Tverberg-type predicate $\tv_{\mathcal I}$ for $\mathcal I \eqdef \{I_1, \ldots, I_r\}$. For each $j$, let $x_j$ be an affine combination of the points $p_i$, $i\in I_j$. Hence, there are a total of $T(d,r)$ coefficients in the $r$ affine combinations; these are our unknowns. For each $j$ there is an equation requiring that the coefficients of the $j$-th affine combination add up to $1$. Further, we express the requirement $x_1 = x_2 = \ldots= x_r$ by $(r-1)d$ equations. Hence, the total number of equations is also $T(d,r)$. Therefore, we have a linear system, which has a unique solution if the given points are generic. The unique solution can be expressed using Cramer's rule. Then, the predicate asserts that all the values in this solution are positive, so that the affine combinations are in fact convex combinations.
\end{proof}

\section{Proofs that Tverberg types are unavoidable}\label{sec_unavoidable}

In this section we prove Theorem~\ref{thm_main}. The case $d=1$ is trivial, so let $d\ge 2$.
Recall that a point sequence $P\in(\R^d)^n$ is \emph{orientation-homogeneous} if all size-$(d+1)$ subsequences of $P$ have the same orientation. 
We need the following result due to Gale~\cite{gale}.

\begin{lemma}[Radon-partition lemma]\label{lem_radon}
Let $P = (p_1, \ldots, p_{d+2})$ be orientation-homogeneous. Then the Radon partition for $P$ is the alternating one, i.e. the one encoded by 
$\sigma = 12121\ldots$.
\end{lemma}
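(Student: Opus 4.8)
The plan is to exhibit an explicit affine dependence among $p_1,\dots,p_{d+2}$ and to read off the Radon partition from the signs of its coefficients. First I would note that, since $P$ is $\orient$-generic, the $d+2$ points are in general position, so the $(d+1)\times(d+2)$ matrix
\[
M \eqdef \begin{pmatrix} 1 & \cdots & 1 \\ p_1 & \cdots & p_{d+2} \end{pmatrix}
\]
has rank $d+1$ and hence a one-dimensional kernel. Any nonzero $\lambda=(\lambda_1,\dots,\lambda_{d+2})$ in this kernel satisfies $\sum_i \lambda_i = 0$ and $\sum_i \lambda_i p_i = 0$; splitting the indices according to the sign of $\lambda_i$ and normalizing by the common total weight $\sum_{\lambda_i>0}\lambda_i = -\sum_{\lambda_i<0}\lambda_i$ produces a point lying in both $\conv\{p_i : \lambda_i>0\}$ and $\conv\{p_i : \lambda_i<0\}$, so $\{\,\{i:\lambda_i>0\},\{i:\lambda_i<0\}\,\}$ is a Radon partition. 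In general position the kernel of $M$ is one-dimensional, so this partition is the unique Radon partition of $P$.

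Next I would write down such a $\lambda$ explicitly via the cofactor (Laplace) identity: setting $\lambda_i \eqdef (-1)^i \det M_{\hat\imath}$, where $M_{\hat\imath}$ denotes $M$ with its $i$-th column removed, one has $M\lambda = 0$. This is the standard fact that bordering $M$ by a copy of one of its rows yields a singular $(d+2)\times(d+2)$ matrix whose Laplace expansion along the new row is, up to sign, the corresponding coordinate of $M\lambda$. The key observation is that $M_{\hat\imath}$ is precisely the orientation matrix of the subsequence $(p_1,\dots,p_{i-1},p_{i+1},\dots,p_{d+2})$, whose points are listed in increasing index order; hence $\det M_{\hat\imath}$ is exactly $\orient$ of that subsequence. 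By orientation-homogeneity all these determinants share the same sign $\veps\in\{+1,-1\}$, so $\sgn\lambda_i = (-1)^i\veps$. In particular $\lambda\neq 0$ (confirming it spans the kernel) and its coordinate signs strictly alternate, so $\{i:\lambda_i>0\}$ and $\{i:\lambda_i<0\}$ are exactly the odd-index and even-index positions (or vice versa, depending on $\veps$), which is the partition encoded by $\sigma = 1212\ldots$.

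I do not expect a genuine obstacle here; the only points requiring care are (i) getting the sign convention in the Laplace identity right, so that $\lambda_i=(-1)^i\det M_{\hat\imath}$ really lies in $\ker M$, and (ii) checking that deleting column $i$ leaves the surviving columns in the increasing order that matches the argument order of the $\orient$ predicate, so that orientation-homogeneity applies verbatim to each $\det M_{\hat\imath}$. Uniqueness of the Radon partition in general position can either be cited or extracted immediately from the one-dimensionality of $\ker M$; everything else is routine linear algebra.
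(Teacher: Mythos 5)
Your proposal is correct and follows essentially the same route as the paper: both exhibit the affine dependence explicitly and observe that its coefficients are (up to sign) the orientation determinants of the $(d+1)$-point subsequences, so orientation-homogeneity forces the signs to alternate. The paper phrases this via Cramer's rule after normalizing $\alpha_1=1$, while you use the equivalent cofactor formula for the kernel vector; the computation is the same.
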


\begin{proof}
In general, the Radon partition of a set of $d+2$ points is obtained as follows: We find a nontrivial solution to $\sum \alpha_i = 0$, $\sum \alpha_i p_i = \vec 0$; then, in the latter equation we move all terms with negative $\alpha_i$'s to the right-hand side; finally, we normalize both sides so the sum of coefficients is $1$.

In our case, we add the equation $\alpha_1=1$ to ensure the linear system has a unique, nontrivial solution. Then, Cramer's rule yields that the signs of the $\alpha_i$'s alternate.
\end{proof}

\subsection{Some notation}

To avoid subscripts we shall use integers to denote points. So,
for example, $7$ will denote the $7$th point of sequence $P$. Points after the $9$th are denoted by the letters $A,B,C,\ldots$.
Similarly, we shall write $p_1<p_2$ if point $p_1$ precedes point $p_2$ in $P$.

The convex hull operation will be indicated by the concatenation of the
corresponding integers. So, for example $27$ is the line
segment going from the $2$nd point to the $7$th, and $245$
is the triangle with vertices $2$, $4$ and~$5$.

We shall use two notations for intersection. First, we use
the conventional $179\cap 028$ to denote the intersection of 
triangles $179$ and $028$. Secondly, we denote the same by
placing the two objects to be intersected one above the other, 
like so: $^{179}_{028}$.

Next, we introduce notation for \emph{hyperplane-separation statements}. If $p_1, \ldots, p_d$, $q_1$, $q_2$, $\ldots$ , $r_1$, $r_2$, $\ldots$ are points in $\R^d$, then we write
\begin{equation*}
p_1\cdots p_d(q_1q_2\cdots:r_1r_2\cdots)
\end{equation*}
to mean that the hyperplane spanned by $p_1, \ldots, p_d$ separates $q_1, q_2, \ldots$, from $r_1, r_2, \ldots$. In other words, $\orient(p_1, \ldots, p_d, q_i)$ has the same value for all $i$, which is the opposite of $\orient(p_1, \ldots, p_d, r_i)$ for all $i$. For instance, $148(2:7_{258}^{\,37})$ means that $7$ and $_{258}^{\,37}$
are on one side of the hyperplane $148$, whereas $2$ is on the other. (Of course, in order for $1,4,8$
to span a hyperplane and for $_{258}^{\,37}$ to be a Radon point, we must be in $\R^3$.)

Given a hyperplane-separation statement $s \eqdef H(p:q)$ involving $k$ distinct points, we denote by $\Pi[s]$ the corresponding $k$-ary geometric predicate. For example, in $\R^2$, consider the two statements $s_1 \eqdef 25(1:{}^{14}_{36})$ and $s_2 \eqdef 37(1:{}^{15}_{49})$. Then $\Pi[s_1] \equiv \Pi[s_2]$; both denote the $6$-ary semialgebraic predicate that asserts, given a length-$6$ orientation-homogeneous sequence $(p_1, \ldots, p_6)$, that the line through $p_2$ and $p_5$ separates $p_1$ from $p_1p_4 \cap p_3p_6$; the latter intersection point exists if $p_1, \ldots p_6$ are orientation-homogeneous by the Radon-partition lemma (Lemma~\ref{lem_radon}). (If the input points are not orientation-homogeneous, then we do not care what the predicate asserts.)

\begin{lemma}
Let $P$ be orientation-homogeneous. Let $p_1 < p_2 < \cdots < p_d$ be points of $P$, and let $q < q'$ be two other points of $P$. Then, we have $p_1 p_2 \cdots p_d(q:q')$ if and only if the number of $p_i$'s between $q$ and $q'$ is odd; otherwise, we have $p_1 p_2 \cdots p_d (qq':{})$.
\end{lemma}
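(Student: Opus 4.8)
The plan is to reduce the whole statement to a parity computation on the sign of the orientation determinant. First I would record the setup: since $P$ is orientation-homogeneous and, by our standing genericity assumption, $\orient$-generic, the points $p_1,\dots,p_d$ are affinely independent and hence span a hyperplane $H$, and neither $q$ nor $q'$ lies on $H$ (their orientation with $p_1,\dots,p_d$ is nonzero). Thus the separation statement $p_1\cdots p_d(q:q')$ is meaningful, and it holds precisely when $\orient(p_1,\dots,p_d,q)$ and $\orient(p_1,\dots,p_d,q')$ have opposite signs, while $p_1\cdots p_d(qq':{})$ holds precisely when those two orientations have equal signs. Let $\veps\in\{+1,-1\}$ be the common value of $\orient$ on all length-$(d+1)$ subsequences of $P$ guaranteed by homogeneity.

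The key step is to compute $\orient(p_1,\dots,p_d,x)$ for an arbitrary point $x$ of $P$ outside $\{p_1,\dots,p_d\}$. Let $k(x)$ be the number of indices $i$ with $p_i>x$ in the order of $P$. The tuple $(p_1,\dots,p_d,x)$ lists the same $d+1$ points as the subsequence of $P$ on that point set (written in $P$-order), differing only in the position of $x$; sorting it into $P$-order means moving $x$ out of the last slot past exactly the $k(x)$ points $p_i$ that exceed it, i.e.\ by $k(x)$ adjacent transpositions. Since transposing two arguments flips the sign of the orientation determinant and the sorted sequence has orientation $\veps$ by homogeneity, we obtain $\orient(p_1,\dots,p_d,x)=(-1)^{k(x)}\veps$.

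Applying this to $x=q$ and to $x=q'$, the two orientations differ in sign exactly when $k(q)$ and $k(q')$ have opposite parities, i.e.\ when $k(q)-k(q')$ is odd. Finally, since $q<q'$, every $p_i$ exceeding $q'$ also exceeds $q$, so $k(q)\ge k(q')$ and $k(q)-k(q')$ equals the number of indices $i$ with $q<p_i<q'$, which is exactly the number of $p_i$'s lying between $q$ and $q'$ in $P$ (recall the $p_i$ are all distinct from $q$ and $q'$). Hence $p_1\cdots p_d(q:q')$ holds iff this number is odd, and $p_1\cdots p_d(qq':{})$ holds iff it is even, as claimed.

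The only place that calls for care is the sign bookkeeping in the middle paragraph — counting the adjacent transpositions correctly and invoking homogeneity on the right subsequence — but there is no genuine obstacle; this lemma is essentially the proof idea of the Radon-partition lemma (Lemma~\ref{lem_radon}) repackaged in slightly greater generality.
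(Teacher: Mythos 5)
Your proof is correct and follows the same approach as the paper's (much terser) argument: the paper simply says the claim follows by counting column interchanges in the orientation determinant, and your computation of $\orient(p_1,\dots,p_d,x)=(-1)^{k(x)}\veps$ via adjacent transpositions, followed by the parity comparison $k(q)-k(q')$, is exactly that count carried out in full. No gaps.
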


\begin{proof}
Recall every interchange between two columns of a matrix multiplies the value of its determinant by $-1$. Then the claim follows by counting the number of column interchanges.
\end{proof}

Thus, for example, in $\R^4$ we have $1368(27:459)$.

\begin{observation}[Same-side rule]\label{obs_sameside}
If a simplex $\sigma$
lies entirely on one side of a hyperplane $H$, then any intersection $\sigma\cap \tau$ lies
on the same side of $H$, for any set $\tau$.
\end{observation}

For example, in $\R^2$ we have $14(5:{}^{13}_{25})$, since the segment $13$ is entirely on one side of the line through $14$, specifically, on the side opposite to $5$.

\subsection{The six-point lemma}

\begin{figure}
\centerline{\includegraphics{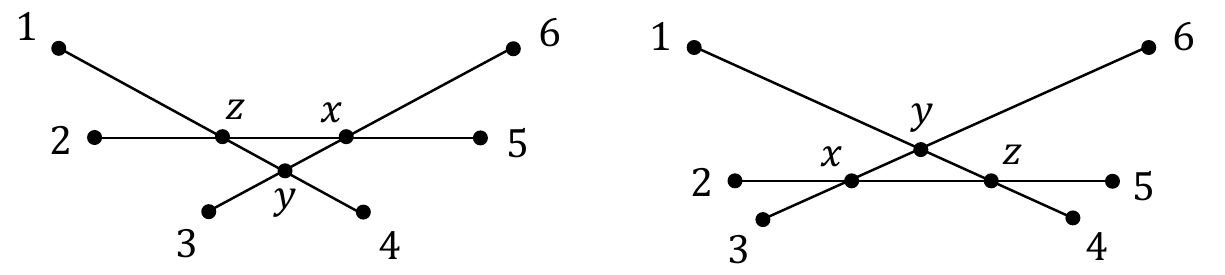}}
\caption{\label{fig_sixpt}In both figures points $1,\ldots,6$ are orientation-homogeneous. However, in the left figure, the statements of Observation~\ref{obs_sixequiv} hold, whereas in the right figure they do not hold.}
\end{figure}

\begin{observation}\label{obs_sixequiv}
Let points $1,\ldots, 6$ be orientation-homogeneous in the plane. Let $x\eqdef 25\cap 36$, $y\eqdef 14\cap 36$, $z\eqdef 14\cap 25$. Then the statements $14(3:x)$, $25(1:y)$, $36(4:z)$ are all equivalent. (See Figure~\ref{fig_sixpt}.)
\end{observation}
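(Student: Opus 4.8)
The plan is to use the fact, recalled in the introduction, that a planar orientation-homogeneous sequence $1,\dots,6$ is simply a convex hexagon whose vertices are listed in the cyclic order in which they appear on its boundary; after a reflection (which reverses every orientation and fixes all three statements) I may assume $\orient(i,j,k)>0$ whenever $i<j<k$. The three ``main diagonals'' $14$, $25$, $36$ then pairwise cross in their relative interiors — their endpoints alternate around the hexagon — and they cut out a small central triangle whose vertices are exactly $x=25\cap 36$, $y=14\cap 36$ and $z=14\cap 25$ (distinct, by genericity). Each of the three separation statements records which way this central triangle ``points'' relative to the hexagon, and the goal is to pin this down numerically so that all three read off the same bit of information.

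The technical heart is a short computation on the line through $3$ and $6$, which contains $x$ and $y$ in its interior. Parametrize that line affinely by $q(\xi)=(1-\xi)\cdot 3+\xi\cdot 6$ and write $x=q(s)$, $y=q(t)$ with $s,t\in(0,1)$. By affineness, $\xi\mapsto\orient(1,4,q(\xi))$ is an affine function of $\xi$; since $\orient(1,4,3)<0$ (an odd permutation of the increasing triple $1,3,4$) and $\orient(1,4,6)>0$, it is strictly increasing, and it vanishes at $\xi=t$ because $y$ lies on the line $14$. Hence $\orient(1,4,x)>0$ precisely when $s>t$, and since $\orient(1,4,3)<0$ the statement $14(3:x)$ is exactly $\orient(1,4,x)>0$. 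Running the same argument with $\orient(2,5,\cdot)$ along the same line — using $\orient(2,5,3)<0$, $\orient(2,5,6)>0$, the vanishing at $\xi=s$ (because $x$ lies on line $25$), and $\orient(2,5,1)>0$ — shows that $25(1:y)$ is also equivalent to $s>t$. This yields $14(3:x)\iff 25(1:y)$.

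Rather than run a third, nearly identical computation for $36(4:z)$ (which would live on a different diagonal and then force me to relate its affine parameter back to $s$ and $t$), I would transfer the equivalence just proved by a relabeling. The cyclic rotation $(3,4,5,6,1,2)$ is again a positively-oriented convex hexagon, so the equivalence ``$14(3:x)\iff 25(1:y)$'', applied to it and unwound, becomes ``$\bigl[\text{line }36\text{ separates }5\text{ from }z\bigr]\iff\bigl[\text{line }14\text{ separates }3\text{ from }x\bigr]$''; and since $\orient(3,6,4)$ and $\orient(3,6,5)$ have the same sign (no vertex of the hexagon lies between $4$ and $5$), the left-hand side is just $36(4:z)$. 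Thus $36(4:z)\iff 14(3:x)$, closing the chain. I expect no genuine difficulty here beyond organization: keeping every orientation sign straight — each is fixed by the parity of the permutation that sorts the relevant index triple — and checking that the rotated sextuple really is a positively-oriented convex hexagon, so that the previous step legitimately applies; genericity is used only to make the strict inequality $s\neq t$ meaningful.
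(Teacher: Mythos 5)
Your proof is correct and rests on the same core observation as the paper's: both $x$ and $y$ lie on the diagonal $36$, and each of the first two separation statements is just a reading of whether $x$ or $y$ comes first along that diagonal (your parameter comparison $s>t$ is a coordinatized version of the paper's ``along the segment $36$ the order is $3,y,x,6$''). The only organizational difference is that the paper closes the loop by running the same one-step argument twice more, on the diagonals $14$ and $25$, to get a cycle of three implications, whereas you invoke the cyclic relabeling $(3,4,5,6,1,2)$ to recycle the first equivalence; both are valid.
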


\begin{proof}
Suppose $14(3:x)$. Then, along the segment $36$, the points $3,y,x,6$ lie in this order. Hence, $25(6:y)$, which is equivalent to $25(1:y)$.

Now suppose $25(1:y)$. Therefore, along the segment $14$ the order is $1,z,y,4$. Hence, $36(4:z)$.

Finally, suppose $36(4:z)$, which is equivalent to $36(5:z)$. Then, along the segment $25$ the order is $2,z,x,5$. Hence, $14(2:x)$, which is equivalent to $14(3:x)$.
\end{proof}

\begin{lemma}[Six-point lemma]\label{lem_sixpt}
The planar predicate 
\begin{equation*}
\sixpt\eqdef \Pi\bigl[14\bigl(3:{}^{25}_{36}\bigr)] \quad \Bigl({} \equiv \Pi\bigl[25\bigl(1:{}^{14}_{36}\bigr)\bigr] \equiv \Pi\bigl[36\bigl(4:{}^{14}_{25}\bigr)\bigr] \Bigr)
\end{equation*}
of Observation~\ref{obs_sixequiv} is unavoidable.
\end{lemma}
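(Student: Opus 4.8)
The plan is to proceed by contradiction: assume arbitrarily long $\sixpt$-generic sequences avoid $\sixpt$, and by Ramsey's theorem applied to the orientation predicate (as in the proof strategy of the Introduction) pass to an orientation-homogeneous one $P=(p_1,\dots,p_n)$ with $n$ as large as we like; after a reflection we may assume its orientation is the one of Observation~\ref{obs_sixequiv}, so that the points of $P$ are the vertices of a convex $n$-gon listed in cyclic order. It then suffices to find, for $n$ large enough, six indices $a<b<c<d<e<f$ with $\sixpt(p_a,\dots,p_f)$. The point to keep in mind throughout is that this cannot be done by a purely combinatorial (order-type) argument: $\sixpt$ is not determined by the order type of the six points, since both hexagons of Figure~\ref{fig_sixpt} are orientation-homogeneous yet disagree on $\sixpt$. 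So the argument must use metric information, which we obtain by choosing the six points from a part of $P$ that is simultaneously ``curved'' and ``evenly spread''.

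First I would extract a \emph{convex cap}. The $n$ exterior angles of the polygon $p_1\cdots p_n$ sum to $2\pi$, so for any fixed $m$ and all large $n$ some block of $m$ consecutive vertices of $P$ has its edges turning by a total angle less than $\pi$; their edge directions then lie in an open half-plane of directions, so after an affine change of coordinates (and a reflection if needed) this block becomes $(x_1,g(x_1)),\dots,(x_m,g(x_m))$ with $x_1<\cdots<x_m$, $g$ strictly convex, and the abscissa-order coinciding with the order in $P$. Since $m$ may be taken as large as we please, an Erd\H{o}s--Szekeres-type argument applied to the $m-1$ consecutive gaps of this cap then yields six of its vertices, on abscissas $x_1<\cdots<x_6$, whose five consecutive gaps $x_{i+1}-x_i$ are monotone.

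The heart of the matter is the claim: \emph{$\sixpt$ holds for any six points $(x_1,g(x_1)),\dots,(x_6,g(x_6))$ on the graph of a strictly convex $g$, with $x_1<\cdots<x_6$ and monotone consecutive gaps.} In the paper's notation, writing $i$ for the point $(x_i,g(x_i))$, one must verify $\sixpt$, i.e.\ that the line $14$ separates $3$ from the Radon point $X\eqdef{}^{25}_{36}$ (which exists by Lemma~\ref{lem_radon}). Writing $X$ by Cramer's rule as a convex combination of the points $2$ and $5$ with both coefficients in $(0,1)$, one gets $\orient(1,4,X)$ as the same convex combination of $\orient(1,4,2)$ and $\orient(1,4,5)$, with the ratio of coefficients equal to a ratio of two triple-orientations of the six points. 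Now for $i<j<k$ one has the classical factorization $\orient(i,j,k)=(x_j-x_i)(x_k-x_i)(x_k-x_j)\cdot[x_i,x_j,x_k]g$ into two positive factors; since $\orient(1,4,3)$ is a fixed, known sign (point $3$ lies between $1$ and $4$ on the cap), the statement $\sixpt$ reduces to a single inequality between two products, each a product of pairwise-gap products and of second divided differences of $g$. This inequality I would verify using (i) positivity of the second divided differences (convexity of $g$), (ii) the standard averaging recursion for divided differences, which links the divided differences appearing on the two sides, and (iii) monotonicity of the five gaps. When all the gaps are equal --- i.e.\ on the parabola --- it already reduces to an explicit polynomial inequality in the five gaps.

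The step I expect to be the main obstacle is precisely this inequality. Its delicacy is genuine: $\sixpt$ can in fact fail for six points on a convex graph when the gaps are badly distributed (for instance on the parabola with gaps in ratio $1:\varepsilon:1:\varepsilon:1$), so one must really use both the convexity of $g$ and the gap condition, and must show that the divided-difference factors of $g$ --- which convexity does not control individually --- combine in the favourable direction; the crucial constraints are those of the averaging recursion, linking for example $[x_2,x_4,x_5]g$ with $[x_1,x_2,x_5]g$. If this computation turns out to be too heavy, an alternative is to replace the cap extraction by a transference argument of the kind developed later in the paper for stair-convex geometry, extracting six points of $P$ in the relative position of six suitable points of the stretched diagonal, where $\sixpt$ can be checked directly --- the model configurations $(i,g(i))$ with $g$ growing very fast do satisfy $\sixpt$.
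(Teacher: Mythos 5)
Your proposal is a plan rather than a proof: the step you yourself flag as ``the main obstacle''---the inequality asserting that any six points on the graph of a strictly convex function with monotone consecutive gaps satisfy $\sixpt$---is exactly the entire content of the lemma, repackaged as an unverified analytic claim. As your own discussion shows ($\sixpt$ fails on a convex graph with gaps in ratio $1:\varepsilon:1:\varepsilon:1$), the predicate is \emph{not} implied by convex position, so nothing short of actually proving that inequality closes the argument; reducing ``unavoidable'' to ``a certain two-term product inequality in divided differences holds'' is not a reduction to something known. Two secondary steps are also shaky. First, applying Erd\H{o}s--Szekeres to the sequence of $m-1$ consecutive gaps gives a monotone \emph{subsequence of gaps}, not six vertices whose five \emph{consecutive} gaps (which are sums of blocks of original gaps) are monotone; this is repairable by a dyadic pigeonhole argument, but not as written. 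Second, the fallback via the stretched diagonal does not establish unavoidability: the fact that one particular sequence satisfies $\sixpt$ on all six-tuples says nothing about arbitrary sequences, and ``extracting six points of $P$ in the relative position of six points of the stretched diagonal'' is precisely the kind of Ramsey-type extraction that fails for properties not determined by the order type---which, as you note, $\sixpt$ is not.

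For comparison, the paper's proof is a short pigeonhole on \emph{seven} orientation-homogeneous points $1,\dots,7$, using only the same-side rule (Observation~\ref{obs_sameside}) and no metric information. Set $x\eqdef 14\cap 25$ and $y\eqdef 25\cap 47$. If $\neg\sixpt(1,\dots,6)$, i.e.\ $36({}:x5)$, then $14({}:y5)$ by the same-side rule (the segment $47$, apart from its endpoint on the line $14$, lies on the side of $5$), so the order along $25$ is $2,x,y,5$; hence $y$ lies on the segment $x5$, which lies entirely on one side of the line $36$, giving $36({}:y5)$, which is exactly $\sixpt(2,\dots,7)$. So $\sixpt$ occurs in every orientation-homogeneous $7$-point sequence, and unavoidability follows by Ramsey applied to the orientation predicate. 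If you want to salvage your approach, you would need to prove your divided-difference inequality in full; but be aware that even then it would yield a far worse bound and a far longer argument than the paper's two consecutive six-point windows.
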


\begin{proof}
Let $1,\ldots, 7$ be orientation-homogeneous. We will show that we cannot have both $\neg\sixpt(1,2,3,\allowbreak 4,5,6)$ and $\neg\sixpt(2,3,4,\allowbreak 5,6,7)$.

\begin{figure}
\centerline{\includegraphics{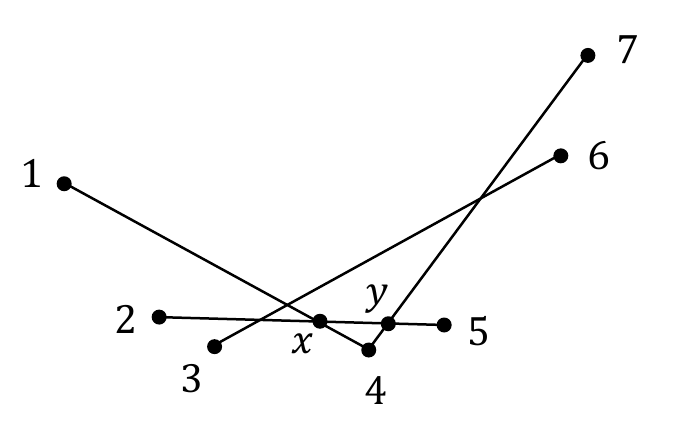}}
\caption{\label{fig_R2}Proof of the six-point lemma (Lemma~\ref{lem_sixpt}).}
\end{figure}

Let $x\eqdef 14\cap 25$, $y\eqdef 25\cap 47$. Suppose that $\neg\sixpt(1, \ldots, 6)$, or in other words, that $36({}:x5)$. However, by the same-side rule (Observation~\ref{obs_sameside}), we have $14({}:y5)$, so the order along the segment $25$ is $2,x,y,5$. In particular, $y\in x5$. Therefore, by the same-side rule we have $36({}:y5)$, which is equivalent to $\sixpt(2,\ldots, 7)$. See Figure~\ref{fig_R2}.
\end{proof}

\subsection{The case \texorpdfstring{$\boldsymbol{d=2}$}{d=2}}

\begin{lemma}
All the planar colorful Tverberg types are unavoidable.
\end{lemma}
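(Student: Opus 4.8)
The plan is to reduce, as usual, to an orientation-homogeneous sequence via Ramsey's theorem, and moreover — since the six-point predicate $\sixpt$ is unavoidable by Lemma~\ref{lem_sixpt} — to assume, by the ``predicates on the fly'' principle, that $\sixpt$ holds on \emph{every} six-term subsequence of our sequence $P$. It then suffices to exhibit, inside the first $T(2,r)=3r-2$ points of $P$, a realization of the given colorful type $\sigma$ as a Tverberg partition.

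First I would unpack the structure of a planar colorful type. Writing $\sigma$ as a string in $[r]^{3r-2}$ whose three windows $\{1,\dots,r\}$, $\{r,\dots,2r-1\}$, $\{2r-1,\dots,3r-2\}$ are each permutations of $[r]$, a short counting argument shows that exactly two parts have size $2$ — those of the colors at positions $r$ and $2r-1$ — while the remaining $r-2$ parts have size $3$, with one vertex in each of the three ``blocks'' $\{1,\dots,r-1\}$, $\{r+1,\dots,2r-2\}$, $\{2r,\dots,3r-2\}$. The color at position $r$ pairs that point $m$ with a point $t_a$ of the last block, and the color at position $2r-1$ pairs that point $m'$ with a point $p_b$ of the first block. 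In sorted order $p_b<m<m'<t_a$, so the Radon-partition lemma (Lemma~\ref{lem_radon}) tells us the segments $p_b m'$ and $m t_a$ cross; let $x$ be their crossing point. Any Tverberg point of $\sigma$ lies on both segments, hence equals $x$, so $\sigma$ is realized precisely when $x$ lies in each of the $r-2$ triangular parts.

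The core of the argument is then to verify, for each triple color $c$ with part $p_c s_c t_c$ (vertices in the first, middle and last block), that $x\in\triangle p_c s_c t_c$, i.e.\ that $x$ lies on the correct side of each of the three lines $p_c s_c$, $s_c t_c$, $p_c t_c$. For each such line there are two possibilities. Either one of the two crossing segments $p_b m'$, $m t_a$ lies entirely on the required side of the line — because both its endpoints do, which one reads off directly from orientation-homogeneity — and then $x$ lies on that side by the same-side rule (Observation~\ref{obs_sameside}). Or else the desired statement is exactly an instance of the six-point lemma: one takes the six-term subsequence consisting of $p_b,m,m',t_a$ together with the two endpoints of the line in question, observes that in this six-tuple the separating segment of the appropriate equivalent form of $\sixpt$ (Observation~\ref{obs_sixequiv}) is precisely that line and the relevant Radon point is precisely $x$, and checks that $\sixpt$ then places $x$ on the side we want. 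For the ``long'' edge $p_c t_c$ the second case occurs exactly when $p_b<p_c$ and $t_a>t_c$, with six-tuple ordered $p_b<p_c<m<m'<t_c<t_a$; the two short edges are handled analogously, with the choice of equivalent form of $\sixpt$ dictated by whether $p_b<p_c$. For the zigzag type this dichotomy resolves uniformly — the short edges always fall into the same-side case and the long edge always into the $\sixpt$ case — which is why the zigzag is singled out in Theorem~\ref{thm_main}.

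The main obstacle is the bookkeeping in the last step: one must check that for \emph{every} colorful type, every $r$, and every edge of every triangular part, one of the two cases genuinely applies and the orientation signs come out consistently — a finite case analysis on the relative order of same-block vertices and on small $r$. I would also verify that the six ``structural'' and ``edge'' points used are always distinct (guaranteed by colorfulness, since a triple-color part shares no vertex with the two size-$2$ parts), so that the relevant six-tuples really are six-term subsequences to which Lemma~\ref{lem_sixpt} applies.
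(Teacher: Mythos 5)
Your proposal is correct and takes essentially the same approach as the paper: reduce to the crossing point $x$ of the two size-$2$ parts and check, for each edge of each triangular part, that $x$ lies on the required side via either the same-side rule or the six-point lemma. The ``bookkeeping'' you defer is exactly what the paper does by restricting to the four colorful $r=3$ types and verifying the twelve resulting line-separation predicates (its Figure~3), so nothing essential is missing.
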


\begin{proof}
We first show that the general case $r\ge 3$ reduces to the case $r=3$.

Let $r\ge 3$. Then, every planar colorful Tverberg type $\mathcal I$ into $r$ parts is encoded by a string of the form $\sigma\eqdef \pi_1 1 \pi_2 2 \pi_3$, where $\pi_1$ is a permutation of $[r]\setminus\{1\}$, $\pi_2$ is a permutation of $[r]\setminus\{1,2\}$, and $\pi_3$ is a permutation of $[r]\setminus\{2\}$. Hence, parts $1$ and $2$ have size $2$, while every part $i\ge 3$ has size $3$. The predicate $\tv_{\mathcal I}$ asserts that the segments corresponding to parts $1$ and $2$ intersect at a point $x$, which is contained in all the triangles corresponding to parts $i\ge 3$. So, it suffices to show that $x$ is contained in the convex hull of part $3$; by symmetry, it would then follow
that $x$ is also contained in the convex hull of part $i$ for each $i\ge 3$.

Let $3\le i\le r$. For simplicity assume $i=3$. Then, the restriction of $\sigma$ to $\{1,2,3\}$ is of the form $\sigma|_{\{1,2,3\}} = \{2,3\}132\{1,3\}$ (where $\{a,b\}$ means either $ab$ or $ba$). These are exactly the encodings of the four colorful types with $r=3$.

\begin{figure}
\centerline{\includegraphics{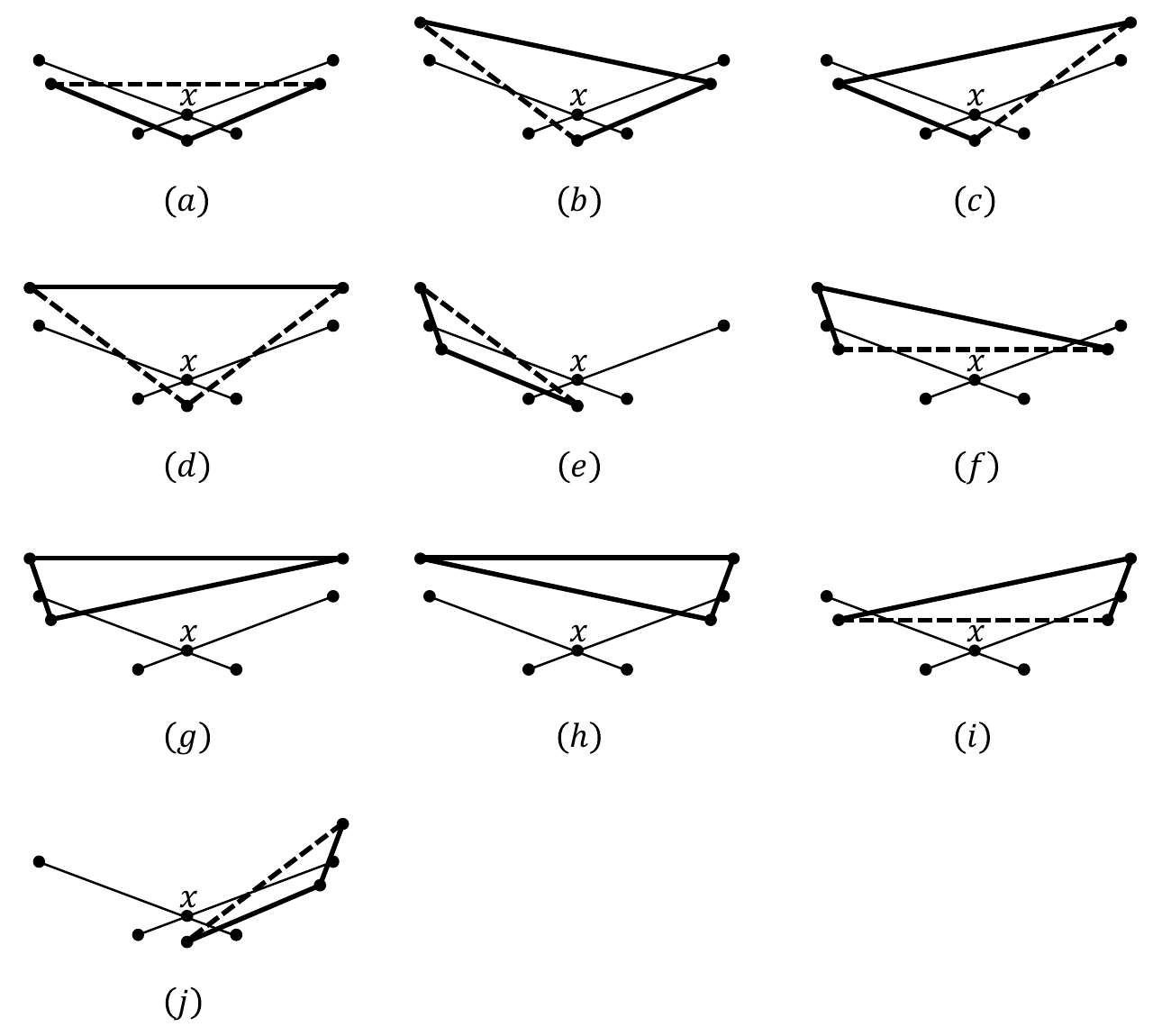}}
\caption{\label{fig_TvR2}Case analysis for $d=2$. In each case, the position with respect to $x$ of the solid edges is given by the same-side rule, whereas the position with respect to $x$ of the dotted edges is given by the six-point lemma.}
\end{figure}

By Lemma~\ref{lem_ptinconv}, each corresponding predicate is a conjunction of three line-separation predicates (assuming the given points are orientation-homogeneous). For example, for the case $\sigma|_{\{1,2,3\}} = 3213231$, the corresponding line-separation predicates are
\begin{equation*}
\Pi[16({}:x4)], \qquad \Pi[14({}:x6)], \qquad \Pi[46({}:x1)],
\end{equation*}
where $x\eqdef 25\cap 37$. The first and third predicates hold in any orientation-homogeneous sequence by the same-side rule. And the second predicate is equivalent to $\sixpt$, which we showed in Lemma~\ref{lem_sixpt} to be unavoidable. See Figure~\ref{fig_TvR2}($b$).

Similarly, in the other three possible values for $\sigma_{\{1,2,3\}}$, there are one or two predicates that are unavoidable by the same-side rule, while the remaining ones are equivalent to $\sixpt$. See Figure~\ref{fig_TvR2}($a$,$c$,$d$).
\end{proof}

We now proceed to show that, for all planar non-colorful Tverberg types, their negation is unavoidable. For this, we first prove a lemma that will be very useful in higher dimensions as well:

\begin{lemma}[No-consecutive-points lemma]\label{lem_no-consecutive}
If $\mathcal I \eqdef \{I_1, \ldots, I_r\}$ is a Tverberg type in which some part contains two consecutive integers, then $\neg \tv_{\mathcal I}$ is unavoidable.
\end{lemma}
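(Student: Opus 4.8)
The plan is to show that if some part of $\mathcal{I}$ contains two consecutive integers $j, j+1$, then in a sufficiently long orientation-homogeneous sequence $P$ the Tverberg type $\mathcal{I}$ simply cannot occur, because whatever point plays the role of the Tverberg point is forced to the wrong side of a hyperplane. First I would invoke the proof strategy of Section~1.3: assume for contradiction that arbitrarily long sequences $P$ contain $\mathcal{I}$ as a Tverberg partition, and by Ramsey pass to an orientation-homogeneous subsequence, so that the $T(d,r)$ points realizing $\mathcal{I}$ are themselves orientation-homogeneous. Relabel them $1 < 2 < \cdots < T(d,r)$, with $j$ and $j+1$ in the same part, say $I_1$; let $x$ denote the common Tverberg point.

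Next I would isolate a single separating hyperplane that derives a contradiction. Pick any $d$ of the points $p_{i_1} < \cdots < p_{i_d}$ drawn from the remaining $r-1$ parts $I_2, \ldots, I_r$ (there are $(r-1)(d+1)+1 - \abs{I_1} \ge (r-2)(d+1) + d \ge d$ such points available for $r \ge 2$, $d \ge 1$), chosen so that exactly one of $j, j+1$ lies strictly between consecutive chosen indices in the ordering — concretely, choose the $d$ points so that $j$ lies between some chosen point and $j+1$, and $j+1$ lies on the ``outer'' side, i.e. arrange that an \emph{odd} number (namely one, the point $j$ or $j+1$ is not among the chosen ones but) of the relevant gaps is crossed. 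The clean way: by the parity lemma just proved in the excerpt (``$p_1\cdots p_d(q:q')$ iff the number of $p_i$'s strictly between $q$ and $q'$ is odd''), I want to select $d$ points from $I_2 \cup \cdots \cup I_r$ whose indices are all $\le j$ or all $\ge j+1$, so that \emph{zero} of them fall strictly between $j$ and $j+1$ (there is no integer strictly between $j$ and $j+1$, so in fact any choice of $d$ such points gives count $0$, which is even). Then the parity lemma forces $p_{i_1}\cdots p_{i_d}(j\,{(j{+}1)}:{})$, i.e. both $j$ and $j+1$ lie strictly on the \emph{same} side of that hyperplane.

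Now apply the same-side rule (Observation~\ref{obs_sameside}): since $I_1$ contains $\{j, j+1\}$, and I still need the hyperplane to avoid containing the points of $I_1$ and to keep the rest of $I_1$ on that same side — here is where one must be a little careful. The robust version is: choose the $d$ hyperplane-spanning points to be $d$ points of a \emph{single} other part $I_m$ if $\abs{I_m} \ge d$, or more generally $d$ points from $I_2 \cup \cdots \cup I_r$ all lying outside the index-interval spanned by $I_1$ — but the simplest guaranteed configuration is to take the $d$ smallest indices overall that are \emph{not} in $I_1$ together with... Rather than optimize, the key point I would nail down is: the convex hull of $I_1$ lies entirely on one (closed) side of the hyperplane $H$ spanned by the $d$ chosen points, and in fact strictly on one side except possibly at the chosen points themselves; hence by the same-side rule the Tverberg point $x \in \conv(I_1)$ lies weakly on that side. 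But $x$ must also lie in $\conv(I_m)$ for the part $I_m$ containing the chosen points — and I will arrange the choice (using the parity lemma again on pairs from $I_m$ versus $I_1$) so that $\conv(I_m \setminus \{\text{chosen}\})$, equivalently the part of $I_m$ that is not on the hyperplane, is forced to the \emph{opposite} open side, giving $x$ on both sides, a contradiction.

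The main obstacle, and the step I expect to require the most care, is the combinatorial bookkeeping that simultaneously (i) produces $d$ hyperplane-spanning points from the other parts whose index pattern relative to $\{j,j+1\}$ has the right parity, and (ii) leaves enough ``witness'' points in some part to pin $x$ to the opposite side — all uniformly in $r$ and for every $\mathcal{I}$ with a consecutive pair. I would handle this by first reducing to the cleanest sub-case: the consecutive pair is $\{j, j+1\}$ and I use the parity lemma with $q = p_j$, $q' = p_{j+1}$, where no index lies strictly between them, so \emph{every} admissible choice of $d$ spanning points gives even parity and hence $p_j, p_{j+1}$ on the same side; then the content reduces to the planar-style argument already carried out for $\sixpt$ and the $d=2$ colorful types, namely that a point forced onto one side of a spanning hyperplane by the same-side rule cannot also lie in a convex hull straddling it. If a fully uniform argument proves stubborn, the fallback is to note that it suffices to derive the contradiction from an orientation-homogeneous sequence of length $T(d,r) + 1$ (adding one extra point and sliding the window, exactly as in Lemma~\ref{lem_sixpt}), which localizes all the parity computations.
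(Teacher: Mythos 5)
There is a genuine gap: the separating hyperplane your argument hinges on need not exist, and orientation-homogeneity plus the parity lemma is not strong enough to force the Tverberg point to the wrong side. Concretely, take $d=2$, $r=3$ and the type $\{1,2,5\},\{3,6\},\{4,7\}$. Every line spanned by two points of $\{3,4,6,7\}$ either straddles $\conv\{1,2,5\}$ (e.g.\ lines $36$, $47$, $37$, $46$ all separate $2$ from $5$ by the parity lemma) or leaves \emph{all} remaining points on one side (lines $34$, $67$), so no choice of spanning points yields the configuration ``all of $I_1$ strictly on one side, the witnesses of another part strictly on the other.'' Indeed, whether $36\cap 47$ lies on the $2$-side or the $5$-side of the line $15$ is simply not determined by convex position (compare the two configurations in Figure~\ref{fig_sixpt}), so no amount of same-side/parity bookkeeping can finish from orientation-homogeneity alone. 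Your observation that $p_j$ and $p_{j+1}$ always lie on the same side of any such hyperplane is correct but carries no contradiction by itself.

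The paper's proof uses two ideas absent from your proposal, split by $k\eqdef|I_1|$. For $2\le k\le d$ it takes an orientation-homogeneous sequence of length $T(d,r)+1$ and slides the window at position $a$: the three resulting copies of part $I_1$ are three distinct facets of a $k$-simplex, while the other parts' assignments are unchanged; the intersection $f$ of the affine hulls of $I_2,\ldots,I_r$ meets the simplex's affine hull in a \emph{line}, and a line can cross the relative interiors of at most two of three facets --- so the three windows cannot all be Tverberg. (Your closing ``fallback'' gestures at the extra point and sliding window, but without this facet-counting fact it produces no contradiction.) For $k=d+1$ no hyperplane spanned by other parts can work at all; instead one observes that, of the $d+1$ orientation conditions from Lemma~\ref{lem_ptinconv}, the two obtained by dropping $a$ versus $a+1$ are order-isomorphic predicates (swap $p_a\leftrightarrow p_{a+1}$) yet must take \emph{opposite} canonical values for $x\in\conv(I_1)$; passing to a subsequence homogeneous for this auxiliary predicate kills both at once. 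You would need to supply both of these arguments (or genuine substitutes) to close the proof.
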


\begin{proof}
Suppose without loss of generality that $\{a, a+1\} \subseteq I_1$. Let $k \eqdef |I_1|$.

Suppose first that $2\le k \le d$. Let $n \eqdef T(d,r)+1$, and suppose for a contradiction that the orientation-homogeneous point sequence $P \eqdef (p_1,\ldots, p_n)$ avoids $\neg \tv_{\mathcal I}$. Let
\begin{equation*}
P' = \{p_b : b\in I_1 \wedge b\le a\} \cup \{p_{b+1} : b\in I_1 \wedge b\ge a\}
\end{equation*}
(so, for example, if $I_1 = \{1,3,5,6,8,11\}$ and $a=5$, then $P' = \{p_1,p_3,p_5,p_6,p_7,p_9,p_{12}\}$). Let $P'_i \eqdef P' \setminus \{p_{a+i}\}$ for $i=0,1,2$. When we evaluate $\tv_{\mathcal I}$ at the points $P\setminus \{p_{a+i}\}$ for $i=0,1,2$, part $I_1$ is assigned the points of $P'_i$, whereas the remaining parts are assigned points independently of $i$.

Let $f$ be the intersection of the affine hulls of the parts $I_j$, $j\ge 2$; hence, $f$ is a $(d-k+1)$-dimensional flat. The affine hull of $P'$ intersects $f$ at a line $\ell$. However, this line $\ell$ can intersect the interior of at most two of the convex hulls of $P'_i$, $i=0,1,2$, since they are three distinct faces of the simplex spanned by $P'$. Contradiction.

If $k=d+1$ we use a different argument: Recall that by Lemma~\ref{lem_ptinconv}, $\tv_{\mathcal I}$ reduces to a Boolean combination of $d+1$ predicates of the form $\Pi(b) \eqdef {}$``the intersection point of parts $I_2, \ldots, I_r$ lies on the positive side of the hyperplane $I_1\setminus\{b\}$,'' for each $b\in I_1$. Hence, $\Pi(a)$ and $\Pi(a+1)$ are equivalent predicates. We can assume that the given point sequence $P$ is homogeneous with respect to it. But, in order for $\tv_{\mathcal I}$ to hold, $\Pi(a)$ and $\Pi(a+1)$ must have opposite values. 
\end{proof}

\begin{lemma}
For every non-colorful Tverberg type $\mathcal I$ in the plane, $\neg \tv_{\mathcal I}$ is unavoidable.
\end{lemma}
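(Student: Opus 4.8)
The plan is to reduce every non-colorful planar Tverberg type to the No-consecutive-points lemma (Lemma~\ref{lem_no-consecutive}), i.e., to show that a non-colorful type in the plane always has a part containing two consecutive integers. Recall that in the string encoding $\sigma\in[r]^{T(2,r)}=[r]^{3(r-1)+1}$, colorfulness means precisely that each of the three blocks $\{\sigma(1),\ldots,\sigma(r)\}$, $\{\sigma(r),\ldots,\sigma(2r-1)\}$, $\{\sigma(2r-1),\ldots,\sigma(3r-1)\}$ equals $[r]$. So if $\sigma$ is non-colorful, at least one of these three length-$r$ windows $W$ fails to be a permutation of $[r]$; since $|W|=r$, failing to be a permutation forces a repeated symbol inside $W$, say $\sigma(i)=\sigma(j)$ with $i<j$ both in the same window. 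This does not immediately give two \emph{consecutive} equal indices, so the first real step is to upgrade ``a repeat within a window of length $r$'' to ``two consecutive integers in the same part.''

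First I would handle that upgrade. If within the window $W$ we have $\sigma(i)=\sigma(j)=c$ with $j-i$ minimal among all such repeats in any window, then I claim $j=i+1$: otherwise the symbols $\sigma(i+1),\ldots,\sigma(j-1)$ are $j-i-1\le r-2$ symbols, none equal to $c$ (by minimality of the gap, $\sigma(i),\sigma(j)$ being the closest repeat means no two among $i,\ldots,j$ coincide except possibly $i$ and $j$ — actually one needs to argue this more carefully). The cleaner route: among \emph{all} pairs of equal symbols lying in a common window, pick one with $j-i$ minimal; then $\sigma(i),\sigma(i+1),\ldots,\sigma(j)$ are pairwise distinct except for the pair $\{i,j\}$, so $\sigma(i+1),\ldots,\sigma(j-1)$ together with $c$ are $j-i$ distinct values, forcing $j-i\le r-1$. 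If $j-i<r-1$ we still need consecutiveness; here I would instead argue directly that any window of length $r$ that is not a permutation of $[r]$, inside a word over an $r$-letter alphabet, must contain two adjacent equal letters within the whole word — which is false in general — so the actual argument must use the global structure. The correct statement to prove is: a planar Tverberg type is colorful \emph{if and only if} no part contains two consecutive integers. The forward direction is the easy observation that a colorful type's part-$i$ is forced to be one index per block, hence spaced at least $r-1$ apart, never consecutive. For the converse, if no two consecutive integers share a part, then reading $\sigma(1),\sigma(2),\ldots$ each new symbol differs from its predecessor, but more is true when combined with the part-size constraints of a Tverberg type ($\sum|I_j|=3(r-1)+1$, each $|I_j|\le 3$): I would show by a counting/pigeonhole argument on block boundaries that the only way to avoid consecutive repeats while using $3(r-1)+1$ positions and $r$ symbols with multiplicities summing correctly is the colorful pattern.

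With the characterization in hand, the lemma is immediate: a non-colorful planar type has a part with two consecutive integers, so Lemma~\ref{lem_no-consecutive} gives that $\neg\tv_{\mathcal I}$ is unavoidable. I expect the main obstacle to be exactly the converse direction of the characterization — ruling out ``exotic'' non-colorful types that happen to avoid consecutive repeats. The resolution should be that in the plane the parts have sizes at most $d+1=3$, and a type with all parts of size $\le 3$, total size $3(r-1)+1$, and no adjacent repeats is forced (by examining the three windows $[1,r],[r,2r-1],[2r-1,3r-1]$ and a short parity/counting argument on how symbols can be distributed across the two shared boundary positions) to place exactly one representative of each symbol in each window, which is the definition of colorful. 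Once that combinatorial fact is nailed down, no geometry beyond Lemma~\ref{lem_no-consecutive} is needed.
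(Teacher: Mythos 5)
The combinatorial characterization your whole plan rests on --- that a planar Tverberg type is colorful if and only if no part contains two consecutive integers --- is false, and the proof collapses exactly at the point you flag as ``the main obstacle.'' Counterexample for $r=3$: the type with parts $\{1,3,5\},\{2,6\},\{4,7\}$, encoded $3132312$. No part contains two consecutive integers, yet the window $\{1,2,3\}$ meets only two of the three parts, so the type is not colorful. Worse for your plan, the two size-$2$ parts alternate ($2<4<6<7$), so by the Radon-partition lemma their segments genuinely do intersect in an orientation-homogeneous sequence; no purely combinatorial reduction disposes of this type. A second family you also miss: types with a singleton part, e.g.\ $\{1\},\{2,4,6\},\{3,5,7\}$, are non-colorful for $d=2$ but contain no consecutive pair, so Lemma~\ref{lem_no-consecutive} says nothing about them.

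The paper's proof accordingly needs three ingredients, of which you supply only one. It first kills all types having a singleton part or two non-alternating size-$2$ parts via the Radon-partition lemma (Lemma~\ref{lem_radon}). It then uses the no-consecutive-points lemma only to \emph{reduce} the remaining types to finitely many configurations: once the two size-$2$ parts alternate, the elements of each further part must be spread over distinct gaps between them. Finally --- and this is the genuinely geometric step for which your proposal has no substitute --- the handful of surviving non-colorful configurations (six per part, up to mirror symmetry) are ruled out by explicit arguments using the same-side rule (Observation~\ref{obs_sameside}) and the six-point lemma (Lemma~\ref{lem_sixpt}); see Figure~\ref{fig_TvR2}($e$--$j$). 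So the gap is not a fixable technicality in a pigeonhole argument: the lemma simply cannot be reduced to Lemma~\ref{lem_no-consecutive} alone.
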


\begin{proof}
If one of the parts in $\mathcal I$ has size $1$, or two of the parts have size $2$ but they do not alternate, then, by the Radon-partition lemma (Lemma~\ref{lem_radon}), $\tv_{\mathcal I}$ does not hold in an orientation-homogeneous sequence.

Hence, suppose parts $I_1$ and $I_2$ alternate, so the encoding of $\mathcal I$ has the form $\ldots 1\ldots 2\ldots 1\ldots 2\ldots$, partitioning the interval $[T(d,r)]$ into five gaps. Consider a part $I_j$, $j\ge 3$. Assume $j=3$ for simplicity. By the no-consecutive-points lemma (Lemma~\ref{lem_no-consecutive}), we can rule out all cases in which two $3$'s belong to the same gap. Hence, there are only a few cases left, which can be easily ruled out. See Figure~\ref{fig_TvR2}($e$--$j$).
\end{proof}

\subsection{Central projection}

We now present a technique for lifting results to higher dimensions.
Let $H$ be a hyperplane in $\R^d$. To define an orientation predicate within $H$, fix a rigid motion $\rho$ in $R^d$ that takes $H$ to the hyperplane $H_0$ given by $x_d=0$. Then define $\orient_H$ by $\orient_H(p_1, \ldots, p_d) \eqdef \orient(\rho(p_1), \ldots, \rho(p_d))$ (where in the last predicate we ignore the last coordinates of the points, which equal $0$).

Fix a point $q\in\R^d\setminus H$. Each choice of $\rho$ produces one of two possible predicates $\orient_H$, which are negations of one another, depending on whether $\rho$ sends $q$ above or below $H_0$. Let us fix a $\rho$ that sends $q$ above $H_0$. Call the resulting predicate $\orient_{H,q}$. It is not hard to see that
\begin{equation}\label{eq_orientH}
\orient_{H,q}(p_1, \ldots, p_d) = \orient(p_1, \ldots, p_d, q).
\end{equation}

Given a point set $X\subset\R^d$ and a point $p \notin \conv X$, fix a hyperplane $H$ that separates $p$ from $X$. Then we define the \emph{central projection} of $X$ from $p$ into $H$ by taking each point $q\in X$ to the intersection point $q'\eqdef pq \cap H$; and we define the orientation within $H$ by $\orient_{H,p}$.

\begin{observation}
Let $p_1, ..., p_n \in \R^d$ be orientation-homogeneous, and let $H$ be a hyperplane that separates $p_1$ (resp.~$p_n$) from the rest of the points. Then, centrally-projecting the rest of the points from $p_1$ (resp.~$p_n$) into $H$ produces an orientation-homogeneous sequence in $H$.
\end{observation}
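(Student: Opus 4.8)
The plan is to prove the two cases ($p_1$ and $p_n$) simultaneously, since they are symmetric under reversing the order of the sequence; so I will treat the case where $H$ separates $p_1$ from $p_2, \ldots, p_n$, and project from $p_1$. Write $q_i' \eqdef p_1 p_i \cap H$ for $i = 2, \ldots, n$, so the projected sequence is $(q_2', \ldots, q_n')$, listed in the order inherited from $P$. We need to show this sequence is orientation-homogeneous in $H$ with respect to $\orient_{H,p_1}$; that is, for every choice of $d$ indices $2 \le i_1 < i_2 < \cdots < i_d \le n$, the value $\orient_{H,p_1}(q_{i_1}', \ldots, q_{i_d}')$ is the same.

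The key step is the identity
\begin{equation*}
\orient_{H,p_1}(q_{i_1}', \ldots, q_{i_d}') = \orient(p_1, p_{i_1}, \ldots, p_{i_d}).
\end{equation*}
Once we have this, we are done immediately: the right-hand side is the orientation of $d+1$ points of the orientation-homogeneous sequence $P$ listed in increasing index order (since $1 < i_1 < \cdots < i_d$), hence it is a fixed value independent of the chosen indices, and so is the left-hand side. To establish the identity, I would first use~\eqref{eq_orientH} to rewrite $\orient_{H,p_1}(q_{i_1}', \ldots, q_{i_d}') = \orient(q_{i_1}', \ldots, q_{i_d}', p_1)$, evaluated in $\R^d$. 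Then I would argue that replacing each $q_{i_j}'$ by $p_{i_j}$ does not change the sign of this orientation determinant. The geometric reason is that $q_{i_j}'$ lies on the open ray from $p_1$ through $p_{i_j}$ (this is where we use that $H$ separates $p_1$ from the other points, so the segment $p_1 p_{i_j}$ crosses $H$ and $q_{i_j}'$ is a genuine point of that segment strictly between $p_1$ and $p_{i_j}$, in particular distinct from $p_1$): thus $q_{i_j}' = (1 - t_j) p_1 + t_j p_{i_j}$ for some $t_j \in (0,1)$. Substituting into the orientation determinant with last point $p_1$, the $p_1$-component contributes a column proportional to the last column (the one for $p_1$), so by column operations and multilinearity the determinant equals $\bigl(\prod_j t_j\bigr)$ times $\orient(p_{i_1}, \ldots, p_{i_d}, p_1)$; since each $t_j > 0$, the sign is unchanged. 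Finally, $\orient(p_{i_1}, \ldots, p_{i_d}, p_1) $ has the same sign as $\orient(p_1, p_{i_1}, \ldots, p_{i_d})$ up to a cyclic shift of columns by the same fixed permutation for all index choices, so homogeneity is preserved regardless.

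A small subtlety worth spelling out is that one must check $\orient_{H,p_1}$ is well-defined on the projected points, i.e. that the projected points are generic in $H$; but this follows from the genericity of $P$ together with the identity above, since a degenerate orientation among the $q'$ would force a degenerate orientation $\orient(p_1, p_{i_1}, \ldots, p_{i_d}) = 0$ among points of $P$. The main obstacle is really just bookkeeping: carefully handling the cyclic permutation of columns (so that the fixed sign change is the same for every $d$-subset) and confirming that all the $t_j$ are strictly positive, which is exactly guaranteed by the separating-hyperplane hypothesis. There is no deep difficulty here; the statement is, as the paper's own phrasing elsewhere suggests, "obvious but tedious," and the proof is a direct determinant computation dressed in the $\orient_{H,q}$ notation established just above.
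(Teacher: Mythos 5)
Your argument is correct and follows the same route as the paper, whose entire proof is the citation ``By (\ref{eq_orientH})'': you use that identity to reduce the projected orientation to $\orient(q_{i_1}',\ldots,q_{i_d}',p_1)$ and then justify, via multilinearity and the positivity of the interpolation parameters $t_j$ (guaranteed by the separating hyperplane), that this has the same sign as $\orient(p_{i_1},\ldots,p_{i_d},p_1)$, which is constant by orientation-homogeneity of $P$. You have simply spelled out the determinant bookkeeping that the paper leaves implicit, including the correct observation that the fixed cyclic-shift sign $(-1)^d$ is harmless for homogeneity.
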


\begin{proof}
By (\ref{eq_orientH}).
\end{proof}

Note that projection from an intermediate point $p_i$ does not produce an orientation-homogeneous sequence, since then the orientation of a $d$-tuple of projected points will depend on the parity of the number of points that appear after $p_i$.

\begin{observation}
Let $p_1,...,p_n \in \R^d$, and let $p$ be another point not in $\conv{\{p_1,...,p_n\}}$. Let their central projection from $p$ into a hyperplane $H$ be $p'_1,...,p'_n$ respectively. Then the central projection of $\conv{\{p_1,...,p_n\}}$ equals $\conv{\{p'_1,...,p'_n\}}$.
\end{observation}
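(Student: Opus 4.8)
The plan is to unwind the definitions and reduce the claim to an elementary statement about how central projection interacts with convex combinations. Central projection from $p$ sends a point $q$ to the unique point $q' = pq \cap H$. Concretely, if $H$ has equation $\langle a, x\rangle = c$ with $\langle a, p\rangle \ne c$, then $q' = p + t(q)(q-p)$ where $t(q) = (c - \langle a,p\rangle)/(\langle a, q-p\rangle)$ — this is well defined and positive precisely because no $p_i$ lies on the hyperplane through $p$ parallel to $H$ (which is guaranteed here since $p \notin \conv\{p_1,\dots,p_n\}$ and $H$ separates, so all $p_i$ are strictly on the far side). So the map $q \mapsto q'$ restricted to $\conv\{p_1,\dots,p_n\}$ is a well-defined rational map whose image I want to identify with $\conv\{p_1',\dots,p_n'\}$.

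First I would prove the inclusion $\supseteq$. Each $p_i'$ lies on the ray from $p$ through $p_i$, hence $p_i' = p + t_i(p_i - p)$ with $t_i > 0$; equivalently $p_i'$ is an affine combination $(1-s_i)p + s_i p_i$ with $s_i = t_i > 0$, and since $p_i \in \conv\{p_1,\dots,p_n\}$ while $p$ lies outside, the point $p_i'$ — being on the segment $pp_i$ extended beyond... no: $p_i'$ lies between $p$ and where the ray exits, and since $H$ separates $p$ from all the $p_i$, the intersection $pq\cap H$ occurs strictly between $p$ and $q$, so $s_i \in (0,1)$ and $p_i'$ is a genuine convex combination of $p$ and $p_i$. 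That does not immediately put $p_i'$ in $\conv\{p_1,\dots,p_n\}$. The cleaner route is: the central projection is the restriction to $H$ of the projective transformation that fixes $H$ pointwise and contracts toward $p$; a projective transformation maps the polytope $\conv\{p_1,\dots,p_n\}$ to the convex hull of the images of its vertices, provided the polytope avoids the hyperplane at infinity of the transformation, i.e. the hyperplane through $p$ parallel to $H$. Since $H$ separates $p$ from $\conv\{p_1,\dots,p_n\}$, the whole polytope lies strictly on one side of that parallel hyperplane, so the transformation is a homeomorphism on it and preserves convexity and extreme points. Hence the image is exactly $\conv\{p_1',\dots,p_n'\}$.

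To keep the argument self-contained without invoking projective geometry, I would instead argue directly. For $\subseteq$: take $x = \sum \lambda_i p_i$ with $\lambda_i \ge 0$, $\sum\lambda_i = 1$, and let $x' = px \cap H$. Writing $x' = p + t(x)(x - p)$ and expanding $x - p = \sum \lambda_i(p_i - p) = \sum \lambda_i \frac{1}{t_i}(p_i' - p)$, one gets $x' - p = t(x)\sum_i \frac{\lambda_i}{t_i}(p_i' - p)$. Since $x', p_i' \in H$, both sides lie in the linear subspace $H - p$ directed toward... rather, take the affine functional $g(y) = \langle a, y\rangle - c$ which vanishes on $H$; applying $g$ and using $g(p_i' - p) = -g(p) $ (independent of $i$) and $g(x'-p) = -g(p)$ gives $1 = t(x)\sum_i \lambda_i/t_i$, so the coefficients $\mu_i \eqdef t(x)\lambda_i/t_i$ are nonnegative and sum to $1$, whence $x' = \sum \mu_i p_i' \in \conv\{p_1',\dots,p_n'\}$. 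For $\supseteq$: given $y = \sum \mu_i p_i'$ a convex combination, reverse the computation — set $\lambda_i \propto \mu_i t_i$, normalize, let $x = \sum\lambda_i p_i \in \conv\{p_1,\dots,p_n\}$, and check that the ray $px$ hits $H$ exactly at $y$ by the same functional-evaluation bookkeeping. The main obstacle, and the only place requiring care, is exactly this normalization bookkeeping: one must verify that all the denominators $t_i$ and the scaling factor $t(x)$ are strictly positive and finite, which is where the separation hypothesis ``$H$ separates $p$ from $\conv\{p_1,\dots,p_n\}$'' (equivalently, $p \notin \conv\{p_1,\dots,p_n\}$ together with a suitable choice of $H$) gets used; everything else is linear algebra.
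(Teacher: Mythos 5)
Your proposal is correct. The paper offers no proof of this observation at all---it is asserted as evident---so there is nothing to compare against; your direct computation is a complete verification. The key step, rescaling the barycentric coefficients by the factors $t_i$ (so that $\lambda_i \mapsto \mu_i \propto \lambda_i/t_i$ in one direction and $\mu_i \mapsto \lambda_i \propto \mu_i t_i$ in the other) and checking normalization by evaluating the affine functional defining $H$, is exactly the standard argument that a perspective map preserves convex hulls on the side of its vanishing hyperplane, and you correctly identify that the separation hypothesis built into the paper's definition of central projection is what makes all the $t_i$ and $t(x)$ strictly positive and finite. The brief detour in your second paragraph (worrying whether $p_i'$ lies in $\conv\{p_1,\dots,p_n\}$, which is irrelevant, and then invoking projective transformations) is harmless, since the self-contained argument that follows supersedes it.
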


\begin{corollary}\label{cor_proj}
Let $P = (p_0, p_1,\ldots,p_{d+1}) \in (\R^d)^{d+2}$, with $p_0\notin\conv\{p_1,\allowbreak \ldots,\allowbreak p_{d+1}\}$. Let $P' = (p'_1,...,p'_{d+1})$ be the central projection of $\{p_1, \ldots, p_{d+1}\}$ from $p_0$ into a hyperplane $H$. Then:
\begin{itemize}
\item The Radon point of $P'$ is the central projection of the Radon point of $P$.

\item Let $[d+1]=I_1\cup I_2$ be the Radon partition of $P'$. Then the Radon partition of $P$ is either $I_1\cup\{0\},I_2$ or $I_1,I_2\cup\{0\}$.
\end{itemize}
\end{corollary}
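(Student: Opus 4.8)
The plan is to first pin down the Radon partition of $P$ in $\R^d$, then push it down to $H$ through the central projection, using the Observation just stated that central projection carries $\conv$ to $\conv$. Let $\{J_1,J_2\}$ be the Radon partition of $P$ and let $z$ be its Radon point, so $z\in\conv\{p_i:i\in J_1\}\cap\conv\{p_i:i\in J_2\}$. Exactly one of the two parts, say $J_1$, contains the index $0$; set $K_1\eqdef J_1\setminus\{0\}$ and $K_2\eqdef J_2$. The first thing I would record is that $z\neq p_0$: indeed $z\in\conv\{p_i:i\in K_2\}\subseteq\conv\{p_1,\ldots,p_{d+1}\}$, which by hypothesis does not contain $p_0$. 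Consequently $K_1\neq\emptyset$ (otherwise $z=p_0$), so $\{K_1,K_2\}$ is a genuine partition of $[d+1]$, and the central projection $z'\eqdef p_0z\cap H$ is well defined, since $H$ separates $p_0$ from $\conv\{p_1,\ldots,p_{d+1}\}$ and hence the segment $p_0z$ crosses $H$.

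Next I would show that $\{K_1,K_2\}$ is a Radon partition of $P'$ with Radon point $z'$. For $K_2$ this is immediate from the preceding Observation: the central projection of $\conv\{p_i:i\in K_2\}$ is $\conv\{p'_i:i\in K_2\}$, and $z$ belongs to the former, so $z'$ belongs to the latter. For $K_1$ I would use a ``segment trick'' to delete $p_0$: write $z=\lambda_0p_0+\sum_{i\in K_1}\lambda_ip_i$ as a convex combination; since $z\neq p_0$ we have $\lambda_0<1$, so $w\eqdef(1-\lambda_0)^{-1}\sum_{i\in K_1}\lambda_ip_i$ lies in $\conv\{p_i:i\in K_1\}$ and $z$ lies on the segment $p_0w$. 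As $p_0,z,w$ are collinear with $z,w\neq p_0$ and the line $p_0w$ meets $H$ (again because $H$ separates $p_0$ from $w\in\conv\{p_1,\ldots,p_{d+1}\}$), the points $z$ and $w$ have the same central projection, so $z'=w'\in\conv\{p'_i:i\in K_1\}$ by the Observation. Hence $z'$ lies in both $\conv\{p'_i:i\in K_1\}$ and $\conv\{p'_i:i\in K_2\}$, as desired.

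Finally I would invoke genericity to close the argument. The projected sequence $P'$ is in general position in $H$: if $d$ of the points $p'_i$, $i\ge1$, lay in a common $(d-2)$-flat $F\subseteq H$, then those $d$ original points together with $p_0$ would lie in the hyperplane spanned by $p_0$ and $F$, contradicting the orientation-genericity of $P$. Since $d+1$ points in general position in the $(d-1)$-dimensional space $H$ have a unique Radon partition and a unique Radon point, we conclude $\{I_1,I_2\}=\{K_1,K_2\}=\{J_1\setminus\{0\},J_2\}$, which is the second bullet, and the Radon point of $P'$ equals $z'$, the central projection of the Radon point $z$ of $P$, which is the first bullet. There is no real obstacle here; the only step needing a moment's care is the segment trick of the second paragraph, which is what licenses removing $p_0$ from the combination that realizes $z$.
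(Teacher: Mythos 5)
Your proof is correct and is essentially the argument the paper leaves implicit: the corollary is stated without proof as an immediate consequence of the preceding observation that central projection commutes with convex hulls, and your write-up (removing $p_0$ from its part via the segment trick, projecting both hulls, and invoking general position of $P'$ for uniqueness) is exactly the intended derivation, just spelled out in full.
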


Hence, we can lift up lower-dimensional results by adding a new point at the beginning 
or at the end. For example:

\begin{lemma}\label{lem:3d}
In $\R^3$, the predicates 
\begin{equation}\label{eq:3d}
\Pi[025(1:{}^{\,14}_{036})] \text{\quad and \quad} \Pi[257(1:{}^{147}_{\,36})]
\end{equation}
are unavoidable.
\end{lemma}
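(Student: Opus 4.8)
The plan is to deduce Lemma~\ref{lem:3d} from the planar six-point lemma (Lemma~\ref{lem_sixpt}) by applying the central-projection machinery of Corollary~\ref{cor_proj}. Both statements in \eqref{eq:3d} involve seven points in $\R^3$; in each case one of the seven points ($0$ in the first, $7$ in the second) is the first or last point of the sequence, so the central projection from that point into a separating hyperplane $H$ sends the remaining six points to an orientation-homogeneous planar sequence, and by Corollary~\ref{cor_proj} it sends Radon points to Radon points and reduces $3$-point Radon partitions to $2$-point ones in a controlled way. So the first step is to set up: given an orientation-homogeneous sequence $(p_0,\dots,p_6)$ in $\R^3$ (renumbered appropriately for each case), project from $p_0$ into a hyperplane $H$ separating $p_0$ from the rest, obtaining an orientation-homogeneous planar sequence $(p_1',\dots,p_6')$.

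Next I would match up the geometric content. Under central projection from a point $o$, the image of a flat $\conv S$ is $\conv S'$ (second Observation before Corollary~\ref{cor_proj}), and ``which side of a hyperplane'' questions translate into ``which side of a hyperplane'' questions in $H$ via $\orient_{H,o}$. The key point is that $o=0$ lies on a fixed side of the relevant hyperplanes, so the hyperplane-side statement in $\R^3$ is preserved under projection \emph{as long as the point $0$ itself is not one of the points defining the hyperplane or being separated}. In the first predicate $\Pi[025(1:{}^{14}_{036})]$, the hyperplane is $025$, which contains $0$; its image under projection from $0$ is the \emph{line} $2'5'$ in $H$. Likewise $\,{}^{14}_{036}$: the Radon point of $\{0,3,6\}$ in $\R^3$ — wait, three points in $\R^3$ have no Radon point; here one must read $^{14}_{036}$ as the Radon point of the four-point set $\{0,1,3,6\}$ — actually the intended parse is the Radon partition of $\{1,4\}\cup\{0,3,6\}$ of the $5$-point set $\{0,1,3,4,6\}$ in $\R^3$. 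Its projection from $0$, by the second bullet of Corollary~\ref{cor_proj}, corresponds to the Radon partition of the planar projection $\{1',4'\}\cup\{3',6'\}$, whose Radon point is $1'4'\cap 3'6'$. So $\Pi[025(1:{}^{14}_{036})]$ projects exactly to $\Pi[2'5'(1':{}^{1'4'}_{3'6'})]$, which is one of the three equivalent forms of $\sixpt$ in Observation~\ref{obs_sixequiv}. That handles the first predicate. For the second, $\Pi[257(1:{}^{147}_{36})]$, I would project from $p_7$ (the last point) into a hyperplane separating $7$ from the rest; the hyperplane $257$ becomes the line $2'5'$, the Radon partition $\{1,4,7\}\cup\{3,6\}$ of $\{1,3,4,6,7\}$ becomes $\{1',4'\}\cup\{3',6'\}$ with Radon point $1'4'\cap 3'6'$, and $1$ stays $1'$; so again we land on $\Pi[2'5'(1':{}^{1'4'}_{3'6'})]\equiv\sixpt$.

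Having reduced each $\R^3$ predicate to $\sixpt$ on the projected sequence, the final step is to conclude unavoidability: given an arbitrarily long orientation-homogeneous sequence in $\R^3$, look at consecutive blocks of seven points, project, and apply Lemma~\ref{lem_sixpt} to get that $\sixpt$ occurs in the projection, hence the corresponding $\R^3$ predicate occurs on the original block. More carefully, since the projection of an orientation-homogeneous $\R^3$ sequence from its first (or last) point is orientation-homogeneous in the plane, Lemma~\ref{lem_sixpt} says $\sixpt$ cannot be avoided by sufficiently long such planar sequences, and pulling this back through the projection correspondence shows the $\R^3$ predicate cannot be avoided. One subtlety to check is that the projection correspondence is a genuine logical equivalence under orientation-homogeneity (not just an implication), which follows because Corollary~\ref{cor_proj} pins down the projected Radon data exactly and $\orient_{H,o}$ records signs faithfully; another is that when we project from $p_0$ the remaining points must genuinely be in convex position off $H$, which holds because orientation-homogeneity forces $p_0\notin\conv\{p_1,\dots,p_6\}$.

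\medskip
\noindent\textbf{Main obstacle.} The real work is bookkeeping the correspondence between the $\R^3$ hyperplane-side statement and the planar $\sixpt$ statement through the projection: being careful about \emph{which} point we project from (so that the projected sequence is orientation-homogeneous — it must be an extreme point of the sequence, per the Observation right after the first one on central projection), verifying that the point $0$ (resp.\ $7$) being projected from is exactly the point that sits inside the hyperplane $025$ (resp.\ $257$) and appears in the separated Radon partition $036$ (resp.\ $147$), and confirming via Corollary~\ref{cor_proj} that the induced planar Radon partition is the alternating one matching the digits in $\sixpt$. Once that dictionary is nailed down, invoking Lemma~\ref{lem_sixpt} is immediate.
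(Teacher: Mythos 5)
Your proposal is correct and takes essentially the same route as the paper: both reduce each predicate to the planar six-point lemma by centrally projecting from the extreme point $0$ (resp.\ $7$), noting that the plane $025$ (resp.\ $257$) projects to the line $2'5'$ and, via Corollary~\ref{cor_proj}, that the Radon point $^{\,14}_{036}$ (resp.\ $^{147}_{\,36}$) projects to $1'4'\cap 3'6'$, so the statement becomes an instance of $\sixpt$. The paper's write-up is terser (and remarks that the second predicate is also just the mirror image of the first), but the argument is the same.
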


\begin{proof}
For the first predicate, let points $0, \ldots, 7$ be orientation-homogeneous in $\R^3$, let $\pi$ be the plane through $0,2,5$, and let $r$ be the ray emanating from $0$ and passing through the point $14\cap 036$. This ray $r$ is entirely on one side of $\pi$, and the question is on which side. Project points $1,\ldots,7$ centrally from $0$ into a plane $H$, obtaining points $1',\ldots,7'$. The question reduces to whether $2'5'(1':{}^{1'4'}_{3'6'})$ holds in $H$. Since the the planar predicate $\sixpt = \Pi[25(1:{}^{14}_{36})]$ is unavoidable, we conclude that $\Pi[025(1:{}^{\,14}_{036})]$ in $\R^3$ is also unavoidable.

Similarly, when we project centrally from $7$, we obtain that $\Pi[257(1:{}^{147}_{\,36})]$ is unavoidable. (Alternatively, the right predicate is the mirror image of the left one.)
\end{proof}

\begin{corollary}\label{cor:3dcombined}
The predicate
\begin{equation}\label{eq:3dcombined}
\Pi[147(3:{}^{258}_{\,36})]
\end{equation}
is unavoidable.
\end{corollary}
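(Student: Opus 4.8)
The plan is to deduce the corollary from the six-point lemma together with \emph{both} predicates of Lemma~\ref{lem:3d}, using a central projection to pass to the plane while letting the $3$-dimensional predicates of Lemma~\ref{lem:3d} supply the information that a projection on its own destroys. First I would reduce, by Ramsey's theorem and the unavoidability already proved in Lemma~\ref{lem:3d}, to the following: fix an orientation-homogeneous sequence $1,\dots,8$ on which all of the (finitely many relevant) relabeled instances of the two predicates of Lemma~\ref{lem:3d} and of $\sixpt$ hold, and show $\Pi[147(3:{}^{258}_{36})]$ there. Write $x\eqdef{}^{258}_{36}$ and $y\eqdef{}^{147}_{36}$. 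The next step is a reformulation: since $y\in\conv\{1,4,7\}$, the point $y$ lies \emph{on} the plane $147$; and since that plane separates $3$ from $6$ (only $4$ among $\{1,4,7\}$ lies between $3$ and $6$, an odd number), $y$ is exactly the point where the segment $36$ crosses the plane $147$. Hence $\Pi[147(3:{}^{258}_{36})]$ holds iff the order of the three points along $36$ is $3,\,y,\,x$.

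Next I would project $2,\dots,8$ centrally from $1$ into a plane $H$ separating $1$ from the rest, obtaining an orientation-homogeneous planar sequence $2',\dots,8'$ (by the observation following Corollary~\ref{cor_proj}). Because $1$ lies on the plane $147$, that plane projects to the line $4'7'$; the segment $36$ projects to $3'6'$ (preserving the order of its points), the triangle $258$ to the triangle $2'5'8'$, and $x$ to a point $x'\in\conv\{2',5',8'\}\cap 3'6'$, the sub-arc of $3'6'$ cut out by the triangle $2'5'8'$, whose endpoints are $2'5'\cap 3'6'$ and $5'8'\cap 3'6'$. So the goal becomes: the line $4'7'$ separates $3'$ from $x'$, i.e.\ $x'$ lies beyond the Radon point $4'7'\cap 3'6'$ (the image of $y$) on $3'6'$. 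Two of the relevant positions come for free from the six-point lemma applied to the two six-term windows: $\sixpt(2',\dots,7')$, in the form $\Pi[4'7'(5':\,2'5'\cap 3'6')]$ together with the parity fact that $3'$ and $5'$ lie on opposite sides of $4'7'$, puts $3'$ and $2'5'\cap 3'6'$ on the \emph{same} side of $4'7'$; and $\sixpt(3',\dots,8')$, in the form $\Pi[4'7'(3':\,5'8'\cap 3'6')]$, puts $3'$ and $5'8'\cap 3'6'$ on \emph{opposite} sides. Hence $4'7'\cap 3'6'$ lies strictly inside the sub-arc, and the whole problem reduces to locating $x'$ within that sub-arc, on the far side of $4'7'\cap 3'6'$.

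This last point is the crux and I expect it to be the main obstacle: $x'$ is \emph{not} determined by the planar configuration $2',\dots,8'$ alone, because the plane $258$ does not pass through the projection centre $1$ and hence does not project to a line. To pin $x'$ down I would import the $3$-dimensional information. Relabeling the first predicate of Lemma~\ref{lem:3d} by $k\mapsto k+2$ already shows that $\Pi[247(3:{}^{258}_{36})]$ is unavoidable, i.e.\ that $x$ lies beyond the crossing $247\cap 36$ of segment $36$ with the plane $247$ --- this is the single digit $1$ versus $2$ away from the target; and the second predicate of Lemma~\ref{lem:3d}, on the window $\{1,\dots,7\}$, constrains $y$ relative to the plane $257$. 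I would translate these facts (and, if necessary, a bounded number of further relabeled instances, admitted into the hypothesis via Ramsey's theorem) through the projection from $1$ into planar separation statements, and then combine them with $\sixpt$ and a path-following argument along $3'6'$, in the spirit of Observation~\ref{obs_sixequiv} and the proof of Lemma~\ref{lem_sixpt}, to push $x'$ past the crossing $4'7'\cap 3'6'$.

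The delicate feature throughout --- and the reason the proof is not a one-line relabeling of Lemma~\ref{lem:3d} --- is that, unlike the purely segment-against-segment intersections handled directly by the same-side rule (Observation~\ref{obs_sameside}), the Radon point $x$ is the intersection of a segment with a $2$-flat (the plane $258$), so its position along $36$ depends on magnitudes of $4\times 4$ orientation determinants, not just on their signs. Consequently one cannot locate $x$ by a single same-side argument; instead one must track $x$ through the triangle $258$ via its edges $25$, $58$, $28$ and their separations by the planes $147$, $247$, $257$, which is exactly the combinatorial bookkeeping that the auxiliary unavoidable predicates of Lemma~\ref{lem:3d} (and their relabelings) are there to supply. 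If the bookkeeping becomes unwieldy in the planar picture, the fallback I would try is to run the same argument directly in $\R^3$, taking a ninth point $9$ and showing, in the manner of the six-point lemma, that $\neg\Pi[147(3:{}^{258}_{36})]$ cannot hold simultaneously on the windows $\{1,\dots,8\}$ and $\{2,\dots,9\}$.
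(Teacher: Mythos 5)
Your setup is correct as far as it goes: the reduction of $\Pi[147(3:{}^{258}_{\,36})]$ to the ordering $3,y,x$ along the segment $36$, and the placement of $y'=4'7'\cap 3'6'$ strictly inside the sub-arc of $3'6'$ cut out by the triangle $2'5'8'$ via two instances of $\sixpt$, are both fine. But the proof stops exactly where you yourself flag the crux: showing that $x'$ lies on the far side of $y'$. ``Translate these facts through the projection and combine them with a path-following argument'' is a plan, not an argument, and---as you correctly observe---the position of $x$ along $36$ is genuinely not determined by the projected planar configuration (the plane $258$ does not pass through the projection centre), so nothing you have written forces the conclusion. The relabeled instance $247(3:x)$ that you extract from the first predicate of Lemma~\ref{lem:3d} is never combined with anything that upgrades it to $147(3:x)$. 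This is a genuine gap, not a routine verification.

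The missing idea is a chaining device. First reformulate the left predicate of Lemma~\ref{lem:3d} as an ordering of two Radon points on a common segment: $\Pi[025(1:{}^{\,14}_{036})]$ says that along $14$ the order is $1$, $025\cap 14$, $036\cap 14$, $4$, i.e.\ it is equivalent to $\Pi[036(1\,{}^{025}_{\,14}:{})]$. Now take \emph{nine} homogeneous points and instantiate both predicates of Lemma~\ref{lem:3d} so that each speaks about the same auxiliary plane $159$ and the same segment $37$: the first gives $159(3\,{}^{148}_{\,37}:{})$ and the second gives $159(3:{}^{269}_{\,37})$. Writing $u\eqdef 148\cap 37$, $v\eqdef 159\cap 37$, $w\eqdef 269\cap 37$, the order along $37$ is $3,u,v,w,7$, hence $148(3:w)$, which is the desired predicate after deleting point $5$ and relabeling. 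Your fallback of adjoining a ninth point is the right instinct, but note that the working argument is not a two-window argument on $\neg\Pi[147(3:{}^{258}_{\,36})]$ in the style of the six-point lemma; it is a chain through the intermediate Radon point $v$ on the auxiliary plane $159$, an object that never appears in your proposal. Note also that your main line fixes only eight points, which is too few to set up such a chain, since the two instances above jointly use all nine.
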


\begin{proof}
Let $P = (1,\ldots,9)$ be a point sequence in $\R^3$ that is orientation-homogeneous, as well as homogeneous with respect to the two predicates in (\ref{eq:3d}). Let us reformulate the left predicate in (\ref{eq:3d}): This predicate asserts that the Radon points $x\eqdef 025\cap 14$ and $y\eqdef 036\cap 14$ lie along the segment $14$ in the order $1,x,y,4$. Hence, the predicate is equivalent to $\Pi[036(1{}^{025}_{\,14}:{})]$. In particular, in $P$ we have $159(3{}^{148}_{\,37}:{})$.

Similarly, the right predicate in (\ref{eq:3d}) is equivalent to $\Pi[147(3:{}^{257}_{\,36})]$. Hence, in $P$ we have $159(3:{}^{269}_{\,37})$.

Hence, if we let $u\eqdef 148\cap 37$, $v\eqdef 159\cap 37$, $w\eqdef 269\cap 37$, the order along $37$ is $3,u,v,w,7$. Therefore, $148(3:w)$. This is an instance of predicate (\ref{eq:3dcombined}).
\end{proof}

Note that there are many plane-side predicates in $\R^3$ involving a Radon point, which are not covered by the above result, nor are they trivially solved by the same-side rule; for example, $\Pi[345({}:{}^{168}_{\,27})]$ and $\Pi[368({}:{}^{147}_{\,25})]$. Below in Section~\ref{subsec_d3} we will solve the latter one (and its mirror image).

\subsection{A movement interpretation}\label{subsec_movement}

We now give a useful way of reinterpreting Lemma~\ref{lem:3d} and Corollary~\ref{cor:3dcombined}.

Let $P\in(\R^3)^n$ be orientation-homogeneous, and let $a_1,\allowbreak a_2,a_3,\allowbreak a'_1,a'_2,\allowbreak a'_3,b_1,\allowbreak b_2 \in P$, such that $b_1, b_2$ interlace both $a_1,a_2,a_3$ and $a_1',a_2',a_3'$, meaning, $a_1<b_1<a_2<b_2<a_3$ and $a_1'<b_1<a_2'<b_2<a_3'$. Define the Radon points
\begin{equation*}
r\eqdef {}^{\ b_1b_2}_{a_1a_2a_3}, \qquad r'\eqdef {}^{\ b_1b_2}_{a_1'a_2'a_3'}.
\end{equation*}
Hence, both $r$ and $r'$ lie along the segment $b_1b_2$. The question is in which order.

If $a'_1 \le a_1$ and $a'_2\ge a_2$ and $a'_3 \le a_3$, then, by the same-side rule (Observation~\ref{obs_sameside}), it is immediate that the order is $b_1,r,r',b_2$.

However, what happens if there are ``conflicting'' movements, e.g.~if $a'_1>a_1$ and $a'_2>a_2$? In fact, in such cases both outcomes are possible. However, one of the outcomes is unavoidable. Specifically:

\begin{lemma}[Movement lemma]\label{lem_movement}
Let $a_1,a_2,\dotsc,b_1,b_2,r,r'$ be defined as above, and suppose $a'_2 > a_2$. Then the predicate ``The order along $b_1b_2$ is $b_1,r,r',b_2$'' is unavoidable.
In other words, the movement of the point $a_2$ is the decisive one.
\end{lemma}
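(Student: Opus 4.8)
The strategy is to reduce the statement to the results already obtained, namely Lemma~\ref{lem:3d} and Corollary~\ref{cor:3dcombined}, by exploiting the fact that only the \emph{order-type} of the configuration matters once the sequence is orientation-homogeneous. Concretely: the hypotheses only constrain the \emph{relative order} of the eight points $a_1,a_2,a_3,a_1',a_2',a_3',b_1,b_2$ along $P$, together with the inequality $a_2'>a_2$. First I would enumerate the possible interleavings. We are given $a_1<b_1<a_2<b_2<a_3$ and $a_1'<b_1<a_2'<b_2<a_3'$, so $a_1,a_1'<b_1$, then $a_2,a_2'$ lie in the open interval $(b_1,b_2)$, and $a_3,a_3'>b_2$. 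The only ``free'' comparisons are $a_1$ vs.\ $a_1'$, $a_3$ vs.\ $a_3'$, and $a_2$ vs.\ $a_2'$; the last is fixed to $a_2<a_2'$ by hypothesis. So up to the mirror symmetry $P\mapsto\text{reverse}(P)$ (which swaps the roles of the $a_1$-end and the $a_3$-end and preserves orientation-homogeneity up to negation of all orientations, hence preserves the truth of the predicate) there are only four cases for $(a_1\text{ vs.\ }a_1',\ a_3\text{ vs.\ }a_3')$, and by the same-side rule the cases $a_1'\le a_1$ \emph{and} $a_3'\le a_3$ are already immediate. The remaining cases are where at least one of the other two points ``moves the wrong way.''

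The key reduction is this: in each remaining case I would pass to the $8$-point (or fewer) orientation-homogeneous subsequence consisting of the relevant points in their $P$-order, relabel them $1,2,3,\dots$ accordingly, and recognize the desired conclusion ``the order along $b_1b_2$ is $b_1,r,r',b_2$'' as a plane-side predicate of the form $\Pi[b_1b_2b_3(a_2:\,{}^{b_1b_2}_{a_1'a_2'a_3'})]$ in the notation of the paper. One then observes that this predicate is, up to reindexing and mirror image, exactly one of the unavoidable predicates of Lemma~\ref{lem:3d}, or the combined predicate $\Pi[147(3:{}^{258}_{\,36})]$ of Corollary~\ref{cor:3dcombined}, because in those statements the ``moving'' point is precisely the one playing the role of the middle point $a_2$. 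For instance, when $a_1'>a_1$ but $a_3'\le a_3$, only the first two entries differ, and after relabeling the six relevant points $1,\dots,6$ in $P$-order the statement becomes the six-point predicate $\sixpt$ lifted to $\R^3$ by Corollary~\ref{cor_proj} — i.e.\ Lemma~\ref{lem:3d}. When both $a_1'>a_1$ and $a_3'>a_3$, all three $a$-points move ``outward'' relative to the primed copy, the eight points collapse to a configuration on which Corollary~\ref{cor:3dcombined} applies directly after relabeling.

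The main obstacle I anticipate is purely bookkeeping: verifying that the reindexing in each case genuinely matches one of the cited unavoidable predicates, and in particular checking the parity/side conventions (which side of the plane $b_1b_2b_3$ each point lies on), since the hyperplane-separation statements are sensitive to the number of intervening points. I would handle this by using the parity lemma (the lemma just before the Same-side rule) to compute the relevant orientations mechanically, and by drawing the one-dimensional ``order diagram'' of the points on the line $P$ for each case, exactly as in Figure~\ref{fig_R2}. A secondary subtlety is that $r$ and $r'$ are genuine Radon points, so one must confirm at each stage that the interlacing hypotheses guarantee (via Lemma~\ref{lem_radon}) that $r,r'$ really do lie on the segment $b_1b_2$; this is immediate from the assumption that $b_1,b_2$ interlace each of the triples $a_1,a_2,a_3$ and $a_1',a_2',a_3'$. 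Once the case analysis is set up, each case is a one-line invocation of an earlier result, so the proof is short; the length is all in organizing the (at most four) cases and their mirror images.
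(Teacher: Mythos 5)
Your approach is exactly the paper's: the paper gives no separate proof of the Movement Lemma, introducing it explicitly as a ``reinterpretation'' of Lemma~\ref{lem:3d} and Corollary~\ref{cor:3dcombined}, so the intended argument is precisely the order-type case analysis and reduction you describe, and your use of reversal symmetry and of the same-side rule for the case $a_1'\le a_1$, $a_3'\le a_3$ is correct.

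One small inaccuracy to repair: it is not true that every remaining case is, after relabeling, \emph{literally} one of the cited predicates. The predicates of Lemma~\ref{lem:3d} have the two triangles sharing a vertex ($025$ vs.\ $036$ share $0$; $147$ vs.\ $257$ share $7$), and Corollary~\ref{cor:3dcombined} has the specific interleaving $a_1<a_1'<b_1<a_2<a_2'<b_2<a_3<a_3'$. So, e.g., in the case $a_1'>a_1$ with $a_3'<a_3$ strictly, no cited predicate matches directly; you must chain through the intermediate triangle $a_1'a_2'a_3$ (apply the relabeled $\Pi[147(3:{}^{257}_{\,36})]$ to get the order $b_1,r,s,b_2$ with $s\eqdef b_1b_2\cap a_1'a_2'a_3$, then the same-side rule to get $b_1,s,r',b_2$), exactly as the paper chains $148\to159\to269$ in the proof of Corollary~\ref{cor:3dcombined}. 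Note also that the interlacing hypotheses guarantee the intermediate triangle is again interlaced with $b_1b_2$, so the chaining is legitimate; and the relevant subconfiguration in your ``$\sixpt$ lifted to $\R^3$'' case has seven or eight points, not six. With that chaining step made explicit, the case analysis closes.
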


\subsection{The case \texorpdfstring{$\boldsymbol{d=3}$}{d=3}}\label{subsec_d3}

Unlike in the planar case, for $d=3$ we have incomplete results.

Every colorful Tverberg type with $d=3$ is either of the form
\begin{equation}\label{eq_colorful3_32}
\pi_1 1 \pi_2 2 \pi_3 1 \pi_4,
\end{equation}
where $\pi_1,\pi_4$ are permutations of $[r]\setminus\{1\}$ and $\pi_2,\pi_3$ are permutations of $[r]\setminus\{1,2\}$; or of the form
\begin{equation}\label{eq_colorful3_333}
\pi_1 1 \pi_2 2 \pi_3 3 \pi_4,
\end{equation}
where $\pi_1$ is a permutation of $[r]\setminus\{1\}$, $\pi_2$ is a permutation of $[r]\setminus\{1,2\}$, $\pi_3$ is a permutation of $[r]\setminus\{2,3\}$, and $\pi_4$ is a permutation of $[r]\setminus\{3\}$.

In the case (\ref{eq_colorful3_32}), part $1$ has size $2$, part $2$ has size $3$, and all the other parts have size $4$. In the case (\ref{eq_colorful3_333}), parts $1$, $2$, and $3$ have size $3$, and all the other parts have size $4$.

\paragraph{Parts of sizes 2,3,4.}
Let $P \eqdef (1,\ldots, 9,A,B)$. Assume $P$ is orientation-homogeneous, as well as homogeneous with respect to the predicates covered by the movement lemma (Lemma~\ref{lem_movement}). We will consider the order of various points on the line segment $48$. Let $I_1 \eqdef \{4,8\}$, $I_2 \eqdef \{2,6,A\}$, and let their Radon point be $x\eqdef 48\cap 26A$. In order for $\{I_1, I_2,I_3\}$ to be a colorful Tverberg type, $I_3$ must contain points $5$ and $7$, as well as one of the points $1,3$ and one of the points $9,B$. Hence, there are four colorful Tverberg types. Each one decomposes into four plane-separation predicates by Lemma~\ref{lem_ptinconv}; out of these four plane-separation predicates, two hold by the same-side rule, while the remaining two hold by the movement lemma. For example, for the case $I_3=\{3,5,7,B\}$, the corresponding plane-separation predicates are
\begin{equation*}
\Pi[357({}:xB)], \qquad \Pi[35B({}:x7)], \qquad \Pi[37B({}:x5)], \qquad \Pi[57B({}:x3)].
\end{equation*}
The first and last predicate hold by the same-side rule, while the middle two predicates hold by the movement lemma.

Next, we show that for each non-colorful Tverberg type with parts of sizes $\{2,3,4\}$, its negation is unavoidable. The Radon-partition lemma takes care of all cases in which the parts of sizes $2$ and $3$ do not alternate.\footnote{It also takes care of all cases in which a part has size $1$.} Further, by the no-consecutive-points lemma (Lemma~\ref{lem_no-consecutive}), it is enough to consider those cases where $I_1 \eqdef \{4,8\}$, $I_2 \eqdef \{2,6,A\}$, and $I_3\subset\{1,3,5,7,9,B\}$. Hence, the number of cases is $\binom{6}{4} - 4 = 11$. All cases can be solved using either the same-side rule or the movement lemma. As before, let $x \eqdef 48\cap 26A$.

If $I_3=\{1,3,5,7\}$, then by the same-side rule, $357(x:1)$. If $I_3=\{1,3,5,9\}$, then by the movement lemma, $159(x:3)$. If $I_3 = \{1,3,5,B\}$, then, by the movement lemma, $15B(x:3)$. If $I_3=\{1,3,7,9\}$, then, by the movement lemma, $379(x:1)$. If $I_3 = \{1,3,7,B\}$, then, by the movement lemma, $37B(x:1)$. If $I_3 = \{1,3,9,B\}$, then, by the same-side rule, $139(x:B)$. The remaining five cases are the mirror images of the first five.

\paragraph{Parts of sizes 3,3,3.}
We start by handling the four colorful Tverberg types of this form.

\begin{lemma}\label{lem_triplept}
The following four colorful Tverberg types in $\R^3$ are unavoidable:
\begin{align*}
&\{1,4,7\},\{2,5,8\},\{3,6,9\}, &\{2,4,7\},\{1,5,8\},\{3,6,9\},\\
&\{1,4,7\},\{2,5,9\},\{3,6,8\}, &\{2,4,7\},\{1,5,9\},\{3,6,8\}.
\end{align*}
In addition, the following two plane-side predicates are unavoidable:
\begin{equation}\label{eq_moreplanesidepreds}
\Pi[368(4:{}^{147}_{\,25})], \qquad \Pi[136(5:{}^{258}_{\,47})].
\end{equation}
\end{lemma}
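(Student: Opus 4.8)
The plan is to follow the same template used for the parts-of-sizes-$\{2,3,4\}$ case: start with an orientation-homogeneous sequence that is also homogeneous with respect to all predicates covered by the movement lemma (Lemma~\ref{lem_movement}), reduce each Tverberg-type predicate via Lemma~\ref{lem_ptinconv} to a conjunction of four plane-separation predicates, and show that each of those four is forced — some by the same-side rule, the rest by the movement lemma. Concretely, take $P \eqdef (1,\dotsc,9)$, set $I_1 \eqdef \{1,4,7\}$ or $\{2,4,7\}$, $I_2 \eqdef \{2,5,8\}$ or $\{1,5,8\}$, and $I_3 \eqdef \{3,6,9\}$ or $\{3,6,8\}$; the common Tverberg point, if it exists, is a single point $x$, and by Lemma~\ref{lem_ptinconv} the statement ``$x \in \conv I_3$'' unwinds into the four separations $\Pi[369({}:x?)]$ obtained by deleting each vertex of $I_3$ in turn (and likewise for $\conv I_1$, $\conv I_2$). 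The key observation is that two of the four faces of a simplex spanned by a part have the ``crossing'' point trivially on the correct side by the same-side rule (because a sub-segment of the simplex lies entirely on one side of the spanning hyperplane), while the remaining ``conflicting'' separations are exactly the situations the movement lemma was designed to resolve: the point whose index is sandwiched between the two indices of the opposite pair is the decisive one, and its relative order along the segment is forced. So for each of the four colorful types I would tabulate the four plane-separation predicates, identify which two are same-side and which two are movement-lemma instances, and check that the movement-lemma instances all point the right way for $x$ to lie in the respective convex hull.

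For the two extra plane-side predicates $\Pi[368(4:{}^{147}_{\,25})]$ and $\Pi[136(5:{}^{258}_{\,47})]$, the idea is the same but simpler, since there is no Tverberg point to locate — only one Radon point (e.g.\ $y \eqdef {}^{147}_{258}$ in a $5$-point orientation-homogeneous subsequence) whose side relative to a given plane must be determined. Here I expect that one of these is a direct movement-lemma instance (after possibly reformulating via the ``order along a segment'' trick, as was done to pass from Lemma~\ref{lem:3d} to Corollary~\ref{cor:3dcombined}: express $\Pi[368(4:{}^{147}_{\,25})]$ as a statement about the order of two Radon points along a common segment), and the other is its mirror image under reversal of the sequence. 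It may also be necessary, as in Corollary~\ref{cor:3dcombined}, to combine two movement-lemma conclusions along a single segment, chaining inequalities $3,u,v,w,7$ or similar to extract the desired separation; this is the kind of two-step argument already demonstrated there.

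The main obstacle I anticipate is bookkeeping rather than conceptual: for each of the four colorful types one must correctly pair up the deleted-vertex separations with same-side versus movement-lemma justifications, and in the movement-lemma cases one must verify that the hypothesis ``$a'_2 > a_2$'' (the decisive middle point genuinely moves in the asserted direction) holds with the indices as given, and that the resulting forced order is the one needed for membership in $\conv I_i$. Since there are four types, three parts each, and four faces per part, this is a modest but error-prone enumeration; the safest route is to verify it by the computer enumeration already alluded to in the paper (Appendix~\ref{app_code}). I would also double-check that the four colorful types listed are genuinely the only ones of shape $\{3,3,3\}$ up to the symmetries (relabeling of parts, and the reversal symmetry $i \mapsto 10-i$), so that the two pairs $\{1,4,7\}$ vs.\ $\{2,4,7\}$ and $\{3,6,9\}$ vs.\ $\{3,6,8\}$ really exhaust the $(r-1)!^d = 2^3 = 4$ count, and that solving all four plus the two auxiliary plane-side predicates suffices for what Theorem~\ref{thm_main} claims in the $d=r=3$ case.
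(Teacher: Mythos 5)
There is a genuine gap at the very first step of your plan. The reduction via Lemma~\ref{lem_ptinconv} — delete each vertex of a part in turn and obtain a plane-separation predicate per deleted vertex — only applies to parts of size $d+1$, i.e.\ full-dimensional simplices. That is why it worked for the $\{2,3,4\}$ case (the size-$4$ part is a tetrahedron in $\R^3$). Here all three parts are triangles in $\R^3$: deleting a vertex of $\{3,6,9\}$ leaves an edge, which spans a line, not a plane, so there are no ``four faces per part'' and no conjunction of hyperplane-side predicates of the form $\Pi[369({}:x?)]$. Moreover, for lower-dimensional parts the existence of the common point is itself at issue: generically two triangles in $\R^3$ meet in a segment and the question is whether the two segments $T_1\cap T_2$ and $T_1\cap T_3$ actually cross inside $T_1$. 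The paper's proof is a genuinely two-dimensional configuration argument inside the triangle $T_1=147$: it introduces $w\eqdef 147\cap 25$, $x\eqdef 47\cap 258$, $y\eqdef 147\cap 36$, $z\eqdef 47\cap 369$, pins down their relative positions within $T_1$ using the same-side rule and the movement lemma (e.g.\ $4,x,z,7$ along $47$; $4w(7y:1)$; $wx(4y:17)$), and reads off from the resulting picture that $wx$ and $yz$ intersect. That positional analysis is the missing idea; your proposal never engages with it.

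Your plan for the two auxiliary predicates also would not go through. The paper explicitly flags predicates such as $\Pi[368({}:{}^{147}_{\,25})]$ as \emph{not} covered by the same-side rule, by central projection, or by the movement lemma (note that $368$ and $25$ do not interlace, so there is no common segment along which to compare two Radon points, and the chaining trick of Corollary~\ref{cor:3dcombined} does not apply). In the paper these predicates fall out of the same $T_1$-configuration: once the picture shows that the plane of $T_3$ separates $w=147\cap 25$ from $4$, one gets $369(4:{}^{147}_{\,25})$ and relabels $9$ as $8$; the second predicate is the mirror image. As a minor point, $(r-1)!^d=2^3=8$, not $4$; the four types in the lemma are only those colorful types of shape $\{3,3,3\}$, the other four colorful types having shape $\{2,3,4\}$ and being handled separately.
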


\begin{proof}
Consider the first type. Denote $T_1\eqdef 147$, $T_2 \eqdef 258$, $T_3\eqdef 369$. Let $w\eqdef 147\cap 25$, $x\eqdef 47\cap 258$, $y\eqdef 147\cap 36$, $z\eqdef 47\cap 369$. Hence, $T_1\cap T_2 = wx$, and $T_1 \cap T_3 = yz$. Let us see where the points $w$, $x$, $y$, $z$ lie within $T_1$.

\begin{figure}
\centerline{\includegraphics{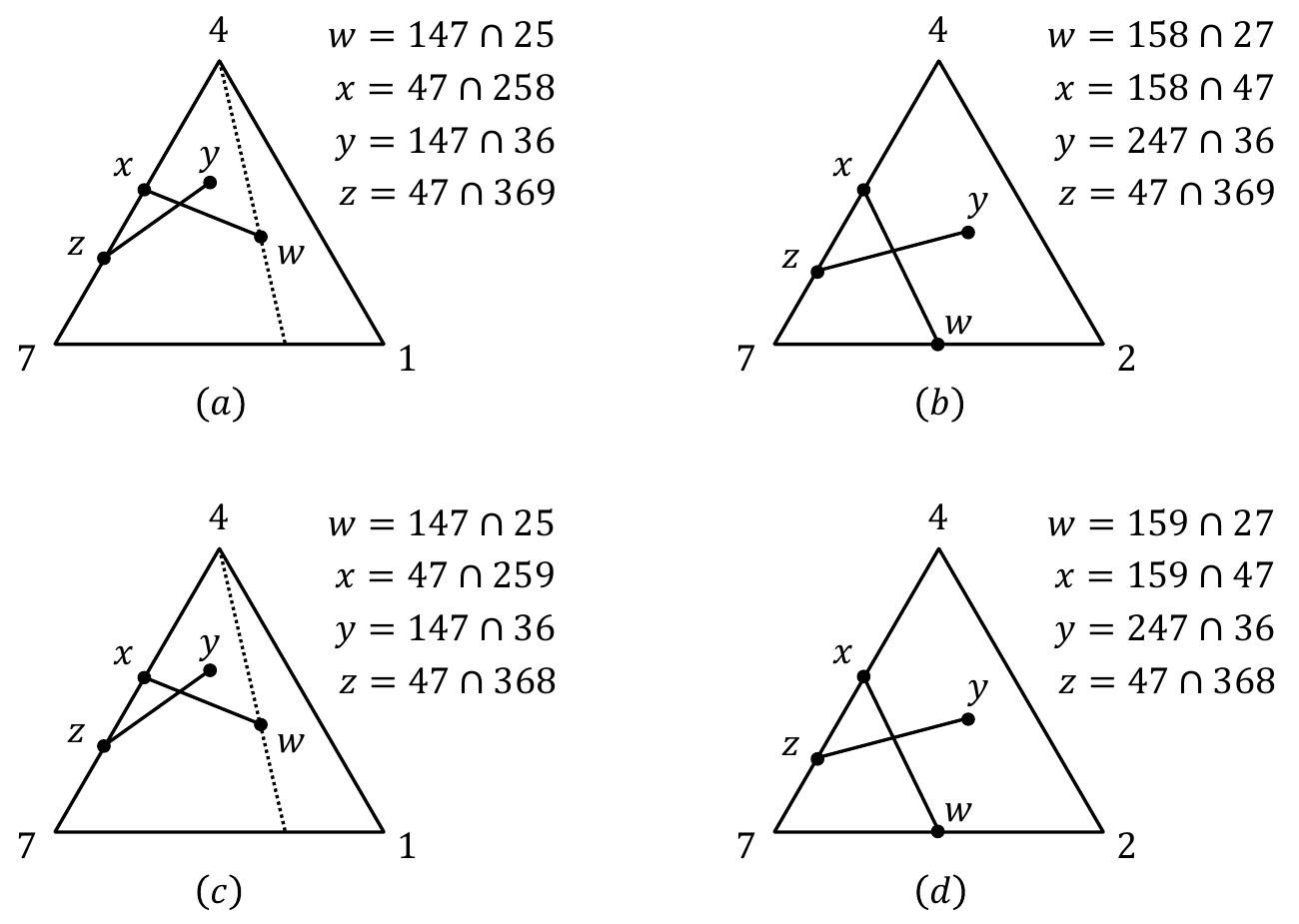}}
\caption{\label{fig_333}($a$) Case $\{1,4,7\},\{2,5,8\},\{3,6,9\}$. ($b$) Case  $\{2,4,7\},\allowbreak \{1,5,8\},\allowbreak \{3,6,9\}$. ($c$) Case  $\{1,4,7\},\allowbreak \{2,5,9\},\allowbreak \{3,6,8\}$. ($d$) Case $\{2,4,7\},\allowbreak \{1,5,9\},\allowbreak \{3,6,8\}$.}
\end{figure}

The points $x$ and $z$ lie on the segment $47$. Furthermore, by the movement lemma (Lemma~\ref{lem_movement}), their order is $4,x,z,7$. The point $w$ lies somewhere in the interior of $T_1$. Where does $y$ lie with respect to $w$? By the same-side rule, we have $245(y7:{})$. Since the plane $245$ intersects $T_1$ along the line $4w$, it follows that within $T_1$ we have $4w(7y:1)$. Furthermore, by the movement lemma, we have $258(4y:17)$, so within $T_1$ we have $wx(4y:17)$. Hence, the situation is as in Figure~\ref{fig_333}($a$), so the segments $wx$ and $yz$ indeed intersect.

We further see from Figure~\ref{fig_333}($a$) that the plane through $T_3$ separates $w$ from $4$. Hence, $369(4:{}^{147}_{\,25})$. Renaming the $9$ to an $8$ we obtain the first predicate in (\ref{eq_moreplanesidepreds}). The second predicate in (\ref{eq_moreplanesidepreds}) is its mirror image.
The other types are handled in a similar way; see Figure~\ref{fig_333}($b$--$d$).
\end{proof}

Out of the $280$ partitions of $\{1,\ldots,9\}$ into three parts of size $3$, a computer enumeration shows that there are $17$ in which all three pairs of triangles intersect (see code in Appendix~\ref{app_code}). Four of them are the colorful Tverberg types listed above in Lemma~\ref{lem_triplept}. Of the $13$ remaining ones, six are solved by the no-consecutive-points lemma (Lemma~\ref{lem_no-consecutive}). The seven remaining Tverberg types are
\begin{align*}
&\{1, 4, 7\}, \{2, 6, 9\}, \{3, 5, 8\}, \qquad &\{1, 4, 8\}, \{2, 6, 9\}, \{3, 5, 7\},\\
&\{1, 4, 9\}, \{2, 5, 7\}, \{3, 6, 8\}, \qquad &\{1, 4, 9\}, \{2, 6, 8\}, \{3, 5, 7\},
\end{align*}
and their mirror images. The avid reader is invited to try to solve them.

\paragraph{Parts of sizes 3,3,3,4.}
A computer enumeration shows that there are $144$ colorful Tverberg types with part sizes $3,3,3,4$. By Lemma~\ref{lem_ptinconv}, each one is a conjunction of four triple-point plane-side predicates of the form ``The intersection $abc\cap def\cap ghi$ is on such a side of the plane $xyz$''. The total number of distinct such plane-side predicates is $240$ (according to our computer program).

Many of these predicates can be proven unavoidable by a straightforward use of the same-side rule. Consider, for example, the predicate $47A(5:x)$ for $x \eqdef 159\cap 26C\cap 38B$. The intersection of the triangles $159$ and $38B$ equals the segment $yz$, where $y\eqdef 159\cap 38$ and $z\eqdef 59\cap 38B$. By the same-side rule we have $47A(5:y)$, and by the movement lemma, $47A(5:z)$. Therefore, since $x$ lies along the segment $yz$, we have, again by the same-side rule, $47A(5:x)$.

There are predicates that do not yield to such simple analysis; for example, $48B(1:x)$ for $x\eqdef 16A\cap 259\cap 37C$. One predicate is highly symmetric, so it can be solved with a trick similar to the one for $\sixpt$:

\begin{lemma}\label{lem_highly_symm}
The predicate $\Pi[48C(5:x)]$ for $x\eqdef 159\cap 26A\cap 37B$ is unavoidable.
\end{lemma}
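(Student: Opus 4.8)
The plan is to exploit the large cyclic symmetry of the configuration in the same spirit as the proof of the six-point lemma (Lemma~\ref{lem_sixpt}) and of Observation~\ref{obs_sixequiv}: the indices $4,5,8,C=12$ together with the three triangles $159$, $26A$, $37B$ are symmetric under a simultaneous index shift, so first I would establish the analogue of Observation~\ref{obs_sixequiv}, namely that the predicate $48C(5:x)$ is equivalent (in an orientation-homogeneous sequence) to its cyclically shifted versions, e.g.\ $59D(6:x')$ where $x'\eqdef 26A\cap 37B\cap 48C$, and similarly for the next shift. The equivalences should follow by tracking, along the relevant segments, the order of the various partial intersection points, exactly as in the proof of Observation~\ref{obs_sixequiv}, using repeatedly the same-side rule (Observation~\ref{obs_sameside}) and the movement lemma (Lemma~\ref{lem_movement}) to pin down the orders that are not forced by same-side alone.

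Once the cyclic-shift equivalence is in hand, the argument mirrors Lemma~\ref{lem_sixpt}: take an orientation-homogeneous sequence on indices $1,\dots,13$ (one point more than the predicate needs), and suppose for contradiction that $\neg\Pi[48C(5:x)]$ holds on $(1,\dots,12)$ \emph{and} $\neg\Pi[48C(5:x)]$ holds on the shifted tuple $(2,\dots,13)$. I would then locate a common segment (presumably $48$, or one of the other edges of the triangle $48C$ in the first instance and $59D$ in the second) along which both negated statements place certain auxiliary points, derive from the same-side rule and the movement lemma the forced order of \emph{all} of them, and read off that the two configurations are incompatible --- just as in Lemma~\ref{lem_sixpt} the point $y$ was forced to lie on the segment $x5$, contradicting $\neg\sixpt(2,\dots,7)$. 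Concretely, the negation of the shifted predicate will force one of the triple-intersection points to lie on a sub-segment that the negation of the unshifted predicate has already excluded.

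The main obstacle I anticipate is bookkeeping: with three triangles in $\R^3$ and several pairwise and triple intersections, the picture is considerably harder to visualize than the planar six-point lemma, and getting the cyclic equivalence to go through cleanly will require carefully choosing which edge of which triangle to project onto at each stage and invoking the movement lemma with the right ``decisive'' index. It may be cleanest to first centrally project from one of the extreme points (as in Lemma~\ref{lem:3d} and Corollary~\ref{cor_proj}) to reduce a three-dimensional sub-claim to a planar statement already proven (e.g.\ an instance of $\sixpt$ or of Lemma~\ref{lem_triplept}), and only handle the genuinely three-dimensional coincidences by hand. I would expect the final contradiction step itself to be short once the symmetry and the orderings on the shared segment are correctly set up; the effort is almost entirely in verifying the equivalences that make the configuration ``highly symmetric'' in the sense the lemma's label advertises.
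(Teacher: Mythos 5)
Your high-level strategy coincides with the paper's: adjoin one extra point $D$, apply the avoidance hypothesis to both the tuple $(1,\ldots,C)$ and the shifted tuple $(2,\ldots,D)$, and derive a contradiction by ordering auxiliary points along a common segment via the same-side rule and the movement lemma. But there are two concrete problems with the plan as written. First, the preliminary ``cyclic-shift equivalence'' step is not analogous to Observation~\ref{obs_sixequiv} and would not help: that observation relates three formulations of the \emph{same} six-point instance, not an instance to its shift, and an equivalence between the instance on $(1,\ldots,C)$ and the one on $(2,\ldots,D)$ would be perfectly consistent with both being false, so it cannot feed a contradiction. The paper needs no such equivalence; it simply reads off what avoidance says on each of the two tuples.

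Second---and this is the crux your sketch leaves open---the common segment is not an edge of $48C$ or of $59D$ but the segment $ab\eqdef 26A\cap 37B$ (with $a\eqdef 26A\cap 37$ and $b\eqdef 37B\cap 6A$), i.e.\ the intersection of the two triangles that the original and shifted instances \emph{share}. The three triple points $x\eqdef 159\cap 26A\cap 37B$, $w\eqdef 26A\cap 37B\cap 48C$ (which is exactly the shifted instance's $x'$), and $y\eqdef 26A\cap 37B\cap 59D$ are all collinear on $ab$. The movement lemma plus the same-side rule give unconditionally the order $a,x,y,b$; the negation of the first instance gives $48C(x:A)$ and, since $48C$ passes through $w$, forces $a,w,x,b$; the negation of the shifted instance gives $59D(w:B)$ and, since $59D$ passes through $y$, forces $a,y,w,b$. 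The first two orders yield $a,w,y,b$, contradicting the third. Without identifying this collinearity of $x$, $w$, $y$ on $26A\cap 37B$, the bookkeeping you anticipate has no place to land; and no central projection to a planar statement is needed anywhere in the argument.
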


\begin{proof}
Let $P = (1,\ldots, 9, A, \ldots, D)$. Define the triangles
\begin{align*}
T_1 &\eqdef 159,& T_2 &\eqdef 26A,& T_3 &\eqdef 37B,\\
T_4 &\eqdef 48C,& T_5 &\eqdef 59D.
\end{align*}
Hence, $x=T_1\cap T_2\cap T_3$. Define the triple-intersection points
\begin{equation*}
w \eqdef T_2 \cap T_3 \cap T_4,\qquad y \eqdef T_2 \cap T_3 \cap T_5.
\end{equation*}
Define the intersection points
\begin{equation*}
a \eqdef T_2 \cap 37,\qquad b \eqdef T_3 \cap 6A.
\end{equation*}
So the intersection of the triangles $T_2$ and $T_3$ equals the segment $ab$.

The points $x,w,y$ all lie within the segment $ab$. But in which order do they lie?

\begin{claim}
Along $ab$ we have the order $a,x,y,b$.
\end{claim}

\begin{proof}
By the movement lemma we have $T_1(D:a)$. Furthermore, by the same-side rule, we have $T_1({}:yD)$. Since $T_1$ passes through $x$, the claim follows.
\end{proof}

Now, suppose for a contradiction that $P$ avoids $\Pi[48C(5:x)]$. This means that $T_4(x:A)$, as well as $T_5(w:B)$---here is where the symmetry of the predicate comes into play.

Now, $T_4(x:A)$ implies that, along the segment $ab$, the order is $a,w,x,b$. (Proof: By the movement lemma, we have $T_4(b:A)$, so $T_4(bx:{})$. But $T_4$ passes through $w$.)

From the orders $a,w,x,b$ and $a,x,y,b$ follows the order $a,w,y,b$.

However, $T_5(w:B)$ implies the opposite order $a,y,w,b$! (Proof: By the movement lemma, we have $T_5(b:B)$, so $T_5(wb:{})$. But $T_5$ passes through $y$.)

This contradiction concludes the proof.
\end{proof}

\subsection{The case \texorpdfstring{$\boldsymbol{d=4}$}{d=4}}

In this section we prove that the zigzag type for $d=4$, $r=3$ is unavoidable. Then it follows immediately that the same is true for all $r\ge 3$. Recall that this Tverberg type is encoded by $12321232123$, and it is given by $I_1 \eqdef \{1,5,9\}$, $I_2 \eqdef \{2,4,6,8,A\}$, $I_3 \eqdef \{3,7,B\}$. Let $x \eqdef 159\cap 37B$ be the Radon point of parts $I_1$ and $I_3$.

By Lemma~\ref{lem_ptinconv}, in order to show that the simplex spanned by $I_2$ contains $x$, we have to show that the following five hyperplane-side statements are unavoidable:
\[
  \def\qspace{\hskip 0.75em plus 0.25em minus 0.25em} 
  2468(Ax:{}),\qspace 246A(8x:{}),\qspace 248A(6x:{}),\qspace 268A(4x:{}),\qspace 468A(2x:{}).
\]
By symmetry,
we only have to deal with the first three statements. The first statement follows immediately from the same-side rule.

\begin{lemma}\label{lem_4Da}
The predicate $\Pi[(246A(8{}^{159}_{37B}:{})]$  is unavoidable.
\end{lemma}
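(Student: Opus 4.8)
The plan is to follow the same movement/same-side strategy used throughout Section~\ref{subsec_d3}, lifted to $\R^4$. We are asked to show that the hyperplane spanned by $2,4,6,A$ separates $8$ from the Radon point $x\eqdef{}^{159}_{37B}$ of parts $I_1=\{1,5,9\}$ and $I_3=\{3,7,B\}$. The first thing I would do is locate $x$ on the segment whose endpoints are the two ``natural'' points of $13579B$ that can serve as Radon‑point endpoints. Concretely, in $\R^4$ the five points $1,3,5,7,9$ have a Radon partition, and so do $3,5,7,9,B$; the relevant intersection here, $159\cap 37B$, is a single point since $1,3,5,7,9,B$ is orientation‑homogeneous and the triangles $159$ and $37B$ are $2$‑flats in general position whose affine hulls together span $\R^4$. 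I would write $x$ as lying on a segment $pq$ where $p\eqdef 159\cap 37$ and $q\eqdef 59\cap 37B$ (the intersection of the two triangles $159$ and $37B$ being exactly the segment $pq$), so that establishing the side of the hyperplane $246A$ for $x$ reduces to establishing it for $p$ and for $q$ and interpolating via the same-side rule (Observation~\ref{obs_sameside}).

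Next I would handle the two endpoints $p$ and $q$ separately. For $q=59\cap 37B$: this point lies in the triangle $37B$ as well as on the line $59$; I would show $246A(8:q)$ by first using the same-side rule on a sub‑simplex and then the movement lemma (Lemma~\ref{lem_movement}) to pin down the direction of the ``conflicting'' coordinate. The point $p=159\cap 37$ similarly lies on the segment $37$ and in the triangle $159$; here again I expect one same-side application plus one movement‑lemma application. The key is that $8$ interlaces the relevant index triples appropriately: $8$ sits between $7$ and $9$ and between $A$ and $6$, etc., so that after projecting/restricting to the right $3$‑dimensional slice the hypotheses of the movement lemma ($a_2'>a_2$ for the decisive middle point) are met with $8$ playing the role of the decisive mover. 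I would organize this as: (i) reduce to a plane‑side statement about $p$, resp.\ $q$, in the span of the five relevant points; (ii) identify which of the defining orientation relations is forced by same-side and which is the ``genuine'' one; (iii) invoke Lemma~\ref{lem_movement}. Once $246A(8:p)$ and $246A(8:q)$ are both established, the same-side rule gives $246A(8:x)$, which is the desired predicate $\Pi[246A(8\,{}^{159}_{37B}:{})]$.

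The main obstacle I anticipate is step (i)–(ii): correctly bookkeeping the orientation signs in $\R^4$ when one of the objects being separated is itself a Radon point of a triangle rather than a vertex, and verifying that the interlacing pattern of the index $8$ relative to $\{1,5,9\}$, $\{3,7,B\}$, and the ``remaining'' vertices of $246A$ is exactly the one the movement lemma requires (with $8$ decisive, not just one of the side points). In the planar six‑point lemma and in the $d=3$ arguments one could draw the picture; in $\R^4$ I would instead verify the interlacing combinatorially using the parity lemma (``the number of $p_i$'s between $q$ and $q'$ is odd'') to compute each relevant hyperplane side, and only then translate into the movement‑lemma template. A secondary concern is ensuring that the segment $pq$ really is $159\cap 37B$ and that $x$ lies in its relative interior (needed for the interpolation step); this follows from orientation‑homogeneity and genericity, but I would state it explicitly. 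If the direct ``two endpoints $p,q$'' decomposition turns out to force the wrong movement direction on one endpoint, the fallback is to instead decompose along the other pair of intersection points of the triangles (i.e.\ write $159\cap 37B$ as a segment with endpoints $15\cap 37B$‑type points), choosing whichever decomposition makes $8$ the decisive mover in both halves.
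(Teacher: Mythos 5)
Your central step fails for dimensional reasons. In $\R^4$ the triangles $159$ and $37B$ are two $2$-flats whose affine hulls generically meet in a \emph{single point} (indeed $2+2-4=0$); that point is the Radon point $x$ of the six points $1,3,5,7,9,B$. Consequently the objects you propose as endpoints, $p=159\cap 37$ and $q=59\cap 37B$, are intersections of a $2$-flat with a $1$-flat in $\R^4$ and are generically \emph{empty}, and there is no segment $pq$ to interpolate along. The decomposition you are importing (``the intersection of two triangles is a segment whose endpoints are edge-through-triangle crossings'') is the $d=3$ picture from the parts-of-sizes-$3,3,3,4$ discussion, and it does not survive the lift to $d=4$. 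Your own first paragraph already notes that $159\cap 37B$ is a single point, which contradicts the claim two lines later that it equals the segment $pq$; the fallback decomposition you mention at the end has the same defect. So the reduction to two endpoint statements, and hence the entire same-side/movement-lemma interpolation, collapses at the outset.

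Beyond the dimension error, the evidence is that no such two-step same-side/movement argument suffices for this predicate. The paper's proof is a genuinely different and more delicate contradiction argument: after relabeling, it assumes $2469(7:x)$ fails to separate correctly, works inside the triangle $T=158$ (intersecting the relevant \emph{hyperplanes} $2469$, $137A$, $2689$ with $T$ to get lines --- this is the correct way to do the dimension counting in $\R^4$), pins down the configuration of $x$, $y_1=158\cap 269$, $y_2=158\cap 469$ within $T$ using the movement lemma lifted by central projection, deduces that an auxiliary hyperplane-side predicate $s=357A(1:{}^{158}_{269})$ must hold, assumes homogeneity with respect to $\Pi[s]$, and then exhibits a second instance of $s$ (via a further point $y_3=047\cap 158$ whose location is constrained by two more movement-lemma applications) that forces $\neg\Pi[s]$ --- a contradiction in the spirit of the six-point lemma but substantially more elaborate. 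If you want to repair your approach, the first task is to re-express $x$ as a point on a line of the form (hyperplane)$\,\cap\,$(triangle) inside $\R^4$, as the paper does with the line $1x = 137A\cap 158$; but even then a direct application of the same-side rule and movement lemma does not close the argument without the auxiliary predicate $\Pi[s]$ and the Ramsey-style double-instance contradiction.
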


\begin{proof}
A predicate remains the same if we relabel points preserving their relative order. So, we rewrite the predicate as $\Pi[(246A(7{}^{159}_{37B}:{})]$ and then as $\Pi[(2469(7x:{})]$ where $x\eqdef 158\cap 37A$.

\begin{figure}
\centerline{\includegraphics{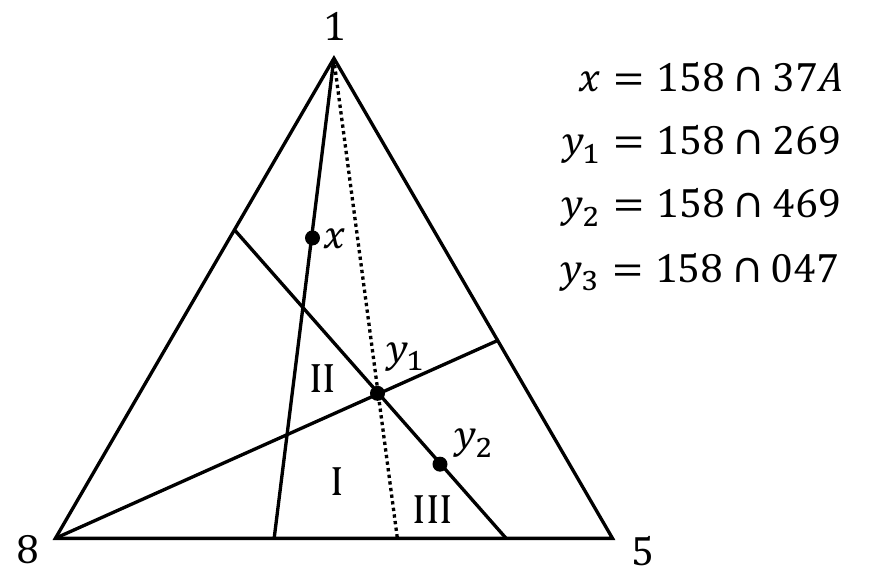}}
\caption{\label{fig_R4} Proof of Lemma~\ref{lem_4Da}. Point $y_3$ must lie in one of the regions I, II, III; more specifically, in region I.}
\end{figure}

Suppose for a contradiction that $P\eqdef (0, 1, \ldots, 9,A,B)$ is orientation-homogeneous and satisfies $2469(7:x)$. Let us look at the relative position of several points inside the triangle $T\eqdef 158$. The hyperplane $H_1 \eqdef 2469$ intersects $T$ along the line passing through $y_1 \eqdef 158 \cap 269$ and $y_2\eqdef 158\cap 469$. The assumption $H_1(7:x)$ is equivalent to $H_1(15x:8)$; hence, within $T$,
\begin{equation}\label{eq_y1y2}
y_1y_2(15x:8).
\end{equation}
Now consider the hyperplane $H_2\eqdef 137A$. It intersects $T$ along the line $1x$. Now, by the movement lemma, the predicate $\Pi[37A({}^{269}_{\,58}\,{}^{469}_{\,58}\,5:8)]$ in $\R^3$ is unavoidable. Hence, by central projection from $1$, in $\R^4$ the predicate
\begin{equation*}
\Pi[137A({}^{269}_{158}\,{}^{469}_{158}\,5:8)] \equiv \Pi[H_2(y_1y_25:8)]
\end{equation*}
is unavoidable. Let us assume $P$ avoids its negation. Hence, within $T$,
\begin{equation}\label{eq_1x}
1x(y_1y_25:8).
\end{equation}
Next, consider the hyperplane $H_3\eqdef 2689$, which intersects $T$ along the line $8y_1$. By the same-side rule we have $H_3(5y_2:1)$. Therefore, within $T$ we have
\begin{equation}\label{eq_8y1}
8y_1(5y_2:1).
\end{equation}
By (\ref{eq_y1y2}), (\ref{eq_1x}), and (\ref{eq_8y1}), the position of $x$, $y_1$, and $y_2$ within $T$ must be as in Figure~\ref{fig_R4}. From the figure we see that, within $T$, $5x(y_1:1)$. Hence, $P$ satisfies the hyperplane-separation statement
\begin{equation*}
s\eqdef 357A(1:{}^{158}_{269}).
\end{equation*}
Now, assume $P$ is homogeneous with respect to the predicate $\Pi[s]$. Then, in particular, $P$ satisfies $2469(0:y_3)$ for $y_3\eqdef 047 \cap 158$. Hence, within $T$ we have
\begin{equation}\label{eq_another_y1y2}
y_1y_2(8y_3:15).
\end{equation}
We will now try to locate $y_3$ more precisely within $T$.

By the same-side rule, we have $137A(8:5{}^{047}_{158})$. Therefore, within $T$, we have
\begin{equation}\label{eq_another_1x}
1x(8:5y_3).
\end{equation}
By (\ref{eq_another_y1y2}) and (\ref{eq_another_1x}) it follows that $y_3$ lies in one of the regions labeled I, II, III in Figure~\ref{fig_R4}.

\setcounter{claim}{0}
\begin{claim}
There exists a line through $y_1$ that separates $y_3$ from $1$ and $8$.
\end{claim}

\begin{proof}
By the movement lemma and central projection from $0$, the predicate $\Pi[0269({}^{047}_{158}5:18)]$ is unavoidable. Let us assume $P$ avoids its negation. Hence, letting $y_4 \eqdef 158\cap 026$, within $T$
we have $y_1y_4(5y_3:18)$.
\end{proof}

This rules out region II for $y_3$.

\begin{claim}
There exists a line through $y_3$ that separates $y_1$ from $1$ and $8$.
\end{claim}

\begin{proof}
By the movement lemma and central projection from $9$, the predicate $\Pi[0479({}^{269}_{158}5:18)]$ is unavoidable. So let us assume $P$ avoids its negation. Let $y_5 \eqdef 479 \cap 158$. Therefore, within $T$ we have $y_3y_5(18:5y_1)$.
\end{proof}

This rules out region III for $y_3$. Hence, $y_3$ lies in region I, which implies that $y_1$ lies inside the triangle $15y_3$.

Now, by the same-side rule, we have $457A(1y_3:{})$. Therefore, again by the same-side rule (since $5$, $1$, and $y_3$ are all on the same side of $457A$, and since $y_1\in 15y_3$),
\begin{equation*}
457A(1y_1:{}).
\end{equation*}
This, however, is an instance of $\neg \Pi[s]$. Contradiction.
\end{proof}

\begin{lemma}\label{lem_4Db}
The predicate $\Pi[248A(6{}^{159}_{37B}:{})]$ is unavoidable.
\end{lemma}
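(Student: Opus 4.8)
The plan is to mimic the proof of Lemma~\ref{lem_4Da} almost verbatim, tracking the single point whose motion differs between the two statements. After the relabeling trick used there, the predicate $\Pi[248A(6{}^{159}_{37B}:{})]$ becomes $\Pi[2489(6x:{})]$ with $x\eqdef 158\cap 37A$ (the indices $3,7,A,B$ collapse the same way: $B\mapsto 9$ inside the spanning set, $37B$ inside $x$ loses its middle point after a shift). Concretely, I would assume for contradiction that $P\eqdef(0,1,\dots,9,A,B)$ is orientation-homogeneous and satisfies $2489(6:x)$, and again study positions inside the triangle $T\eqdef 158$. The hyperplane $H_1\eqdef 2489$ meets $T$ in the line through $y_1\eqdef 158\cap 248$ and $y_2\eqdef 158\cap 489$; the assumption $H_1(6:x)$ rewrites as $H_1(15x:8)$ or $H_1(15x:{}6)$ depending on parity, yielding a relation $y_1y_2(\cdots)$ inside $T$ analogous to \eqref{eq_y1y2}. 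The point $6$ now plays the role that $7$ played before.

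The second ingredient is the movement lemma applied to the hyperplane $H_2\eqdef 137A$ (or the appropriate three-point plane), giving via central projection from $1$ an unavoidable predicate locating the line $1x$ relative to $y_1,y_2,5$ inside $T$, the analogue of \eqref{eq_1x}; and the same-side rule gives the analogue of \eqref{eq_8y1} from a third hyperplane through two of $\{2,4,8,9\}$ and two of $\{1,5,8\}$. Combining these three constraints pins down the configuration of $x,y_1,y_2$ in $T$ up to the same picture as Figure~\ref{fig_R4}, from which one reads off a hyperplane-side statement $s'$ (the analogue of $s\eqdef 357A(1:{}^{158}_{269})$, now with the index set shifted to reflect that $6$ rather than $7$ is the interlaced point). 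One then assumes $P$ is $\Pi[s']$-homogeneous, extracts the constraint coming from the extra point $0$ (namely $2489(0:y_3)$ for the appropriate $y_3\eqdef 04?\cap 158$), and locates $y_3$ in one of three regions exactly as before.

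The endgame is again two claims ruling out two of the three regions by invoking the movement lemma with central projections from $0$ and from $9$ respectively, forcing $y_3$ into region~I, hence $y_1\in 15y_3$, and then one final same-side-rule step produces an instance of $\neg\Pi[s']$, the desired contradiction. The main obstacle I anticipate is purely bookkeeping: because the interlacing pattern of $6$ against $\{1,5,9\}$ and $\{3,7,B\}$ differs from that of $7$ (e.g.\ $6$ sits between $5$ and $9$ and between $3$ and $7$, whereas $7$ sat between $5$ and $9$ and between $3$ and $B$), several parities flip, so I must recompute each hyperplane-side statement — which auxiliary point lands on which side, and which of the ``movement'' predicates is the unavoidable one — rather than copying the symbols blindly. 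In particular I expect the roles of ``above/below'' in \eqref{eq_y1y2}--\eqref{eq_8y1} and the precise index triple in $s'$ to require care, and possibly one additional same-side-rule observation not needed in Lemma~\ref{lem_4Da}; but no genuinely new geometric idea should be required, since the two statements are ``parallel'' cases of the same zigzag verification.
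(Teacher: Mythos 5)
Your proposal is essentially a promissory note to transplant the proof of Lemma~\ref{lem_4Da} wholesale, and it both starts from an incorrect reduction and misses the fact that Lemma~\ref{lem_4Db} has a much shorter, genuinely different proof. On the reduction: the original predicate $\Pi[248A(6{}^{159}_{37B}:{})]$ involves all eleven indices $1,\ldots,9,A,B$, so before any index can be dropped and the rest shifted, one occurrence must first be eliminated by a parity observation. Since no point of $\{2,4,8,A\}$ lies between $5$ and $6$, we have $248A(56:{})$, so the reference point $6$ may be replaced by $5$; dropping $6$ and shifting then yields $\Pi[2479(5x:{})]$ with $x\eqdef 158\cap 36A$. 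Your rewriting $\Pi[2489(6x:{})]$ with $x\eqdef 158\cap 37A$ is not order-isomorphic to the original: the shift $9\to 8$, $A\to 9$, $B\to A$ you describe amounts to deleting the index $8$, which is a vertex of the spanning hyperplane $248A$ and cannot be deleted, and indeed in your version the reference point lies in no triangle whereas in the correct reduction it is a vertex of $158$.

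On the strategy: the actual proof needs none of the apparatus of Lemma~\ref{lem_4Da} (no auxiliary point $0$, no statement $s'$, no homogenization with respect to $\Pi[s']$, no three-region case analysis). Working inside $T\eqdef 158$ with $z_1\eqdef 158\cap 479$, the same-side rule gives $356A(8z_1:1)$, i.e.\ within $T$ the line $5x$ has $8$ and $z_1$ on the side opposite $1$; the movement lemma with central projection from $1$ gives $136A(8\,{}^{158}_{479}:5)$, i.e.\ within $T$ the line $1x$ has $8$ and $z_1$ on the side opposite $5$. Together these place $z_1$ in the vertical angle at $x$ opposite the cone toward $1$ and $5$, so $x\in\conv\{1,5,z_1\}$, and therefore no line through $z_1$ --- in particular the trace $z_1z_2$ of the hyperplane $2479$ on $T$, with $z_2\eqdef 158\cap 279$ --- can separate $x$ from both $1$ and $5$. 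That single observation refutes $2479(x:15)$ and finishes the proof. Your plan instead defers all geometric content to ``recomputing parities'' in the 4Da endgame, but that endgame relies on specific facts (that regions II and III can be excluded, that region I forces $y_1\in 15y_3$, and that this contradicts the auxiliary predicate) which you have not verified survive the changed interlacing pattern; there is no reason to expect they do, and the existence of the short direct argument above suggests the configurations really are not parallel. So the gap is substantive: you have named the right tools (relabeling, the movement lemma, the same-side rule, projection into the triangle $158$) but not the argument.
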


\begin{proof}
Rewrite the predicate as $\Pi[248A(5{}^{159}_{37B}:{})]$ and then as $\Pi[2479(5x:{})]$ where $x\eqdef 158\cap 36A$.

As before, let $P\eqdef (1,\ldots,9,A)$ be orientation-homogeneous, and let us look at the relative position of several points within the triangle $T\eqdef 158$. Let $z_1\eqdef 158\cap 479$. By the same side rule, we have $356A(8z_1:1)$. Hence, within $T$ we have $5x(8z_1:1)$. Furthermore, by the movement lemma and central projection from $1$, we have $136A(8{}^{158}_{479}:5)$. Therefore, within $T$ we have $1x(8z_1:5)$. See Figure~\ref{fig_R4b}.

\begin{figure}
\centerline{\includegraphics{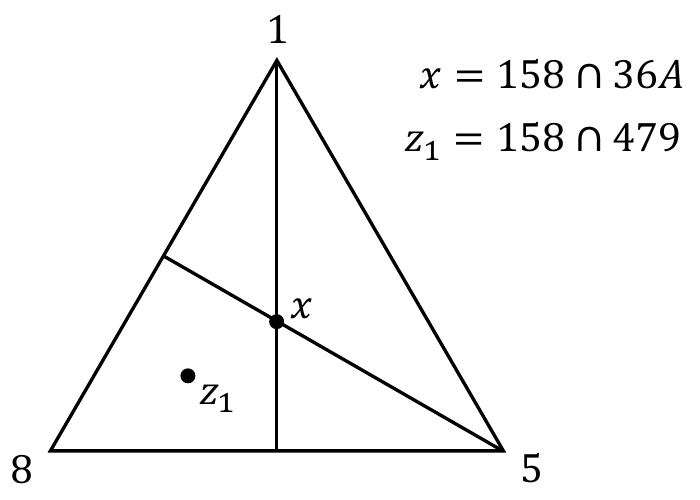}}
\caption{\label{fig_R4b}Proof of Lemma~\ref{lem_4Db}. As we can see, no line through $z_1$ can separate $x$ from $1$ and $5$.}
\end{figure}

Now, suppose for a contradiction that $2479(x:15)$. Then, within $T$ we would have $z_1z_2(x:15)$ for$z_2\eqdef 158\cap 279$; in other words, $x$ would be separated from $1$ and $5$ by a line through $z_1$. But, as we see from Figure~\ref{fig_R4b}, this is impossible.
\end{proof}

\section{The stretched grid and stair-convexity}\label{sec_stair}

In this section we recall the definition of the \emph{stretched diagonal}, and we prove that it is homogeneous with respect to all Tverberg-type predicates,\footnote{The stretched diagonal can be made homogeneous with respect to \emph{any} finite set of semialgebraic predicates, by simply stretching it enough.} and moreover, that the Tverberg types that occur in it are precisely the colorful ones. This constitutes the proof of Theorem~\ref{thm_stretched-diag}.

The stretched diagonal is a subset of a more general construction called the \emph{stretched grid}. The stretched grid yields our result regarding Sierksma's conjecture mentioned in the Introduction.
Hence, we start by describing the stretched grid, and then we go on to the stretched diagonal.
The \emph{stretched grid}, previously introduced in \cite{BMN_epsilonnets,BN_centerlines,N_thesis}, is an axis-parallel grid of points
where, in each direction $i$, $2\le i \le d$, the spacing
between consecutive ``layers'' increases rapidly, and
furthermore, the rate of increase for direction $i$ is much
larger than that for direction $i-1$. To simplify calculations,
we also make the coordinates increase rapidly in the first
direction.

The definition is as follows: Given $n$, the desired
number of points, let $m \eqdef n^{1/d}$ be the side of the grid
(assume for simplicity that this quantity is an integer), and
let
\begin{equation}\label{eq_Gs}
\Gs \eqdef \bigl\{(K_1^{a_1}, K_2^{a_2}, \ldots, K_d^{a_d}) : a_i
\in \{0,\ldots, m-1\} \text{ for all $1\le i\le d$} \bigr\},
\end{equation}
for some appropriately chosen constants $1<K_1 \ll K_2 \ll K_3
\ll \cdots \ll K_d$. Each constant $K_i$ must be chosen
appropriately large in terms of $K_{i-1}$ and in terms of $m$.
Specifically:
\begin{equation}\label{eq_Ki}
K_1 = 2; \qquad K_i \ge 2d^2K_{i-1}^m \quad \text{for $2\le i\le m$.}
\end{equation}

A \emph{vertical projection onto $\R^{d-1}$} is obtained by removing the last coordinate.
We refer to the $d$-th coordinate as the
``height'', so we call a hyperplane in $\R^d$ \emph{horizontal} if
all its points have the same last coordinate; and we call a line in
$\R^d$ \emph{vertical} if its vertical projection is a single point. The \emph{$i$-th
horizontal layer of $\Gs$} is the subset of $\Gs$ obtained by
letting $a_d = i$ in (\ref{eq_Gs}).

The following lemma
provides the motivation for the stretched grid:

\begin{lemma}\label{lemma_property_Gs}
Let $a \in \Gs$ be a point at horizontal layer $0$, and let
$b\in \Gs$ be a point at horizontal layer $i$. Let $c$ be the
point of intersection between segment $ab$ and the horizontal
hyperplane containing layer $i-1$. Then $|c_j - a_j| \le 1/d^2$ for
every $1\le j\le d-1$.
\end{lemma}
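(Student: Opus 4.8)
The plan is a short direct computation along the segment $ab$, carried out one coordinate at a time. First I would parametrize $c = a + t(b-a)$ with $t\in[0,1]$. Since $a$ lies in horizontal layer $0$, its last coordinate is $K_d^{0}=1$; since $b$ lies in layer $i$, its last coordinate is $K_d^{i}$; and $c$ lies in the hyperplane $\{x_d = K_d^{i-1}\}$ containing layer $i-1$. Solving $1 + t(K_d^{i}-1) = K_d^{i-1}$ gives
\[
  t = \frac{K_d^{i-1}-1}{K_d^{i}-1}.
\]
(If $i=1$ then $c=a$ and there is nothing to prove, so I may assume $i\ge 2$; then $1 < K_d^{i-1} < K_d^{i}$, so $t\in(0,1)$ and $c$ is genuinely an interior point of the segment.)

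The first key estimate is the upper bound $t \le 1/(K_d-1)$, which follows from $K_d^{i}-1 \ge K_d^{i}-K_d^{i-1} = K_d^{i-1}(K_d-1)$. The second observation is that for each fixed $j$ with $1\le j\le d-1$, the $j$-th coordinates of $a$ and $b$ are both of the form $K_j^{a_j}$ with $0\le a_j\le m-1$, so $|b_j - a_j| < K_j^{m-1}$. Combining these,
\[
  |c_j - a_j| = t\,|b_j - a_j| \;<\; \frac{K_j^{m-1}}{K_d-1}.
\]
To finish I would invoke the rapid growth built into (\ref{eq_Ki}): using the monotonicity $K_j \le K_{d-1}$ and $K_j \ge K_1 = 2$ we get $K_d \ge 2d^2 K_{d-1}^{m} \ge 2d^2 K_j^{m} \ge 4d^2 K_j^{m-1}$, hence $K_d - 1 \ge d^2 K_j^{m-1}$, and therefore $|c_j - a_j| < 1/d^2$, which is exactly the claim.

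There is essentially no obstacle here: this is one of the ``obvious but tedious'' facts that underpin the stretched grid, and the only care needed is bookkeeping — handling the degenerate case $i=1$, and checking that the slack in (\ref{eq_Ki}) suffices for every coordinate direction $j\le d-1$ simultaneously, which it does precisely because $K_{d-1}$ dominates every earlier $K_j$. This coordinatewise estimate is what makes the stretched grid emulate stair-convex geometry, and it is the computational core behind the transference argument of Appendix~\ref{app_transference}.
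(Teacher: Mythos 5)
Your computation is correct and is exactly the ``simple calculation'' the paper alludes to without writing out: the paper offers no proof of Lemma~\ref{lemma_property_Gs} beyond the remark that the constants in (\ref{eq_Ki}) were chosen to make it work. Your bounds $t\le 1/(K_d-1)$, $|b_j-a_j|<K_j^{m-1}$, and $K_d-1\ge d^2K_j^{m-1}$ all check out (using that the $K_i$ are increasing, so $K_j\le K_{d-1}$), and the degenerate case $i=1$ is handled properly.
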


Lemma~\ref{lemma_property_Gs} follows from a simple calculation
(we chose the constants $K_i$ in (\ref{eq_Ki}) large enough to
make this and later calculations work out).
The grid $\Gs$ is hard to visualize, so we apply to it a
logarithmic mapping $\pi$ that converts $\Gs$ into the uniform
grid in the unit cube.
Formally, let $\B \eqdef [1,K_1^{m-1}] \times \cdots \times [1,
K_d^{m-1}]$ be the bounding box of the stretched grid, let
$[0,1]^d$ be the unit cube in $\R^d$, and define the mapping
$\pi \colon \B \to [0,1]^d$ by
\begin{equation*}
\pi(x) \eqdef \left( \frac{\log_{K_1} x_1}{m-1}, \ldots, \frac{\log_{K_d} x_d}
{m-1} \right).
\end{equation*}

Then, it is clear that $\pi(\Gs)$ is the uniform grid in
$[0,1]^d$.

\begin{figure}
\centerline{\includegraphics{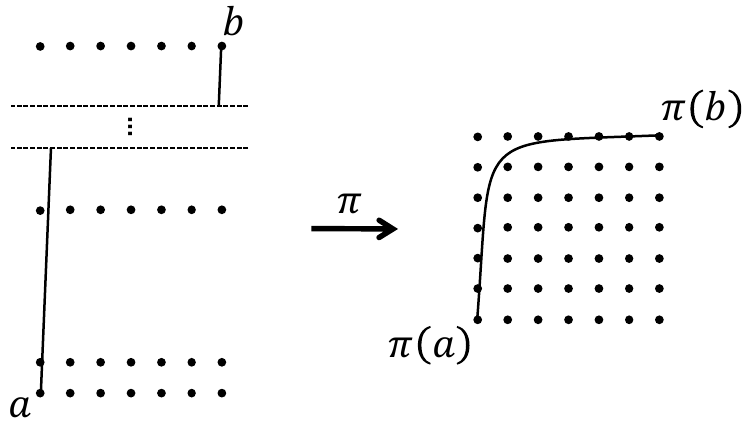}}
\caption{\label{fig_Gs_pi}The stretched grid and the mapping $\pi$ in the plane. The stretched grid is too tall to be drawn entirely, so an intermediate portion of it has been omitted. A line segment connecting two points is also shown, as well as its image under $\pi$. (The first coordinate of the stretched grid does not increase geometrically in this picture.)}
\end{figure}

Lemma~\ref{lemma_property_Gs} implies that the map $\pi$
transforms straight-line segments into curves composed of
almost-straight axis-parallel parts: Let $s$ be a straight-line
segment connecting two points of $\Gs$. Then $\pi(s)$ ascends
almost vertically from the lower endpoint, almost reaching the
height of the higher endpoint, before moving significantly in
any other direction; from there, it proceeds by induction. See Figure~\ref{fig_Gs_pi}.

This observation motivates the notions of
\emph{stair-convexity}, which describe, in a sense, the limit
behavior of $\pi$ as $m\to\infty$.

\subsection{Stair-convexity}

Given a pair of points $a, b\in \R^d$, the \emph{stair-path}
$\sigma(a,b)$ between them is a polygonal path connecting $a$
and $b$ and consisting of at most $d$ closed line segments,
each parallel to one of the coordinate axes. The definition
goes by induction on $d$; for $d=1$, $\sigma(a,b)$ is simply
the segment $ab$. For $d\ge 2$, after possibly interchanging
$a$ and $b$, let us assume $a_d\le b_d$. We set $a' \eqdef (a_1,
\ldots, a_{d-1}, b_d)$, and we let $\sigma(a,b)$ be the union
of the segment $aa'$ and the stair-path $\sigma(a', b)$; for
the latter we use the recursive definition, ignoring the common
last coordinate of $a'$ and $b$.
Note that, if $c$ and $d$ are points along $\sigma(a, b)$, then
$\sigma(c, d)$ coincides with the portion of $\sigma(a, b)$
that lies between $c$ and $d$.

We call a set $S\subseteq \R^d$ \emph{stair-convex} if
for every $a, b\in S$ we have $\sigma(a, b) \subseteq S$.
For a real number $y$ let
$h(y)$ denote the ``horizontal'' hyperplane $\{x\in\R^d: x_d=y\}$.
For a horizontal hyperplane $h\eqdef h(y)$, let $h^+\eqdef\{x\in\R^d: x_d\ge
y\}$ be the upper closed half-space bounded by $h$, and let $h^-$ be
the lower closed half-space. For a set $S\subseteq\R^d$ let $S(y)\eqdef
S\cap h(y)$ be the horizontal slice of~$S$ at height $y$.

For a point $x\eqdef (x_1,\ldots,x_d)\in \R^d$, let $\overline x \eqdef (x_1,
\ldots, x_{d-1})$ be the projection of $x$ into $\R^{d-1}$, and
define $\overline S$ for $S \subset \R^d$ similarly. For a point
$x\in \R^{d-1}$ and a real number $x_d$, let $x \times x_d \eqdef (x_1,
\ldots, x_{d-1}, x_d)$, with a slight abuse of notation.

\begin{lemma}[\cite{BMN_epsilonnets}]
A set $S\subset \R^d$ is
stair-convex if and only if the following two conditions hold:
\begin{enumerate}
\item Every horizontal slice $\overline{S(y)}$ is stair-convex.
\item For
every $y_1 \le y_2 \le y_3$ such that $S(y_3) \neq\emptyset$ we
have $\overline{S(y_1)} \subseteq \overline{S(y_2)}$ (meaning, the horizontal slice
can only grow with increasing height, except that it can end by
disappearing abruptly).
\end{enumerate}
\end{lemma}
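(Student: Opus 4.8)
The plan is to prove both implications directly; the only real content is a careful bookkeeping of the recursive definition of the stair-path $\sigma$. The key structural observation I would record first: if $a_d\le b_d$ and $a'\eqdef(a_1,\ldots,a_{d-1},b_d)$, then $\sigma(a,b)$ splits as the union of the \emph{vertical} segment $aa'$ --- whose points are exactly $\overline a\times y$ for $y\in[a_d,b_d]$ --- and the path $\sigma(a',b)$, which lies entirely in the horizontal hyperplane $h(b_d)$ and whose vertical projection onto $\R^{d-1}$ is precisely $\sigma(\overline a,\overline b)$. With this in hand both directions become short.

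For the forward direction, assume $S$ is stair-convex. For Condition~1, fix $y$ with $S(y)\ne\emptyset$ and $\overline a,\overline b\in\overline{S(y)}$; then $a\eqdef\overline a\times y$ and $b\eqdef\overline b\times y$ lie in $S$, so $\sigma(a,b)\subseteq S$, and since $a_d=b_d=y$ this path lies in $h(y)$ with projection $\sigma(\overline a,\overline b)$; hence $\sigma(\overline a,\overline b)\subseteq\overline{S(y)}$, so $\overline{S(y)}$ is stair-convex. For Condition~2, fix $y_1\le y_2\le y_3$ with $S(y_3)\ne\emptyset$; if $\overline a\in\overline{S(y_1)}$, pick any $b\in S(y_3)$ and look at $\sigma(\overline a\times y_1,\,b)$, which contains the vertical segment running from $\overline a\times y_1$ up to height $y_3$, and in particular the point $\overline a\times y_2$; stair-convexity forces this point into $S$, so $\overline a\in\overline{S(y_2)}$.

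For the reverse direction, assume Conditions~1 and~2 and take $a,b\in S$ with (after possibly interchanging) $a_d\le b_d$. Using the splitting $\sigma(a,b)=aa'\cup\sigma(a',b)$: Condition~2, applied to the triple $a_d\le y\le b_d$ and noting $S(b_d)\ni b$ is nonempty, gives $\overline{S(a_d)}\subseteq\overline{S(y)}$ for every $y\in[a_d,b_d]$, hence $\overline a\times y\in S$ for all such $y$; this shows $aa'\subseteq S$ and, taking $y=b_d$, that $a'=\overline a\times b_d\in S(b_d)$. Then $\overline{a'}=\overline a$ and $\overline b$ both lie in $\overline{S(b_d)}$, which is stair-convex by Condition~1, so $\sigma(\overline a,\overline b)\subseteq\overline{S(b_d)}$, i.e.\ $\sigma(a',b)\subseteq S(b_d)\subseteq S$; combining, $\sigma(a,b)\subseteq S$, as desired.

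The proof has no genuine obstacle; the only care needed is with the recursion and degenerate cases --- that $\sigma(a',b)$ is literally $\sigma(\overline a,\overline b)$ lifted to height $b_d$, that the case $a_d=b_d$ collapses $aa'$ to a single point, and that empty slices make both conditions hold vacuously. I would emphasize in particular that \emph{no} induction on $d$ is required: Condition~1 already quantifies over the $(d-1)$-dimensional notion of stair-convexity, so the recursive structure of the stair-path is absorbed without an inductive hypothesis.
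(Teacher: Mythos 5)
Your proof is correct. The paper itself only cites this lemma from \cite{BMN_epsilonnets} without reproving it, so there is no in-paper argument to compare against; your direct verification --- splitting $\sigma(a,b)$ into the vertical segment $aa'$ (handled by Condition~2) and the lifted lower-dimensional path $\sigma(a',b)$ (handled by Condition~1) --- is the natural and standard argument, and your observation that no induction on $d$ is needed because Condition~1 already quantifies over $(d-1)$-dimensional stair-convexity is accurate.
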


Since the intersection of stair-convex sets is obviously
stair-convex, we can define the \emph{stair-convex hull}
$\sconv(S)$ of a set $S\subseteq \R^d$ as the intersection of all
stair-convex sets containing~$S$.

\begin{lemma}[\cite{BMN_epsilonnets}]\label{lemma_sconv}
The stair-convex hull can be characterized by induction on $d$ in the following way: Let $S\subseteq \R^d$, and let $T \eqdef \sconv(S)$. Then for every $y\in \R$, if $h(y)^+ \cap S = \emptyset$, then $T(y) = \emptyset$; otherwise, we have $\overline{T(y)} = \sconv\bigl(\overline{S\cap h^-}\bigr)$. (In other words, the horizontal slice of $T$ at height $y$ is obtained inductively by taking the stair-convex hull in dimension $d-1$ of all points not above height $y$---unless $h(y)$ is strictly above all of $S$, in which case the slice is empty.)
\end{lemma}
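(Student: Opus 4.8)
\textbf{Proof proposal for Lemma~\ref{lemma_sconv}.}

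The plan is to prove the two halves of the characterization separately. For a set $S\subseteq\R^d$ write $T\eqdef\sconv(S)$, fix $y\in\R$, and let $U(y)$ denote the candidate slice: $U(y)=\emptyset$ if $h(y)^+\cap S=\emptyset$, and otherwise $U(y)$ is the subset of $h(y)$ whose projection is $\sconv\bigl(\overline{S\cap h(y)^-}\bigr)$. I want to show $T(y)=U(y)$ for every $y$. First I would verify the \emph{upper bound} $T(y)\supseteq U(y)$ (equivalently, the set $\bigcup_y U(y)$ is stair-convex and contains $S$, hence contains $\sconv(S)=T$). Then I would verify the \emph{lower bound} $T(y)\subseteq U(y)$ (equivalently, the set $\bigcup_y U(y)$ is contained in every stair-convex set containing $S$).

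\emph{Step 1: $\bigcup_y U(y)$ contains $S$.} This is immediate: if $p\in S$ with $p_d=y_0$, then $h(y_0)^+\cap S\ni p$ is nonempty, so $U(y_0)$ is defined, and $\overline p\in\overline{S\cap h(y_0)^-}\subseteq\sconv\bigl(\overline{S\cap h(y_0)^-}\bigr)$, so $p\in U(y_0)$.

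\emph{Step 2: $V\eqdef\bigcup_y U(y)$ is stair-convex.} Here I would invoke the earlier slice characterization of stair-convexity (Condition~1: every horizontal slice is stair-convex; Condition~2: slices grow monotonically with height, up to a final abrupt disappearance). Condition~1 holds by construction, since $\overline{V(y)}=\sconv(\cdot)$ is a stair-convex hull in $\R^{d-1}$. For Condition~2, suppose $y_1\le y_2\le y_3$ with $V(y_3)\neq\emptyset$. Then $h(y_3)^+\cap S\neq\emptyset$; since $h(y_2)^+\supseteq h(y_3)^+$ and $h(y_1)^+\supseteq h(y_2)^+$, also $V(y_1),V(y_2)$ are nonempty, and $S\cap h(y_1)^-\subseteq S\cap h(y_2)^-$, so by monotonicity of the stair-convex hull operator $\overline{V(y_1)}=\sconv(\overline{S\cap h(y_1)^-})\subseteq\sconv(\overline{S\cap h(y_2)^-})=\overline{V(y_2)}$. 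Hence $V$ is stair-convex, and combined with Step~1 this gives $T=\sconv(S)\subseteq V$, i.e. $T(y)\subseteq U(y)$ for all $y$.

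\emph{Step 3: $V\subseteq T$.} Let $W$ be any stair-convex set with $S\subseteq W$; I must show $V\subseteq W$, and then $V\subseteq\sconv(S)=T$ follows. Fix $y$ with $U(y)\neq\emptyset$ and a point $p\in U(y)$, so $h(y)^+\cap S\neq\emptyset$, say $q\in S$ with $q_d\ge y$, and $\overline p\in\sconv\bigl(\overline{S\cap h(y)^-}\bigr)$. Applying Condition~2 to $W$ with heights $\le y\le q_d$: for every height $y'\le y$ with $W(y')\neq\emptyset$ we get $\overline{W(y')}\subseteq\overline{W(y)}$. In particular $\overline{S\cap h(y)^-}\subseteq\overline{W\cap h(y)^-}\subseteq\overline{W(y)}$ (the last inclusion is exactly Condition~2 applied to $W$, using $W(y)\supseteq W(q_d)\ni q\neq\emptyset$ when $q_d=y$, or $q_d>y$ and the slice at $q_d$ nonempty). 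Since $\overline{W(y)}$ is stair-convex (Condition~1 for $W$), it contains $\sconv\bigl(\overline{S\cap h(y)^-}\bigr)\ni\overline p$, hence $p\in W(y)\subseteq W$. Therefore $V\subseteq W$ as desired, completing $T(y)\supseteq U(y)$ and hence $T(y)=U(y)$.

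\emph{Main obstacle.} The delicate point is the inductive bookkeeping around the ``abrupt disappearance'' clause: one must be careful that $h(y)^+\cap S=\emptyset$ is exactly the condition under which the slice of \emph{any} stair-convex superset of $S$ must already be empty at height $y$ (via Condition~2 of the slice characterization), and conversely that whenever $h(y)^+\cap S\neq\emptyset$ the slice of $T$ is genuinely the full $(d-1)$-dimensional stair-convex hull, not something smaller. Both directions rest on the monotonicity Condition~2, so the real work is just checking that every invocation of that condition is legitimate (the relevant slices are nonempty). Everything else is a routine application of the slice characterization of stair-convexity together with the elementary fact that $\sconv$ is monotone under inclusion.
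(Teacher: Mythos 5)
The paper itself gives no proof of this lemma---it is quoted from \cite{BMN_epsilonnets}---so there is no in-paper argument to compare against; judged on its own, your proof is correct and is the natural one. You define the candidate set $V=\bigcup_y U(y)$, show it is a stair-convex superset of $S$ (hence $T=\sconv(S)\subseteq V$, since $\sconv(S)$ is the intersection of all stair-convex supersets), and show $V$ is contained in every stair-convex superset $W$ of $S$ (hence $V\subseteq T$); both halves reduce cleanly to the two-condition slice characterization, with the monotonicity Condition~2 carrying the load exactly where you say it does, and the nonemptiness hypotheses needed to invoke it are all checked. Two cosmetic slips worth fixing: your opening paragraph attaches the label ``upper bound $T(y)\supseteq U(y)$'' to the argument whose parenthetical (and whose execution in Step~2) actually establishes $T(y)\subseteq U(y)$, and symmetrically for the ``lower bound''---Steps~2 and~3 then draw the correct conclusions, so this is only a labeling inconsistency. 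Also, in Step~2 the claim that $V(y_1),V(y_2)$ are nonempty is neither needed nor always true (e.g.\ when $S\cap h(y_1)^-=\emptyset$), but since $\sconv(\emptyset)=\emptyset$ the inclusion $\overline{V(y_1)}\subseteq\overline{V(y_2)}$ holds regardless, so the argument stands.
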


\begin{corollary}[Axis-parallel closedness]\label{cor_sconv_closed}
Let $S\subset \R^d$ be a finite point set, let $x\in \sconv(S)$, and let $1\le i\le d$. Let $p$ be the point of $S$ with largest $i$-coordinate that satisfies $p_i\le x_i$, and let $q$ be the point of $S$ with smallest $i$-th coordinate that satisfies $q_i\ge x_i$. Then, if we replace the $i$-th coordinate of $x$ by any real number $p_i\le t\le q_i$, the new point will still belong to $\sconv(S)$.
\end{corollary}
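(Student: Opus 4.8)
The plan is to deduce Corollary~\ref{cor_sconv_closed} from the inductive characterization of the stair-convex hull in Lemma~\ref{lemma_sconv}, arguing by induction on $d$. For $d=1$ the stair-convex hull of a finite set $S\subset\R$ is the segment $[\min S,\max S]$, so moving the (only) coordinate of $x$ anywhere between the nearest points of $S$ on either side keeps it in this segment; this settles the base case and also handles the direction $i=d$ in the induction step via a separate slice argument described below.

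For the induction step, fix $d\ge 2$ and distinguish the cases $i<d$ and $i=d$. When $i<d$, write $y\eqdef x_d$. Since $x\in\sconv(S)=T$, in particular $T(y)\neq\emptyset$, so by Lemma~\ref{lemma_sconv} we have $h(y)^+\cap S\neq\emptyset$ and $\overline{T(y)}=\sconv\bigl(\overline{S\cap h^-}\bigr)$, where $h\eqdef h(y)$. Thus $\overline x\in\sconv(\overline{S\cap h^-})$, a stair-convex hull in $\R^{d-1}$ of the finite point set $\overline{S\cap h^-}$. The key observation is that the points $p,q\in S$ selected in the statement (nearest in the $i$-coordinate on each side of $x_i$, with $i\le d-1$) can be taken to lie in $S\cap h^-$: indeed, any point of $S$ lies in $S\cap h^-$ or has strictly larger last coordinate, and since $S$ is generic one can choose $p,q$ among the points at height $\le y$ — more carefully, one applies the induction hypothesis to $\overline{S\cap h^-}$ in dimension $d-1$ with coordinate index $i$, noting that the optimal $p',q'$ for that lower-dimensional problem have $i$-coordinates at least as tight as the claimed $p_i,q_i$, so that moving the $i$-th coordinate of $\overline x$ within $[p_i,q_i]$ keeps $\overline x$ in $\sconv(\overline{S\cap h^-})=\overline{T(y)}$; the last coordinate of $x$ is untouched, so the new point remains in $T(y)\subseteq T$.

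When $i=d$, we use condition~2 of the slice characterization directly: if we raise the $d$-th coordinate of $x$ from $x_d$ to some $t$ with $x_d\le t\le q_d$ (where $q$ is the lowest point of $S$ at height $\ge x_d$), then for all heights $y'$ with $x_d\le y'\le t$ we have $y'\le q_d$, so $h(y')^+\cap S\supseteq\{q\}\neq\emptyset$, and hence $T(y')\neq\emptyset$ with $\overline{T(y')}=\sconv(\overline{S\cap h(y')^-})\supseteq\sconv(\overline{S\cap h(x_d)^-})=\overline{T(x_d)}\ni\overline x$; thus the lifted point $\overline x\times t$ lies in $T(t)\subseteq T$. Lowering from $x_d$ down to $p_d$ is symmetric, using that the slice $\overline{T(y')}$ still contains $\overline x$ because $\overline x\in\sconv(\overline{S\cap h(y')^-})$ whenever $y'\ge p_d$ (the point $p$ is then still below height $y'$, and no point of $S$ between $p_d$ and $x_d$ was added, so by axis-parallel-closedness in dimension $d-1$, obtained from the induction hypothesis applied to a nested sequence of point sets, $\overline x$ stays in the hull).

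The main obstacle I anticipate is the bookkeeping in the $i<d$ case: one must verify that the nearest-neighbor points $p,q$ chosen globally in $S$ really do control the $i$-th coordinate inside the lower-dimensional set $\overline{S\cap h^-}$, rather than some tighter pair appearing only after projection. This is where genericity of $S$ and the precise statement of the induction hypothesis (which bounds the movement by the nearest points of the \emph{current} set) need to be invoked cleanly; the geometric content is entirely routine once the indexing is set up, so I would present this step carefully and treat the rest as immediate from Lemma~\ref{lemma_sconv}.
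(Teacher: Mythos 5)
Your proof is correct. The paper itself offers no proof of Corollary~\ref{cor_sconv_closed} at all -- it is stated as an immediate consequence of Lemma~\ref{lemma_sconv} -- so your induction on $d$ via the slice characterization is a legitimate way to fill the gap, and both your $i<d$ and $i=d$ cases go through. Two small remarks. First, the sentence claiming the lower-dimensional witnesses $p',q'$ are ``at least as tight'' as $p,q$ reads backwards: the point you actually need (and which is true) is that passing to the subset $S\cap h^-$ can only push the nearest neighbours in coordinate $i$ \emph{farther} from $x_i$, so $[p_i,q_i]\subseteq[p'_i,q'_i]$ and the induction hypothesis applies; genericity of $S$ plays no role and need not be invoked. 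Likewise, in the lowering step the relevant fact is simply that $S\cap h(t)^-=S\cap h(x_d)^-$ for all $t\in[p_d,x_d]$ (no point of $S$ is \emph{removed}, by maximality of $p_d$), so the slice is literally unchanged and no appeal to the induction hypothesis is needed there. Second, there is a shorter route the authors likely had in mind: by Lemma~\ref{lemma_Carath_stair}, membership in $\sconv(S)$ is witnessed by one point of each type $0,1,\dots,d$, and each witness for $x$ remains a witness after replacing $x_i$ by $t\in[p_i,q_i]$ -- a type-$j$ witness with $j\ne i$ constrains coordinate $i$ only by $b_i\le x_i$, which forces $b_i\le p_i\le t$, while a type-$i$ witness has $b_i\ge x_i$, forcing $b_i\ge q_i\ge t$. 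That argument is three lines and avoids the induction entirely, though it relies on a lemma stated after the corollary in the paper.
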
 

Let $a\in \R^d$ be a fixed point, and let $b\in \R^d$ be
another point. We say that $b$ has \emph{type $0$ with respect
to $a$} if $b_i\le a_i$ for every $1\le i\le d$. For $1\le j\le
d$ we say that $b$ has \emph{type $j$ with respect to $a$} if
$b_j\ge a_j$ but $b_i\le a_i$ for every $i$ satisfying $j+1\le
i\le d$. (It might happen that $b$ has more than one type with
respect to $a$, but only if some of the above inequalities are
equalities.)

The following lemma is the stair-convex analogue of Carath\'eodory's theorem: 

\begin{lemma}[\cite{BMN_epsilonnets}]\label{lemma_Carath_stair}
Let $S \subseteq \R^d$ be a point set, and let $x\in\R^d$ be a
point. Then $x\in\sconv(S)$ if and only if $S$ contains a point
of type $j$ with respect to $x$ for every $j=0,1,\ldots,d$.
\end{lemma}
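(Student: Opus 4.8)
\textbf{Proof plan for the stair-convex Carathéodory lemma (Lemma~\ref{lemma_Carath_stair}).}
The plan is to prove both directions by induction on $d$, using the structural characterization of $\sconv(S)$ from Lemma~\ref{lemma_sconv}. The base case $d=1$ is immediate: $\sconv(S)$ is the smallest interval containing $S$, so $x\in\sconv(S)$ iff $S$ has a point $\le x$ (type $0$) and a point $\ge x$ (type $1$).

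For the inductive step, write $y\eqdef x_d$ and split along the horizontal hyperplane $h(y)$. First I would handle the easy implication. Suppose $x\in\sconv(S)$. By Lemma~\ref{lemma_sconv}, since $T(y)\ne\emptyset$ where $T\eqdef\sconv(S)$, we must have $h(y)^+\cap S\ne\emptyset$; any point of $S$ in the upper halfspace has $d$-th coordinate $\ge x_d$, and so (combined with the remaining coordinates being unconstrained is not quite enough — one needs a point of type $d$, i.e.\ also $b_i\le x_i$ for $\ldots$ vacuously, since for type $d$ the only condition is $b_d\ge x_d$). So any point of $h(y)^+\cap S$ is a point of type $d$ with respect to $x$. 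Also by Lemma~\ref{lemma_sconv}, $\overline{x}\in\overline{T(y)}=\sconv(\overline{S\cap h^-})$, so by the inductive hypothesis applied in $\R^{d-1}$, the set $\overline{S\cap h^-}$ contains a point of type $j$ with respect to $\overline{x}$ for every $j=0,1,\ldots,d-1$; lifting these back to $\R^d$, the corresponding points of $S$ have $d$-th coordinate $\le x_d$ (since they lie in $h^-$), hence they have type $j$ with respect to $x$ in $\R^d$ for each $j=0,\ldots,d-1$. Together with the type-$d$ point this gives all types $0,\ldots,d$.

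For the converse, suppose $S$ contains a point $b^{(j)}$ of type $j$ with respect to $x$ for each $j=0,\ldots,d$. The point $b^{(d)}$ satisfies $b^{(d)}_d\ge x_d$, so $h(y)^+\cap S\ne\emptyset$ and Lemma~\ref{lemma_sconv} tells us $T(y)\ne\emptyset$ with $\overline{T(y)}=\sconv(\overline{S\cap h^-})$. It remains to check $\overline{x}\in\sconv(\overline{S\cap h^-})$, for which I would verify that $\overline{S\cap h^-}$ contains, for each $j=0,\ldots,d-1$, a point of type $j$ with respect to $\overline{x}$, and then invoke the inductive hypothesis. The points $b^{(0)},b^{(1)},\ldots,b^{(d-1)}$ all satisfy $b^{(j)}_d\le x_d$ by definition of type (the condition "$b_i\le a_i$ for $j+1\le i\le d$" includes $i=d$ whenever $j\le d-1$), so they lie in $S\cap h^-$, and their projections $\overline{b^{(j)}}$ have type $j$ with respect to $\overline{x}$ in $\R^{d-1}$ directly from the definition. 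This completes the induction.

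The main obstacle is purely bookkeeping: one must be careful that the type-$j$ conditions in dimension $d$ restrict to exactly the type-$j$ conditions in dimension $d-1$ after projection, and in particular that types $0$ through $d-1$ all force $b_d\le x_d$ (so the witnesses live in $h^-$), while type $d$ is the unique type that allows $b_d\ge x_d$ and thus witnesses nonemptiness of the upper halfspace. A minor subtlety is the non-uniqueness of types when inequalities are equalities, but since we only ever need the \emph{existence} of a witness of each type this causes no difficulty. No genericity or general-position hypothesis is needed anywhere.
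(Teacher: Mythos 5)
Your argument is correct. Note that the paper itself does not prove Lemma~\ref{lemma_Carath_stair}; it is quoted from \cite{BMN_epsilonnets}, so there is no in-paper proof to compare against. Your induction on $d$ via the slice characterization of Lemma~\ref{lemma_sconv} is sound and complete: the bookkeeping that types $0,\dots,d-1$ force $b_d\le x_d$ (so their witnesses project into $\sconv\bigl(\overline{S\cap h^-}\bigr)$) while type $d$ exactly witnesses $h(y)^+\cap S\neq\emptyset$ is precisely the content of the lemma, and the final step identifying $\overline{x}\in\overline{T(y)}$ with $x\in T$ is immediate since $T(y)\subseteq h(y)$.
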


The following is a simple claim on regular convexity and its stair-convex analogue:

\begin{lemma}[\cite{N_thesis}]\label{lemma_walk}
\ 

\begin{enumerate}
\item
Let $p\in \R^d$ be a point contained in $\conv(Q)$ for some $Q\subseteq R^d$. Then there exists a $k\le d+1$ and there exist points $q_1,\ldots, q_k\in Q$ and $r_1, \ldots, r_k\in \R^d$ such that $r_1=q_1$, $r_k=p$, and for every $2\le i\le k$ the point $r_i$ lies in the segment $r_{i-1}q_i$. (In other words, we can get to $p$ by starting at $q_1$ and ``walking'' towards $q_2, q_3, \ldots, q_k$ in succession.)

\item
Let $p\in \R^d$ be a point contained in $\sconv(Q)$ for some $Q\subseteq R^d$. Then there exists a $k\le d+1$ and there exist points $q_1,\ldots, q_k\in Q$ and $r_1, \ldots, r_k\in \R^d$ such that $r_1=q_1$, $r_k=p$, and for every $2\le i\le k$ we have $r_i\in\sigma(r_{i-1},q_i)$.
\end{enumerate}
\end{lemma}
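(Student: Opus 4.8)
For part (1) the plan is the classical normalized-partial-sums argument. By Carath\'eodory's theorem write $p=\sum_{i=1}^{k}\lambda_i q_i$ with distinct $q_1,\dots,q_k\in Q$, with $\lambda_i>0$, $\sum_i\lambda_i=1$, and $k\le d+1$ (if $p\in Q$ simply take $k=1$). Set $s_i\eqdef\lambda_1+\cdots+\lambda_i$ and $r_i\eqdef s_i^{-1}(\lambda_1 q_1+\cdots+\lambda_i q_i)$. Then $r_1=q_1$, and $r_k=p$ because $s_k=1$; and the identity $r_i=(s_{i-1}/s_i)\,r_{i-1}+(\lambda_i/s_i)\,q_i$ exhibits $r_i$ as a point of the segment $r_{i-1}q_i$.

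For part (2) I would induct on $d$. The base case $d=1$ is part (1), since there $\sconv$ coincides with $\conv$ and $\sigma(a,b)=ab$. For the inductive step set $y\eqdef p_d$. As $p$ lies in the horizontal slice of $\sconv(Q)$ at height $y$, Lemma~\ref{lemma_sconv} gives $h(y)^+\cap Q\neq\emptyset$ --- fix $q^\ast\in Q$ with $q^\ast_d\ge y$ --- and also $\overline p\in\sconv(\overline{Q^-})$ in $\R^{d-1}$, where $Q^-\eqdef\{q\in Q:q_d\le y\}$. By the inductive hypothesis there are $k'\le d$, points $\overline{q_1},\dots,\overline{q_{k'}}\in\overline{Q^-}$, and points $\overline{r_1},\dots,\overline{r_{k'}}\in\R^{d-1}$ with $\overline{r_1}=\overline{q_1}$, $\overline{r_{k'}}=\overline p$, and $\overline{r_i}\in\sigma(\overline{r_{i-1}},\overline{q_i})$ for all $i$.

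Now I would lift this walk to $\R^d$ using \emph{running-maximum heights}: let $t_1\eqdef(q_1)_d$ and $t_i\eqdef\max\bigl(t_{i-1},(q_i)_d\bigr)$ for $2\le i\le k'$, so the $t_i$ are nondecreasing and $\le y$; put $r_i\eqdef\overline{r_i}\times t_i$ for $i\le k'$ (hence $r_1=q_1$), and finally $q_{k'+1}\eqdef q^\ast$, $r_{k'+1}\eqdef p=\overline p\times y$, $k\eqdef k'+1\le d+1$. Two routine checks remain. For $2\le i\le k'$: in $\sigma(r_{i-1},q_i)$ the endpoint of lower height rises vertically to height $\max\bigl(t_{i-1},(q_i)_d\bigr)=t_i$, after which the path is a horizontal segment at height $t_i$ whose vertical projection onto $\R^{d-1}$ equals $\sigma(\overline{r_{i-1}},\overline{q_i})$; since $\overline{r_i}$ lies on the latter, $r_i=\overline{r_i}\times t_i$ lies on $\sigma(r_{i-1},q_i)$. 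For the last step: since $t_{k'}\le y\le q^\ast_d$, the stair-path $\sigma(r_{k'},q^\ast)$ starts with the vertical segment $\{\overline p\times t: t_{k'}\le t\le q^\ast_d\}$, which contains $\overline p\times y=p$.

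The argument has no real obstacle --- the lemma is elementary --- but the point to handle with care is the lifting step: one must keep the heights monotone and bounded by $y$, and when identifying the horizontal portion of $\sigma(r_{i-1},q_i)$ with a copy of $\sigma(\overline{r_{i-1}},\overline{q_i})$ one should check both cases $(q_i)_d\le t_{i-1}$ and $(q_i)_d>t_{i-1}$ in the definition of the stair-path (in the first the moving endpoint is $q_i$, in the second it is $r_{i-1}$). I would also remark that degeneracies (e.g.\ $p\in Q$, coinciding $q_i$, or $k'<d$) cause no trouble since only $k\le d+1$ is required.
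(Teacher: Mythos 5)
The paper does not prove this lemma; it is imported from \cite{N_thesis} with no argument given, so there is nothing to compare against line by line. Your proof is correct. Part (1) is the standard normalized-partial-sums walk, and the identity $r_i=(s_{i-1}/s_i)r_{i-1}+(\lambda_i/s_i)q_i$ is a genuine convex combination, so each $r_i$ does lie on $r_{i-1}q_i$. Part (2) is the natural slice-and-lift induction matching the recursive structure of Lemma~\ref{lemma_sconv}: the existence of $q^\ast$ follows because $p\in T(y)\neq\emptyset$ forces $h(y)^+\cap Q\neq\emptyset$, the running maxima $t_i$ stay $\le y$ because all chosen $q_i$ lie in $Q^-$, and since the portion of $\sigma(r_{i-1},q_i)$ at height $t_i$ projects exactly to $\sigma(\overline{r_{i-1}},\overline{q_i})$ (the definition of stair-path is symmetric in its endpoints, so both cases $(q_i)_d\lessgtr t_{i-1}$ work), the lifted point $\overline{r_i}\times t_i$ lies on it; the final vertical hop to $q^\ast$ passes through $p$ because $t_{k'}\le y\le q^\ast_d$. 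One cosmetic remark: the part of $\sigma(r_{i-1},q_i)$ after the vertical rise is a $(d-1)$-dimensional stair-path at height $t_i$, not a single ``horizontal segment,'' but your use of it only relies on its projection being $\sigma(\overline{r_{i-1}},\overline{q_i})$, which is exactly right.
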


We say that a point $p$ \emph{shares the $i$'th coordinate} with a set $S$ if
$p_i=s_i$ for some~$s\in S$.
\begin{lemma}[\cite{BMN_epsilonnets}]\label{lemma_share_coords}
Let $S$ be a $k$-point set in $\R^d$ for some $k\le d+1$, and let
$p$ be a point in $\sconv(S)$. Then $p$ shares at least $d+1-k$
coordinates with $S$.
\end{lemma}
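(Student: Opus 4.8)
The plan is to induct on the dimension $d$, combining the inductive description of the stair-convex hull in Lemma~\ref{lemma_sconv} with a case split on whether the point $p$ shares its last coordinate with $S$. The geometric content lies entirely in Lemma~\ref{lemma_sconv}; everything else is bookkeeping with cardinalities.

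For the base case $d=1$ the claim is immediate: if $k=1$ then $\sconv(S)$ is the single point of $S$, so $p$ equals it and shares its one coordinate (and $d+1-k=1$), while if $k=2$ the bound $d+1-k=0$ is vacuous. For the inductive step, assume the lemma in dimension $d-1$, and let $p\in\sconv(S)$ with $|S|=k\le d+1$. Put $y\eqdef p_d$ and $S^-\eqdef\{s\in S:s_d\le y\}$. Since $p$ lies in the horizontal slice of $\sconv(S)$ at height $y$, that slice is nonempty, so Lemma~\ref{lemma_sconv} gives first that some point of $S$ has height at least $y$, and second that $\overline p\in\sconv(\overline{S^-})$ inside $\R^{d-1}$ (in particular $S^-\ne\emptyset$). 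I will use freely that, for an index $i\le d-1$, the projected point $\overline p$ shares coordinate $i$ with $\overline{S^-}$ exactly when $p$ shares coordinate $i$ with $S^-$, hence with $S$, and that passing to the projection $\overline{S^-}$ can only decrease the number of distinct points.

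Now split into two cases. If $p_d=s_d$ for some $s\in S$, then $|S^-|\le k$; if $|S^-|=d+1$ we must have $k=d+1$ and there is nothing to prove, and otherwise $|\overline{S^-}|\le d=(d-1)+1$, so the induction hypothesis applies to $\overline{S^-}$ and $\overline p$ and produces at least $d-|\overline{S^-}|\ge d-k$ coordinates in $\{1,\dots,d-1\}$ that $p$ shares with $S$; adjoining coordinate $d$ gives $\ge d+1-k$ in total. If instead no point of $S$ has its $d$-th coordinate equal to $p_d$, then the point of $S$ of height $\ge y$ has height strictly greater than $y$, so it lies outside $S^-$; hence $|S^-|\le k-1\le d$, and the induction hypothesis applied to $\overline{S^-}$ and $\overline p$ directly yields $\ge d-|\overline{S^-}|\ge d+1-k$ shared coordinates in $\{1,\dots,d-1\}$, hence for $S$. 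This closes the induction.

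I expect the only delicate points to be the two cardinality boundaries: making sure the set handed to the induction hypothesis in dimension $d-1$ has at most $(d-1)+1=d$ distinct points, and observing that the one configuration where this could fail, $|S^-|=d+1$, forces $k=d+1$, where the conclusion is empty anyway. There is no genuine geometric obstacle, since Lemma~\ref{lemma_sconv} already reduces a height-$y$ slice of $\sconv(S)$ to a stair-convex hull, one dimension lower, of exactly the points of $S$ below that height.
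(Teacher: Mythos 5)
The paper imports Lemma~\ref{lemma_share_coords} from \cite{BMN_epsilonnets} and gives no proof of it, so there is nothing in-paper to compare against; your argument is the natural one (induction on $d$ via the slice characterization of Lemma~\ref{lemma_sconv}), and it is essentially the proof given in the cited source. Your argument is correct: the two cardinality boundaries you flag are handled properly (the case $|S^-|=d+1$ forces $k=d+1$, where the bound is vacuous, and in the other case the strictly-higher witness point guarantees $|S^-|\le k-1\le d$), and the counts of shared indices in $\{1,\dots,d-1\}$ plus, in the first case, the index $d$ add up to $d+1-k$ as required.
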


\begin{definition}\label{def_stgenpos}
A set $S\subseteq\R^d$ is said to be in \emph{stair-general position} if $p_i\neq q_i$ for every two distinct points $p,q\in S$ and every $1\le i\le d$.
\end{definition}

The \emph{multipartite Kirchberger theorem}~\cite{Arocha2009,por_thesis} states that if $P_1, \ldots, P_r \subset \R^d$ are $r$ point sets of total size $\sum |P_i| \ge T(d,r)$, such that their convex hulls intersect at a common point $x$, then there exist subsets $P'_1\subseteq P_1$, $\ldots$, $P'_r \subseteq P_r$ of total size $\sum |P'_i| = T(d,r)$, whose convex hulls still intersect at a common point (not necessarily $x$).\footnote{Proof sketch: Suppose the total number of points is larger than $T(d,r)$ and that the points within each $P_i$ are affinely independent. Then the affine hulls of the $P_i$'s intersect at a flat $f$ of dimension at least $1$. Starting at $x$, let us move within $f$  in a straight line, until we first hit the boundary of some $\conv P_i$. This allows us to remove one point from that $P_i$.}

The following lemma is the stair-convex analogue of the multipartite Kirchberger theorem. It generalizes Lemma~5.4 of~\cite{BMN_epsilonnets} from $2$ parts to $r$:

\begin{lemma}\label{lemma_st_multiK}
Let $P\subset \R^d$ be an $n$-point set in stair-general position, and let $P_1, \ldots, P_r$ be a partition of $P$ into $r$ parts. Let $X = \sconv(P_1) \cap \cdots \cap \sconv(P_r)$. Then:
\begin{enumerate}
\item[\rm(a)] If $n<T(d,r)$ then $X=\emptyset$.

\item[\rm(b)] If $n=T(d,r)$ and $X\neq \emptyset$, then $X$ contains a single point. Furthermore, each of the $r$ highest points of $P$ belongs to a different part $P_i$.

\item[\rm(c)] If $n\ge T(d,r)$ and $X \neq \emptyset$, then there exist subsets $Q_i \subseteq P_i$ of total size $\sum |Q_i| = T(d,r)$ such that $\bigcap \sconv(Q_i)\neq \emptyset$.
\end{enumerate}
\end{lemma}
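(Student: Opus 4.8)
The three parts will be proved together by induction on $d$, mirroring the inductive structure of $\sconv$ provided by Lemma~\ref{lemma_sconv} and the stair-Carath\'eodory theorem (Lemma~\ref{lemma_Carath_stair}). The base case $d=1$ is elementary: on the line, $\sconv$ is ordinary $\conv$, a set of $n\ge 2r-1$ reals partitioned into $r$ nonempty parts has intersecting convex hulls iff $n\ge 2r-1=T(1,r)$ and the common point is unique on the boundary of a shared interval, which forces the $r$ largest points to lie in distinct parts; part (c) is the elementary one-dimensional Kirchberger/pruning statement. For the inductive step, fix the common point $x\in X$ (if $X=\emptyset$ there is nothing to prove in (a)), and slice at its height $y\eqdef x_d$. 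By Lemma~\ref{lemma_sconv}, $x\in\sconv(P_i)$ forces $h(y)^+\cap P_i\neq\emptyset$ for every $i$ and $\overline x\in\sconv\bigl(\overline{P_i\cap h^-}\bigr)$, where $P_i\cap h^-$ consists of the points of $P_i$ not above height $y$.

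\textbf{Step 1: split off the top points.} Let $t_i\eqdef |P_i\cap h^+|\ge 1$ be the number of points of $P_i$ at or above the slice, and let $b_i\eqdef |P_i\cap h^-|=|P_i|-t_i+[\text{a point of $P_i$ lies exactly at height }y]$; since $P$ is in stair-general position, at most one point of $P$ lies at height exactly $y$. Set $P_i'\eqdef \overline{P_i\cap h^-}\subset\R^{d-1}$; these are $r$ parts in $\R^{d-1}$ in stair-general position whose stair-convex hulls all contain $\overline x$. Now I count: each $t_i\ge 1$, so $\sum_i t_i\ge r$, and in the generic case (no point exactly at height $y$) we have $\sum_i |P_i'| = n-\sum_i t_i \le n-r$. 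Applying part (a) of the inductive hypothesis in dimension $d-1$: if $n<T(d,r)=(r-1)(d+1)+1$ then $n-r<(r-1)d+1=T(d-1,r)$, so $\sconv$ of the $P_i'$ would be empty, contradicting $\overline x\in\bigcap\sconv(P_i')$. This proves (a). (The boundary case where one point sits exactly at height $y$ only helps, as it lowers the effective count; I would handle it by assigning that point to the lower part and noting the inequality is still strict.)

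\textbf{Step 2: uniqueness and the top-point statement for (b).} Suppose $n=T(d,r)$ and $X\neq\emptyset$. Then $n-r = (r-1)d = T(d-1,r)-1$, so by part (a) in dimension $d-1$, $\bigcap\sconv(P_i')=\emptyset$ would hold \emph{unless} $\sum_i|P_i'|\ge T(d-1,r)$, forcing $\sum_i t_i\le r$, hence $t_i=1$ for every $i$: exactly one point of each part is at or above the slice. Pushing this argument upward (re-slicing just below the topmost point, then the next, etc., or equivalently iterating), the $r$ highest points of $P$ each lie in a distinct part — which is the second assertion of (b). For uniqueness: with $\sum_i|P_i'| = T(d-1,r)-1 < T(d-1,r)$ we are in the equality-deficient regime, so I instead argue that if $X$ contained two points $x\neq x'$, then the axis-parallel closedness of $\sconv$ (Corollary~\ref{cor_sconv_closed}) lets me move along a coordinate direction in which $x,x'$ differ and stay inside every $\sconv(P_i)$, producing a whole segment in $X$; slicing at a height strictly between $x_d$ and $x'_d$ (or, if they share the last coordinate, using the lower-dimensional uniqueness from the inductive hypothesis applied to the common slice, whose parts now have $\sum|P_i'| = n-r = T(d-1,r)-1$, contradiction with (a)) yields a slice where the counts drop below threshold — contradiction. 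So $|X|=1$.

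\textbf{Step 3: the pruning statement (c).} Assume $n\ge T(d,r)$ and $X\neq\emptyset$; pick $x\in X$. If $n=T(d,r)$ we are done by (b), so assume $n>T(d,r)$. The cleanest route is the ``walk'' characterization: by Lemma~\ref{lemma_walk}(2), for each $i$ there is a subset $R_i\subseteq P_i$ with $|R_i|\le d+1$ such that $x\in\sconv(R_i)$ (take the $q_1,\dots,q_k$ from the walk reaching $x$). Then $\sum_i|R_i|\le r(d+1)$, but we need the total to be \emph{exactly} $T(d,r)=(r-1)(d+1)+1 \le r(d+1)-d$, so some pruning is still required; alternatively, and more in the spirit of the Kirchberger footnote, I would instead mimic that argument in the stair setting: take minimal $R_i\subseteq P_i$ with $x\in\sconv(R_i)$, and show $\sum|R_i|$ can be driven down to $T(d,r)$ by a single "flat-walk" move. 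Concretely, if $\sum|R_i|>T(d,r)$, I would slice the configuration at height $x_d$ as in Step 1; by part (a) applied to the $R_i$'s in dimension $d-1$, if $\sum_i(|R_i|-t_i) > T(d-1,r)$ then the lower-dimensional slice still has an intersecting sub-configuration, so by induction on $n$ (in dimension $d-1$) I can remove a point from one of the lower parts; if instead $\sum_i(|R_i|-t_i)\le T(d-1,r)$ while $\sum t_i>r$, two of the top points lie in the same part and one can be deleted without losing $x\in\sconv$ of that part (its type-$d$ role is covered by the other). Iterating drives $\sum|R_i|$ down to $T(d,r)$, giving $Q_i\eqdef R_i$.

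\textbf{Main obstacle.} The delicate point is the uniqueness claim in (b) when the Tverberg point's slice contains a point exactly at its own height, and more generally keeping the bookkeeping of $t_i$ versus $b_i$ consistent across the "point exactly on $h(y)$" boundary cases — stair-general position controls this (at most one such point, and stair-general position lets me perturb the height infinitesimally), but it has to be tracked carefully so that the count $\sum|P_i'| = n - r$ is exact in (b) and the strict inequality in (a) is genuinely strict. A secondary subtlety is that in (c) the naive walk bound $\le d+1$ per part overshoots $T(d,r)$, so the genuinely new content is the stair-analogue of the Kirchberger pruning step; I expect that to require combining Corollary~\ref{cor_sconv_closed} (axis-parallel closedness, to justify moving along the intersection "flat" of the slices) with Lemma~\ref{lemma_Carath_stair} (to certify that a point can be dropped once a part has more than $d+1$ points or more than one top point), and that is where most of the care will go.
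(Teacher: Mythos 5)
Your treatment of (a) and of the top-points claim in (b) is a valid alternative route: the paper gets (a) in one line from Lemma~\ref{lemma_share_coords} (a point $x\in X$ shares $\sum_i(d+1-|P_i|)=d+(T(d,r)-n)$ coordinates with $P$, which exceeds $d$ when $n<T(d,r)$ and forces two points of $P$ to share a coordinate), and gets uniqueness in (b) by noting that every point of $X$ must then share all $d$ coordinates with the finite set $P$, whereas two points of $X$ would force the infinite stair-path between them to lie in the stair-convex set $X$. Your slice-and-count induction reaches the same conclusions, but contains two bookkeeping slips. First, a point of $P$ at height exactly $x_d$ lies in both $h^+$ and $h^-$, so it \emph{raises} the lower-slice count to $n-\sum_i t_i+1$ rather than lowering it; the inequality still closes, but not for the reason you state, and this case is not exceptional --- for $n=T(d,r)$ it always occurs. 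Second, in the equal-last-coordinate branch of your uniqueness argument the slice has $\sum_i|P_i'|=T(d-1,r)$, not $T(d-1,r)-1$; a count below threshold would contradict the existence of $x$ itself via (a), which cannot be what you want. What you actually need there is the inductive \emph{uniqueness} statement (b) in dimension $d-1$.

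The genuine gap is in (c). Your iteration has a stuck configuration: when $\sum_i(|R_i|-t_i)\le T(d-1,r)$ and $\sum_i t_i=r$, the total $\sum_i|R_i|$ can equal $T(d,r)+1$ with neither of your two reduction moves available. More fundamentally, even when the first move applies, pruning the lower slice by induction produces a common point $y$ of the pruned projected sets that need not equal $\overline x$, so the retained top points no longer certify a common point in $\R^d$; you never explain how to recombine the pruned slice with the top points. That recombination is the actual content of the paper's proof of (c): slice at $x_d$, prune the lower parts to total size $T(d-1,r)$ by induction obtaining a common point $y$, let $q$ be the highest point of the pruned lower sets (say $q\in Q_1^-$), set $z\eqdef y\times q_d$, and add the highest point of $P_i$ back to $Q_i^-$ only for $i\ge 2$; Lemma~\ref{lemma_sconv} then shows $z$ lies in all $r$ stair-hulls, and the count is $T(d-1,r)+(r-1)=T(d,r)$. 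Your detour through Lemma~\ref{lemma_walk} is not needed and, as you concede, cannot by itself reach $T(d,r)$.
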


\begin{proof}
Suppose there exists a point $x\in X$.
By Lemma~\ref{lemma_share_coords}, $x$ shares at least $c\eqdef\sum (d+1-|P_i|) = d + (T(d,r)-n)$ coordinates with the points of $P$. Hence, $n<T(d,r)$ would imply $c\ge d+1$, meaning, $x$ shares the same coordinate with two different points of $P$, a contradiction. This proves part (a).

Now suppose $n=T(d,r)$. Hence, $x$ shares all $d$ coordinates with the points of $P$, and the same is true of every other point of $X$. Hence, if there were another point $y\in X$, then we would have $\sigma(x,y)\subseteq X$, which leads to a contradiction since $\sigma(x,y)$ contains infinitely many points.

Now let $\{p_1, \ldots, p_r\}$ be the $r$ highest points of $P$. Suppose for a contradiction that two of them, say $p_1$ and $p_2$, belong to the same part $P_i$, and hence none of them belong to $P_j$ for some $j\neq i$. Then $x$ is not higher than the highest point of $P_j$, which is lower than $p_1$ and $p_2$. Therefore, we could remove one of them, say $p_1$, from $P_i$, and by Lemma~\ref{lemma_sconv} its stair-convex hull would still contain $x$. This would contradict part (a). Hence, we have proven part (b).

We now prove part (c) by induction on $d$. Suppose $n\ge T(d,r)$. If $d=1$ then each $\sconv(P_i)$ is an interval on the real line. Let $y$ be the rightmost point of $X$. Then $y\in P_i$ for some $i$, so we can take one point from $P_i$ and the two extremal points of every other $P_j$, for a total of $T(1,r)=2r-1$ points.

Now suppose $d\ge 2$. Let $h\eqdef h(x_d)$ be the horizontal hyperplane containing $x$, and let $P_i^- \eqdef P_i \cap h^-$ for each $i$. By Lemma~\ref{lemma_sconv}, we have $\overline x \in \sconv\bigl(\overline{P_i^-}\bigr)$ for each $i$. Hence, by induction, we can choose subsets $Q_i^- \subseteq P_i^-$, of total size $T(d-1, r)$, such that $\bigcap \sconv\bigl(\overline{Q_i^-}\bigr)$ is not empty. Let $y$ be a point in this intersection (we do not necessarily have $y=\overline x$).

Let $q$ be the highest point of $\bigcup Q_i^-$, and say $q\in Q_1^-$ for simplicity. Let $z\eqdef y\times q_d$, so by Lemma~\ref{lemma_sconv}, we have $z \in \sconv(Q_1^-)$. For each $2\le i\le r$, let $Q_i \eqdef Q_i^- \cup \{p_i\}$, where $p_i$ is the highest point of $P_i$. Since $p_i$ is not lower than $x$ and $x$ is not lower than $z$, it follows again by Lemma~\ref{lemma_sconv} that $z \in\sconv Q_i$. Hence, the subsets $Q_1^-$, $Q_2$, $Q_3$, $\ldots$, $Q_r$ are the desired subsets of $P_1, \ldots, P_r$, since their total size is $T(d-1,r)+(r-1)=T(d,r)$.
\end{proof}

\begin{definition}
If $P\subset \R^d$ has size $|P| = T(d,r)$ and is in stair-general position, then a \emph{stair-Tverberg partition} of $P$ is one that satisfies $\bigcap_{i=1}^r \sconv(P_i) \neq \emptyset$. The unique point in this intersection is called the \emph{stair-Tverberg point} of this partition.
\end{definition}

It turns out that in stair-convex geometry Sierskma's conjecture is true, and furthermore, there are exactly $(r-1)!^d$ Tverberg partitions, and they have the same Tverberg point:

\begin{lemma}\label{lemma_stTv}
Let $P\subseteq \R^d$ be a point set of size $|P| = T(d,r)$ in stair-general position. Let $p_1, p_2, \ldots, p_{T(d,r)}$ be the points of $P$ listed by decreasing last coordinate. Let $Q \eqdef P\setminus \{p_1, \ldots, p_{r-1}\}$. Then:
\begin{itemize}
\item[\rm(a)] Each stair-Tverberg partition of $P$ is obtained inductively as follows: Let $Q_1, \ldots, Q_r$ be a partition of $Q$ such that $\overline{Q_1}$, $\ldots$, $\overline{Q_r}$ is a stair-Tverberg partition of $\overline{Q}$. Then arbitrarily assign the points $p_1, \ldots, p_{r-1}$ one to each of the $r-1$ parts that \emph{do not} contain $p_r$.
\item[\rm(b)] The stair-Tverberg point of all the above partitions is $x\eqdef y\times p_{rd}$ where $y$ is the stair-Tverberg point of $\overline Q$.
\end{itemize}
\end{lemma}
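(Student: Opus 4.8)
The plan is to prove (a) and (b) simultaneously by induction on $d$, strengthening the statement to also record that a $T(d,r)$-point set in stair-general position has exactly $(r-1)!^d$ stair-Tverberg partitions. The base case $d=1$ is the classical description of the Tverberg partitions of $2r-1$ collinear points (the median alone, together with an arbitrary matching of the $r-1$ lowest points to the $r-1$ highest points, the common point being the median), which is exactly what the recipe in (a) yields when $\overline Q$ lies in $\R^0$. Before the induction step I would record two book-keeping facts used freely: since $P$ is in stair-general position, any two of its points differ in every coordinate, so the vertical projection is injective on $P$ and carries every subset of $P$ to a set in stair-general position in $\R^{d-1}$; and $Q$ has $T(d,r)-(r-1)=T(d-1,r)$ points, so the inductive hypothesis applies to $\overline Q$. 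The only geometric inputs to the step are Lemma~\ref{lemma_sconv} (the recursion $\overline{\sconv(S)(t)}=\sconv(\overline{S\cap h(t)^-})$ when $h(t)^+\cap S\neq\emptyset$, and $\sconv(S)(t)=\emptyset$ otherwise) and Lemma~\ref{lemma_st_multiK}, parts (a) and (b).

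Fix $d\ge 2$. I would first handle the direction that every partition of the form prescribed in (a) \emph{is} a stair-Tverberg partition with the claimed point. Let $Q_1,\dots,Q_r$ partition $Q$ with $\overline{Q_1},\dots,\overline{Q_r}$ a stair-Tverberg partition of $\overline Q$ of point $y$, relabel so $p_r\in Q_r$, and set $P_i\eqdef Q_i\cup\{p_i\}$ for $i<r$ and $P_r\eqdef Q_r$. Since $p_1,\dots,p_{r-1}$ were deleted to form $Q$, the point $p_r$ is the highest point of $Q$, so every point of $Q$ has height at most $p_{rd}$ while $p_1,\dots,p_{r-1}$ lie strictly above $p_{rd}$. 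Applying Lemma~\ref{lemma_sconv} at height $p_{rd}$ to each $\sconv(P_i)$ — the upper half-space contains $p_i$ for $i<r$ and $p_r$ for $i=r$, and the points of $P_i$ of height $\le p_{rd}$ are exactly $Q_i$ — gives $\overline{\sconv(P_i)(p_{rd})}=\sconv(\overline{Q_i})\ni y$. Hence $x\eqdef y\times p_{rd}$ lies in every $\sconv(P_i)$, so $\{P_1,\dots,P_r\}$ is a stair-Tverberg partition and, by Lemma~\ref{lemma_st_multiK}(b), its stair-Tverberg point is this $x$. This already gives $P$ at least $(r-1)!\cdot(r-1)!^{d-1}$ stair-Tverberg partitions.

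The substantive direction — the step I expect to be the main obstacle — is that \emph{every} stair-Tverberg partition $\{P_1,\dots,P_r\}$ of $P$, with point $x$, arises this way. By Lemma~\ref{lemma_st_multiK}(b) the $r$ highest points of $P$ lie in distinct parts; relabel so $p_i\in P_i$, and note $p_i$ is then the highest point of $P_i$ for each $i\le r$ (the only points above $p_i$ are $p_1,\dots,p_{i-1}\in P_1,\dots,P_{i-1}$). The crux is the equality $x_d=p_{rd}$. One inequality is easy: $x\in\sconv(P_r)$ and $p_r$ is the highest point of $P_r$, so Lemma~\ref{lemma_sconv} applied to heights above $p_{rd}$ shows $\sconv(P_r)$ has no point of height exceeding $p_{rd}$, whence $x_d\le p_{rd}$. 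For the reverse, suppose $x_d<p_{rd}$; applying Lemma~\ref{lemma_sconv} at height $x_d$ to each part gives $\overline x\in\sconv\bigl(\overline{P_i\cap h(x_d)^-}\bigr)$ for all $i$ (the side condition $h(x_d)^+\cap P_i\neq\emptyset$ is automatic, since otherwise $\sconv(P_i)$ would contain no point at height $x_d$), so $\overline x$ lies in the $r$-fold intersection of the stair-convex hulls of the parts of a partition of $R\eqdef P\cap h(x_d)^-$ projected into $\R^{d-1}$. By Lemma~\ref{lemma_st_multiK}(a) this forces $|R|\ge T(d-1,r)$. But $p_1,\dots,p_{r-1}$ and $p_r$ all lie strictly above height $x_d$, so $|R|\le T(d,r)-r=T(d-1,r)-1$ — a contradiction. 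Hence $x_d=p_{rd}$.

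With $x_d=p_{rd}$ in hand the structure reads off directly. Applying Lemma~\ref{lemma_sconv} at height $p_{rd}$ gives $\overline x\in\sconv\bigl(\overline{P_i\cap h(p_{rd})^-}\bigr)$ for every $i$, and $P\cap h(p_{rd})^-$, the set of points of $P$ of height $\le p_{rd}$, is exactly $Q$. So $Q_i\eqdef P_i\cap h(p_{rd})^-$ partition $Q$, with $\overline{Q_1},\dots,\overline{Q_r}$ a stair-Tverberg partition of $\overline Q$ (by Lemma~\ref{lemma_st_multiK}(b), since $|\overline Q|=T(d-1,r)$). Moreover $p_r\in Q_r$, while for $i<r$ the only point of $P_i$ above $p_{rd}$ is $p_i$, so $Q_i=P_i\setminus\{p_i\}$; thus $\{P_1,\dots,P_r\}$ is obtained from the stair-Tverberg partition $\{\overline{Q_i}\}$ of $\overline Q$ by the recipe of (a), assigning $p_1,\dots,p_{r-1}$ one apiece to the $r-1$ parts avoiding $p_r$. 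Finally, by the inductive hypothesis all stair-Tverberg partitions of $\overline Q$ share the single point $y$, so $\overline x=y$ and $x=y\times p_{rd}$, which is (b); and counting the choices in the recipe gives exactly $(r-1)!\cdot(r-1)!^{d-1}=(r-1)!^d$ stair-Tverberg partitions of $P$, completing the induction.
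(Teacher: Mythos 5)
Your proof is correct and follows essentially the same route as the paper's: both directions are handled via Lemma~\ref{lemma_sconv} slicing at height $p_{rd}$ together with Lemma~\ref{lemma_st_multiK}. Your counting argument for the key equality $x_d=p_{rd}$ (at most $r-1$ points of $P$ can lie strictly above the stair-Tverberg point, by part (a) of Lemma~\ref{lemma_st_multiK} applied to the projected lower slice) is in fact more detailed than the paper's one-line assertion of that step, and your explicit induction on $d$ makes precise the recursion that the paper leaves implicit.
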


\begin{proof}
First, let $P_1, \ldots, P_r$ be a stair-Tverberg partition of $P$, and let $x$ be the corresponding stair-Tverberg point. By Lemma~\ref{lemma_st_multiK}, each of the points $p_1, \ldots, p_r$ belongs to a different part $P_i$. Say for simplicity that $p_i\in P_i$ for each $i$.

As pointed out in the proof of Lemma~\ref{lemma_st_multiK}, $x$ shares each of its $d$ coordinates with some point of $P$. Therefore, it must be that $x_d = p_{rd}$. Let $Q_i \eqdef P_i\setminus\{p_i\}$ for $1\le i\le r-1$, and $Q_r \eqdef P_r$. Hence, $Q_1, \ldots, Q_r$ is a partition of $Q$. Further, by Lemma~\ref{lemma_sconv}, we have $\overline x \in \sconv(\overline Q_i)$ for all $i$, as desired.

Now let $Q_1, \ldots, Q_r$ be a partition of $Q$ such that $\overline{Q_1},\ldots, \overline{Q_r}$ is a stair-Tverberg partition of $\overline Q$ with stair-Tverberg point $y$. Say $p_r\in Q_r$ for simplicity. Let $x\eqdef y\times p_{rd}$. Then $x\in Q_r$ by Lemma~\ref{lemma_sconv}. Arbitrarily assign the points $p_1, \ldots, p_{r-1}$ one to each of the sets $Q_1, \ldots, Q_{r-1}$, obtaining sets $P_1, \ldots, P_{r-1}$. Then, again by Lemma~\ref{lemma_sconv}, $x\in P_i$ for each $1\le i\le r-1$. Hence, $P_1, \ldots, P_{r-1}, Q_r$ is a stair-Tverberg partition of $P$ and $x$ is its stair-Tverberg point.
\end{proof}

\begin{corollary}\label{cor_stair_sierksma}
Let $P$ be a $T(d,r)$-point set in $\R^d$ in stair-general position. Then, $P$ has exactly $(r-1)!^d$ stair-Tverberg partitions.
\end{corollary}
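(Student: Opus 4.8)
The plan is to prove, by induction on $d\ge 1$, the slightly more precise statement that \emph{every} $T(d,r)$-point set in $\R^d$ in stair-general position has exactly $(r-1)!^d$ stair-Tverberg partitions. For the base case $d=1$ I would argue directly. Write $P=\{p_1,\dots,p_{2r-1}\}$ with $p_1>\dots>p_{2r-1}$ (recall $T(1,r)=2r-1$), so that $\sconv$ is the interval hull on the line. By Lemma~\ref{lemma_st_multiK}(b) the $r$ highest points $p_1,\dots,p_r$ lie in distinct parts of any stair-Tverberg partition, hence one per part; as in the proof of Lemma~\ref{lemma_stTv} the stair-Tverberg point is then forced to be $p_r$. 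Consequently the part containing $p_r$ receives none of $p_1,\dots,p_{r-1}$ and equals $\{p_r\}$, while each of the other $r-1$ parts, which contains some $p_j$ with $j<r$, must also contain a point $\le p_r$, i.e.\ one of the $r-1$ lowest points $p_{r+1},\dots,p_{2r-1}$; since there are exactly $r-1$ of each, those lowest points get distributed one to each such part, and conversely every such distribution yields a stair-Tverberg partition. This gives $(r-1)!=(r-1)!^1$ partitions.

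For the inductive step ($d\ge2$) I would read the count off Lemma~\ref{lemma_stTv}. With $Q=P\setminus\{p_1,\dots,p_{r-1}\}$ as in that lemma, stair-general position of $P$ makes the vertical projection injective on $P$, so $\overline Q$ is a set of $|Q|=T(d,r)-(r-1)=T(d-1,r)$ points in $\R^{d-1}$ in stair-general position, and by the induction hypothesis it has exactly $(r-1)!^{d-1}$ stair-Tverberg partitions. Lemma~\ref{lemma_stTv}(a) says every stair-Tverberg partition of $P$ arises from a stair-Tverberg partition $\{Q_1,\dots,Q_r\}$ of $\overline Q$ (pulled back to $Q$) together with a bijective assignment of $p_1,\dots,p_{r-1}$ to the $r-1$ parts $Q_i$ not containing $p_r$, and its proof shows the converse, that every such choice of data does yield a stair-Tverberg partition of $P$. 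The remaining point is that this correspondence is a bijection: given a stair-Tverberg partition of $P$, one recovers $Q_r$ as the part containing $p_r$ (it received none of $p_1,\dots,p_{r-1}$), and each other part as $Q_i\cup\{p_j\}$ where $p_j$ is the unique element of that part lying in $\{p_1,\dots,p_{r-1}\}$ (since $Q$ is disjoint from $\{p_1,\dots,p_{r-1}\}$), so the assignment $p_j\mapsto Q_i$ is recovered as well. As there are $(r-1)!$ such assignments for each of the $(r-1)!^{d-1}$ stair-Tverberg partitions of $\overline Q$, the set $P$ has exactly $(r-1)!\cdot(r-1)!^{d-1}=(r-1)!^d$ stair-Tverberg partitions.

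I do not expect a serious obstacle here, since Lemma~\ref{lemma_stTv} has already done all of the geometry and what remains is combinatorial bookkeeping. The points that need a little care are: checking that no part other than $Q_r$ accidentally contains a second point of $\{p_1,\dots,p_{r-1}\}$, so that the assignment is genuinely recoverable; verifying that $\overline Q$ really is a $T(d-1,r)$-point set in stair-general position, so that the induction hypothesis applies; and noting that the recursion cannot be pushed down to $d=0$, because projecting an $r$-point set onto $\R^0$ collapses it to a single point — which is why the base case $d=1$ is handled by hand above.
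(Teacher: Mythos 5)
Your proof is correct and follows exactly the paper's route: induction on $d$, with the inductive step read off from Lemma~\ref{lemma_stTv} as a factor of $(r-1)!$ per dimension. The paper leaves the $d=1$ base case and the bijection bookkeeping as "straightforward"; your write-up simply fills in those details.
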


\begin{proof}
By induction on $d$. The case $d=1$ is straightforward. For $d\ge 2$, by Lemma~\ref{lemma_stTv}, the number of stair-Tverberg partitions of $P$ equals $(r-1)!$ times the number of stair-Tverberg partitions of $\overline Q$ for the $Q$ mentioned in the lemma.
\end{proof}

\begin{remark}
By Corollary~\ref{cor_sconv_closed}, if $P$ is in stair-degenerate position then the number of partitions of $P$ with intersecting stair-convex hulls can only increase.
\end{remark}

\subsection{A transference lemma}

Let $p, q\in\B$ be two points inside the bounding box of the stretched grid (not necessarily grid points). For $1\le i\le d$, we say that the \emph{stretched distance between $p$ and $q$ in direction $i$} is $c$ if $p_i = K_i^c q_i$ or $q_i = K_i^c p_i$, or, in other words, if $|\pi(q)_i-\pi(p)_i|=c/(m-1)$.

If the stretched distance between $p$ and $q$ in direction $i$ is at most $c$, then we say that $p$ and $q$ are \emph{$c$-close in direction $i$}. If this distance is at least $c$ then we say that $p$ and $q$ are \emph{$c$-far apart in direction $i$}. If $p$ and $q$ are $c$-close in every direction $1,\ldots,d$ then we say that they are \emph{$c$-close}. If they are $c$-far apart in every direction $1,\ldots,d$ then we say that they are \emph{$c$-far apart}.

\begin{lemma}[Transference lemma]\label{lem_transference}
Let $P\subset \B$ be a finite point set such that every two points of $P$ are $(2d+3)$-far apart. Let $P_1, \ldots, P_r$ be a partition of $P$ into $r$ parts. Then $\bigcap \conv(P_i)\neq\emptyset$ if and only if $\bigcap \sconv(P_i)\neq \emptyset$.
\end{lemma}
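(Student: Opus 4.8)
The plan is to show that when the points of $P$ are sufficiently far apart in the stretched sense, ordinary convexity and stair-convexity agree on the question of whether a given partition has intersecting (stair-)convex hulls. I would phrase the heart of the argument as a ``$\pi$-transport'' statement: if every two points of $P$ are $(2d+3)$-far apart, then for every subset $P_i$ one has the two-sided containment
\[
  \sconv(\pi(P_i)) \;\subseteq\; \pi(\conv(P_i)) \;\subseteq\; \sconv'(\pi(P_i)),
\]
where $\sconv'$ denotes a slightly ``fattened'' stair-convex hull (stair-convex hull of a small neighborhood of $\pi(P_i)$), and the fattening is by an amount that shrinks as the $K_i$ grow. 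Since each $K_i$ is chosen large (condition \eqref{eq_Ki}), and since the points are $(2d+3)$-far apart, the fattening is small enough not to create or destroy any intersection among the $r$ sets. Concretely, both directions of the claimed equivalence then follow: if $\bigcap\conv(P_i)\ne\emptyset$ then, applying $\pi$, the sets $\pi(\conv(P_i))$ have a common point, hence so do the $\sconv'(\pi(P_i))$; a compactness/limit argument as $m\to\infty$, or a direct quantitative estimate, then forces $\bigcap\sconv(\pi(P_i))\ne\emptyset$, which by $\pi$-equivariance is equivalent to $\bigcap\sconv(P_i)\ne\emptyset$ in the original coordinates. The reverse direction is analogous.

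The technical engine is Lemma~\ref{lemma_property_Gs} and its consequences: a straight segment $ab$ in $\B$, after applying $\pi$, stays within distance $1/d^2$ (in each lower coordinate) of the stair-path $\sigma(\pi(a),\pi(b))$, provided the endpoints are far enough apart in the relevant directions. I would first prove a single-segment version: for $a,b$ that are $(2d+3)$-far apart, $\pi(\overline{ab})$ lies within a $c$-neighborhood of $\sigma(\pi(a),\pi(b))$ for some $c = c(d)$ that can be made as small as we like by the choice of the $K_i$. Then I would lift this from segments to convex hulls using Lemma~\ref{lemma_walk}(1): any point $p\in\conv(P_i)$ is reached by a walk $q_1\to q_2\to\cdots\to q_k$ of length $k\le d+1$ through points of $P_i$, with $r_i\in r_{i-1}q_i$; applying the single-segment estimate along each leg of the walk and Lemma~\ref{lemma_walk}(2) to recognize the resulting stair-walk, we get that $\pi(p)$ lies in a slightly-fattened stair-convex hull of $\pi(P_i)$. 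The error accumulates over at most $d+1$ legs, so it stays controlled, at the cost of strengthening the required separation from $c$ to roughly $(2d+3)$ — this is exactly where the constant $2d+3$ enters, and why it is stated in the hypothesis. The opposite inclusion, $\sconv(\pi(P_i))\subseteq\pi(\conv(P_i))$ up to fattening, is proved symmetrically, now using Lemma~\ref{lemma_Carath_stair} or Lemma~\ref{lemma_walk}(2) to decompose a point of the stair-convex hull and matching each stair-leg to a nearly axis-parallel piece of a genuine segment.

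The main obstacle, and the reason the paper defers the detailed proof to Appendix~\ref{app_transference}, is the bookkeeping of error terms across the induction on dimension and across the up-to-$(d+1)$-leg walks: one must verify that at each of the $d$ coordinate directions the ``stretched slack'' of $(2d+3)$ is enough to absorb (i) the $1/d^2$-type deviation of Lemma~\ref{lemma_property_Gs}, (ii) the accumulation over the legs of the walk, and (iii) the propagation through the recursive slicing used to describe $\sconv$ (Lemma~\ref{lemma_sconv}), all without the fattened and un-fattened stair-hulls ever disagreeing about a common intersection point of $r$ sets. Once the single-segment lemma is established with an explicitly small constant $c(d)$ and the separation bound is set generously enough (the $2d+3$), the rest is a routine but lengthy verification; the qualitative statement — ``far-apart points behave stair-convexly'' — is intuitively clear, exactly as the paper remarks.
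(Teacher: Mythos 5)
Your reduction to ``each $\conv(P_i)$ and $\sconv(P_i)$ are close to each other after applying $\pi$'' is sound and corresponds to Corollary~\ref{cor_dclose} in the paper, but the final step --- ``the fattening is small enough not to create or destroy any intersection among the $r$ sets'' --- is precisely the content of the lemma and cannot be waved through. When $|P|=T(d,r)$ (the case everything reduces to via the multipartite Kirchberger theorem and Lemma~\ref{lemma_st_multiK}(c)), the intersection $\bigcap\sconv(P_i)$, when nonempty, is a \emph{single point} (Lemma~\ref{lemma_st_multiK}(b)), and likewise for $\bigcap\conv(P_i)$. So you are trying to deduce that $r$ sets meet from the fact that slightly fattened copies of them meet; for zero-dimensional, non-robust intersections this implication is false in general, and there is no compactness or $m\to\infty$ limit available because $P$ is a fixed finite set. (Also, the left inclusion $\sconv(\pi(P_i))\subseteq\pi(\conv(P_i))$ in your displayed chain is false as stated: the image of a straight segment and the corresponding stair-path are merely close to one another; neither contains the other.)

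The missing idea is a local transversality argument that upgrades ``approximately intersecting'' to ``exactly intersecting.'' The paper shows that inside a suitable axis-parallel box $C$ around the candidate common point $p$, each $\sconv(P_i)\cap C$ is \emph{exactly} equal to $f_i\cap C$ for an axis-parallel $(|P_i|-1)$-flat $f_i$ determined by the set $I_i$ of coordinates that $p$ shares with $P_i$; that $\conv(P_i)\cap C=\ahull(P_i)\cap C$; and that these flats are ``$I_i$-oriented'' with respect to $C$ (they miss the facets $\pm i$, $i\in I_i$) for sets $I_1,\dots,I_r$ partitioning $[d]$ --- this is where the $(2d+3)$-separation and Lemma~\ref{lemma_share_coords} are used. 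A separate geometric lemma (Lemma~\ref{lemma_Ioriented_intersect}, proved by induction with an intermediate-value argument) then guarantees that flats oriented along complementary coordinate sets must meet in the interior of $C$. Without some such exact local description of both hulls near $p$, your error bookkeeping over the $(d+1)$-leg walks, however careful, cannot close either direction of the equivalence.
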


The lemma is intuitively obvious, given that stair-convexity is the limit behavior of regular convexity in the stretched grid under $\pi$. The reason we need the points of $P$ to be far apart enough from each other is to avoid the ``rounded'' parts of the $\pi(\conv(P_i))$'s---the parts in which the correspondence between convexity and stair-convexity breaks down.

Unfortunately, the proof of the lemma is quite tedious. We relegate it to Appendix~
\ref{app_transference}.

\subsection{Our results on the stretched grid}

This is our result regarding Sierksma's conjecture:

\begin{theorem}\label{thm_our_sierksma}
Let $P$ be any set of $T(d,r)$ points in $\B$ such that every two points of $P$ are $(2d+3)$-far apart. Then $P$ has exactly $(r-1)!^d$ Tverberg partitions.
\end{theorem}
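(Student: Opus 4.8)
The plan is to derive Theorem~\ref{thm_our_sierksma} directly from the two main ingredients already established: the transference lemma (Lemma~\ref{lem_transference}) and the stair-convex Sierksma count (Corollary~\ref{cor_stair_sierksma}). The point set $P$ in the hypothesis has every two points $(2d+3)$-far apart, which in particular means that in each direction $i$ no two points of $P$ share the $i$-th coordinate, so $P$ is in stair-general position. Thus Corollary~\ref{cor_stair_sierksma} applies to $P$ and tells us that $P$ has exactly $(r-1)!^d$ stair-Tverberg partitions, i.e.\ exactly $(r-1)!^d$ partitions $\{P_1,\ldots,P_r\}$ of $P$ into $r$ parts with $\bigcap_{i=1}^r \sconv(P_i)\neq\emptyset$.

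Next I would invoke the transference lemma, which I get to use because the far-apartness hypothesis is exactly the $(2d+3)$-far condition it requires. For \emph{any} partition $\{P_1,\ldots,P_r\}$ of $P$ into $r$ parts, Lemma~\ref{lem_transference} gives the equivalence $\bigcap_i \conv(P_i)\neq\emptyset \iff \bigcap_i \sconv(P_i)\neq\emptyset$. Therefore the collection of Tverberg partitions of $P$ (in the ordinary Euclidean sense) coincides exactly with the collection of stair-Tverberg partitions of $P$. Combining this with the count from the previous paragraph yields that $P$ has exactly $(r-1)!^d$ Tverberg partitions, which is the assertion of the theorem.

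One small point that should be addressed explicitly: Tverberg's theorem only guarantees that the convex hulls of the parts \emph{intersect}, and the definition of a Tverberg partition used in the paper is precisely that $\bigcap_i \conv(P_i)\neq\emptyset$ (a common intersection point, not necessarily unique). So ``number of Tverberg partitions'' means exactly the number of partitions $\{P_1,\ldots,P_r\}$ of $[T(d,r)]$ with intersecting convex hulls, and the argument above counts precisely that quantity. It is worth remarking (as the paper does in the stair-convex setting, via Lemma~\ref{lemma_st_multiK}(b)) that here each such intersection is in fact a single point, but this is not needed for the count.

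There is essentially no obstacle remaining: the theorem is a two-line corollary of Corollary~\ref{cor_stair_sierksma} and Lemma~\ref{lem_transference}, with the only thing to check being that the far-apartness hypothesis simultaneously supplies stair-general position (so that the stair-convex count applies) and the $(2d+3)$-far condition (so that transference applies). All of the genuine work---the inductive count of stair-Tverberg partitions and the tedious verification of the transference lemma in Appendix~\ref{app_transference}---has already been done, so the proof here is just the assembly step.
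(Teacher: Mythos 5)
Your proof is correct and is essentially identical to the paper's, which also derives the theorem by combining the transference lemma with Corollary~\ref{cor_stair_sierksma}. Your explicit observation that the $(2d+3)$-far-apart hypothesis implies stair-general position (so that the stair-convex count applies) is a detail the paper leaves implicit, but the argument is the same.
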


In particular, a randomly chosen set of $T(d,r)$ points from the stretched grid will satisfy the condition of the theorem with probability tending to $1$ as $n\to\infty$.

\begin{proof}[Proof of Theorem~\ref{thm_our_sierksma}]
By the transference lemma and Corollary~\ref{cor_stair_sierksma}.
\end{proof}

\begin{definition}
The \emph{stretched diagonal} is the sequence of points obtained by taking $a_1 = a_2 = \cdots = a_d = (2d+3)j$ for $j=0, 1, 2, \ldots, (m-1)/(2d+3)$ in (\ref{eq_Gs}).
\end{definition}

\begin{lemma}
The stretched diagonal is homogeneous with respect to all Tverberg partitions; moreover, the Tverberg partitions that occur in the stretched diagonal are exactly the colorful ones.
\end{lemma}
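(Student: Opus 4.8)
The plan is to reduce everything to the stair-convex world via the Transference Lemma and then prove a single clean combinatorial fact about stair-Tverberg partitions on a diagonal. By construction the stretched diagonal consists of points $D_0,D_1,\dots$ that are \emph{coordinate-monotone} (since each $K_i>1$, the $j$-th point strictly dominates the $(j-1)$-st in every coordinate) and \emph{pairwise $(2d+3)$-far apart} (consecutive points have stretched distance exactly $2d+3$ in every direction, so any two points are at least $(2d+3)$-far apart in every direction); in particular it is in stair-general position. Hence, for any length-$T(d,r)$ subsequence $S$ and any partition $\mathcal I=\{I_1,\dots,I_r\}$ of $[T(d,r)]$, Lemma~\ref{lem_transference} gives that $\mathcal I$ is a Tverberg partition of $S$ (i.e.\ $\bigcap_j\conv(I_j)\neq\emptyset$) if and only if $\bigcap_j\sconv(I_j)\neq\emptyset$, i.e.\ $\mathcal I$ is a stair-Tverberg partition of $S$. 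So it suffices to prove the purely stair-convex statement: \emph{for every coordinate-monotone, stair-general-position sequence $S=(s_1,\dots,s_m)$ with $m=T(d,r)$, the stair-Tverberg partitions of $S$ are exactly the colorful types}. Since this criterion does not depend on $S$, feeding it back through the Transference Lemma shows at once that every $T(d,r)$-subsequence of the stretched diagonal has the same set of Tverberg partitions, namely the colorful ones; hence the diagonal avoids $\neg\tv_{\mathcal I}$ when $\mathcal I$ is colorful and avoids $\tv_{\mathcal I}$ otherwise (so it is homogeneous with respect to every $\tv_{\mathcal I}$), and the types that occur in it are precisely the colorful ones.

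To prove the stair-convex statement I would exhibit an explicit candidate stair-Tverberg point and apply the stair-Carath\'eodory theorem. Index $S$ in coordinate order and, for $1\le i\le d$, let $w_i\eqdef (r-1)i+1$, the left endpoint of the $i$-th length-$r$ window $W_i\eqdef\{w_i,w_i+1,\dots,w_i+r-1\}$; note $w_i+r-1=w_{i+1}$, so consecutive windows overlap in one index, $W_{i-1}\cap W_i=\{w_i\}$, and $W_0,\dots,W_d$ cover $[m]$ (with $W_d$ ending at $w_d+r-1=m$). Define $x\in\R^d$ by $x_i\eqdef (s_{w_i})_i$. The key (one-line) computation, immediate from coordinate-monotonicity, is that for each $t\in\{0,1,\dots,d\}$ a point $s_k\in S$ has type $t$ with respect to $x$ in the sense of Lemma~\ref{lemma_Carath_stair} exactly when $k\in W_t$: indeed $s_k$ has type $0$ iff $k\le w_i$ for all $i$, i.e.\ $k\le w_1=r$; and for $t\ge1$, $s_k$ has type $t$ iff $k\ge w_t$ and $k\le w_i$ for all $i>t$, i.e.\ $w_t\le k\le w_{t+1}$ (the upper constraint being vacuous, hence just $k\ge w_d$, when $t=d$), which is precisely $W_t$. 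By Lemma~\ref{lemma_Carath_stair}, $x\in\sconv(I_j)$ iff the index set $I_j$ meets every window $W_t$; hence $x\in\bigcap_j\sconv(I_j)$ iff every part meets every window, and since each window has $r$ indices and there are $r$ parts this forces, by pigeonhole, exactly one index of each window into each part, i.e.\ $\mathcal I$ is colorful. Thus every colorful $\mathcal I$ is a stair-Tverberg partition of $S$.

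For the converse I would simply count: by Corollary~\ref{cor_stair_sierksma}, $S$ has exactly $(r-1)!^d$ stair-Tverberg partitions, there are exactly $(r-1)!^d$ colorful types, and we have just shown the colorful ones are all stair-Tverberg, so the two families coincide. (Alternatively, one can bypass the count and prove ``stair-Tverberg $\iff$ colorful'' directly by induction on $d$ using Lemma~\ref{lemma_stTv}(a): deleting the $r-1$ highest points of $S$ removes exactly $W_d\setminus\{w_d\}$ and leaves the projection $\overline Q$, again a coordinate-monotone stair-general-position sequence of $T(d-1,r)$ points whose windows $W_0,\dots,W_{d-1}$ are precisely the windows of a $(d-1,r)$-colorful type, while the constraint imposed by $W_d$ is exactly the ``one-per-part among parts missing $p_r$'' assignment appearing in Lemma~\ref{lemma_stTv}(a); the base case $d=1$, where stair-hulls are intervals, is immediate.) I do not expect a genuine obstacle here: all the geometric work is already packaged in the Transference Lemma and the stair-convexity lemmas. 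The one step requiring care is purely bookkeeping---lining up the overlapping-window indexing with ``type $t$'' in stair-Carath\'eodory (or with the recursion of Lemma~\ref{lemma_stTv})---and once that indexing is set up correctly the argument is routine.
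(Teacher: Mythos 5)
Your proposal is correct and follows the same route as the paper: reduce to stair-convexity via the Transference Lemma (the diagonal's points being pairwise $(2d+3)$-far apart and coordinate-monotone), then characterize the stair-Tverberg partitions of a coordinate-monotone sequence as exactly the colorful types. The paper's own proof is a one-line citation of Lemma~\ref{lemma_stTv}; your explicit stair-Carath\'eodory computation of the candidate point together with the count from Corollary~\ref{cor_stair_sierksma} (and your parenthetical induction via Lemma~\ref{lemma_stTv}(a), which is what the paper implicitly invokes) fills in that step correctly.
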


\begin{proof}
By the transference lemma and Lemma~\ref{lemma_stTv}, since the stretched diagonal is monotonic with respect to all coordinates simultaneously.
\end{proof}

\bibliographystyle{alpha}
\bibliography{unavoidable_Tv}

\appendix

\section{Proof of the transference lemma (Lemma~\ref{lem_transference})}\label{app_transference}

We start with some simple claims about axis-parallel boxes.

Recall that $[d] \eqdef \{1,\ldots, d\}$. Let $C\subset \R^d$ be an axis-parallel box $C \eqdef \{x\in\R^d\mid a_i \le x_i \le b_i \text{ for $i\in[d]$}\}$. For each $i\in[d]$, call the facet of $C$ that satisfies $x_i=a_i$ \emph{facet $-i$}, and the facet of $C$ that satisfies $x_i=b_i$ \emph{facet $i$}.

\begin{lemma}\label{lemma_I_not_too_large}
A $k$-flat that intersects the interior of $C$ must intersect a facet $\pm i$ for at least $k$ distinct indices $i\in[d]$.
\end{lemma}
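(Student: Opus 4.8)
The statement is that a $k$-flat $F$ meeting the interior of the axis-parallel box $C = \{x : a_i \le x_i \le b_i,\ i\in[d]\}$ must cross a facet $\pm i$ for at least $k$ distinct coordinate directions $i$. My plan is to argue by contradiction via a dimension count. Suppose the set $J \subseteq [d]$ of directions $i$ for which $F$ meets facet $i$ or facet $-i$ has $|J| = j < k$. The key observation is that if $F$ avoids both facets $\pm i$, then $F \cap C$ is bounded in direction $i$ only if $F$ itself is bounded in direction $i$ (i.e. the $i$-th coordinate function is constant, or at least bounded, on $F \cap C$), since $F\cap C$ is a bounded convex set whose boundary is contained in $\partial C$, and the only faces of $C$ with constrained $i$-coordinate are $\pm i$.

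More precisely, I would proceed as follows. First, pick a point $p$ in the interior of $C$ lying on $F$. Let $i \notin J$. I claim the $i$-th coordinate is constant on $F$: consider the line $\ell$ through $p$ in the direction of $F$ that varies the $i$-th coordinate as much as possible — formally, let $v \in F - F$ (the direction space of $F$) be any vector, and follow $p + tv$; as $t$ increases or decreases, the point stays in $\mathrm{int}(C)$ until it hits $\partial C$, and the facet it first hits must be some $\pm i'$; by assumption that facet cannot be $\pm i$, so $v_i$ must actually be $0$ for the motion in direction $i$ to never be the binding constraint — iterating over a spanning set of directions of $F$ shows $F$ lies in the hyperplane $\{x_i = p_i\}$. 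Hence $F \subseteq \bigcap_{i \notin J}\{x_i = p_i\}$, which is an affine subspace of dimension $d - |[d]\setminus J| = |J| = j$. Therefore $k = \dim F \le j < k$, a contradiction.

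The step that needs the most care is the claim that $F$ avoiding $\pm i$ forces the $i$-th coordinate to be constant on $F$. The clean way to see this: $F \cap C$ is a nonempty compact convex set; its relative boundary lies in $\partial C = \bigcup_{\pm i} (\text{facet }\pm i)$; if direction $i \notin J$, then no boundary point of $F\cap C$ lies on a facet $\pm i$, so the linear functional $x \mapsto x_i$ restricted to $F\cap C$ attains its max and min on $\partial(F\cap C)$ but never on a facet constraining $x_i$ — meaning the max and min are attained at points that are interior to $C$ in the $i$-direction, hence (since $F\cap C$ is convex and $x_i$ is linear) $x_i$ must be constant on $F\cap C$, and since $F\cap C$ spans $F$ (as $p\in\mathrm{int}\,C\cap F$), $x_i$ is constant on $F$. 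This is the only slightly subtle point; everything else is a routine dimension count. I expect the whole argument to be short, roughly half a page.
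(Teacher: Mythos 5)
There is a genuine gap: your key intermediate claim --- that if $F$ avoids both facets $\pm i$ then the $i$-th coordinate is constant on $F$ --- is false, and both arguments you offer for it are non sequiturs. Counterexample: $d=2$, $C=[0,1]^2$, and $F$ the line through $(0.4,0)$ and $(0.6,1)$. This line meets the interior of $C$ and avoids facets $\pm 1$, yet $x_1$ is not constant on it. The flaw in your first argument is the step ``the facet it first hits cannot be $\pm i$, so $v_i$ must actually be $0$'': a direction $v$ with $v_i\neq 0$ is perfectly compatible with some \emph{other} coordinate's constraint becoming binding first (here $v=(0.2,1)$, and $x_2\le 1$ binds before $x_1\le 1$). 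The flaw in your ``clean'' version is the final implication: the maximum of $x_i$ over the convex set $F\cap C$ being attained at a point where neither $x_i=a_i$ nor $x_i=b_i$ is active does not force $x_i$ to be constant on $F\cap C$; in the example the maximum $0.6$ of $x_1$ is attained at $(0.6,1)$, which is interior to $C$ in the $1$-direction, and $x_1$ is still nonconstant.

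The good news is that your ray argument proves a weaker, correct statement that still yields the lemma. Fix $p\in F$ in the interior of $C$ and let $V$ be the direction space of $F$. For any nonzero $v\in V$, the exit point of the ray $p+tv$ ($t>0$) lies on a facet $\pm j$ with $v_j\neq 0$ (a constraint in a coordinate where $v_j=0$ never becomes tight, since $a_j<p_j<b_j$), and since that exit point lies on $F$, such a $j$ belongs to $J$. Hence no nonzero $v\in V$ has $v_j=0$ for all $j\in J$; that is, the coordinate projection $v\mapsto (v_j)_{j\in J}$ is injective on $V$, so $k=\dim V\le |J|$ directly, with no contradiction hypothesis needed. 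Note this is weaker than your claim: it says $V$ \emph{injects into} the $J$-coordinates, not that $V$ is \emph{contained in} their span. Once repaired, your route is genuinely different from the paper's, which instead inducts on $d$ by slicing $C$ and the flat with a hyperplane $x_i=p_i$ for an index $i$ whose facets \emph{are} met; the corrected linear-algebra count is arguably shorter and avoids the induction.
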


\begin{proof}
If $k=0$ there is nothing to prove. If $k=1$ the claim is simple, since a line must enter $C$ through some facet.

For $k\ge 2$ we proceed by induction on $d$. Let $f$ be a $k$-flat, $k\ge 2$, that contains point $p$ the interior of $C$. Let $I$ be the set of indices $i\in[d]$ such that $f$ \emph{does not} intersect the facets $\pm i$. We want to show that $|I| \le d-k$. As pointed out above, we certainly have $I\neq [d]$. Hence, let $i\notin I$. Let $h$ be the hyperplane given by $x_i=p_i$. Then $C'\eqdef C\cap h$ is a $(d-1)$-dimensional axis-parallel box, whose $d-1$ pairs of facets can be labeled $\pm j$ for $j\in[d]\setminus\{i\}$ in the natural way. Further, $f'\eqdef f\cap h$ is a $(k-1)$-dimensional flat, which does not intersect any of the facets $\pm j$, $j\in I$. Hence, by induction on $d$ we have $|I| \le (d-1)-(k-1) = d-k$.
\end{proof}

\begin{definition}
A $k$-flat $f$ ($0\le k\le d$) is said to be \emph{$I$-oriented with respect to $C$}, for $I\subseteq [d]$ of size $|I|=d-k$, if $f$ intersects the interior of $C$ but does not intersect any of the facets $\pm i$, $i\in I$.
\end{definition}

\begin{lemma}
If the $k$-flat $f$ is $I$-oriented with respect to $C$, then $f$ intersects the interior of every facet $\pm i$, $i\notin I$.
\end{lemma}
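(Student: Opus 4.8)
The plan is to read off the linear-algebraic structure that $I$-orientedness imposes on $f$, and then produce the required boundary points by a one-dimensional ``first-exit'' walk inside $f$. Throughout, let $V$ be the linear subspace of $\R^d$ parallel to $f$, so $\dim V=k$; write $J\eqdef[d]\setminus I$, so $|J|=k$; let $W\eqdef\langle e_i:i\in I\rangle$ be the coordinate subspace spanned by the ``forbidden'' directions, with $\dim W=d-k$; and fix a point $p\in f\cap\mathrm{int}(C)$.

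\textbf{Step 1: show $V\cap W=\{0\}$, hence $V\oplus W=\R^d$.} I would argue by contradiction. Suppose $0\neq v\in V\cap W$; then $v_j=0$ for all $j\in J$ while $v_i\neq 0$ for some $i\in I$. Walk from $p$ inside $f$ along $v$: the map $t\mapsto p+tv$ fixes every $J$-coordinate at $p_j\in(a_j,b_j)$, while some $I$-coordinate is driven to $\pm\infty$; since $C$ is convex and bounded, $\{t\ge 0: p+tv\in C\}$ is a bounded interval $[0,t^*]$, with $t^*>0$ because $p$ is interior. The exit point $q\eqdef p+t^*v$ lies on $\partial C$, so some coordinate of $q$ equals $a_l$ or $b_l$; it cannot be a $J$-coordinate (those are strictly interior), so some $q_i\in\{a_i,b_i\}$ with $i\in I$. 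Then $q\in f\cap C$ lies on facet $\pm i$ for some $i\in I$, contradicting $I$-orientedness. Therefore $V\cap W=\{0\}$, and a dimension count gives $V\oplus W=\R^d$; consequently the projection $\pi_J\colon\R^d\to\R^J$ forgetting the $I$-coordinates restricts to an isomorphism $V\xrightarrow{\ \sim\ }\R^J$. In particular, for each $i_0\in J$ there is a vector $v\in V$ with $v_{i_0}=1$ and $v_j=0$ for all $j\in J\setminus\{i_0\}$.

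\textbf{Step 2: produce the relative-interior points of facets $\pm i_0$.} Fix $i_0\in J$ and take $v$ as in Step 1. Since $(p+tv)_{i_0}=p_{i_0}+t\to\infty$, the set $\{t\ge 0:p+tv\in C\}$ is again a bounded interval $[0,t^*]$ with $t^*>0$, and $q\eqdef p+t^*v\in f\cap\partial C$. For $j\in J\setminus\{i_0\}$ we have $q_j=p_j\in(a_j,b_j)$; for $i\in I$ we must have $q_i\in(a_i,b_i)$, since otherwise $q$ lies on a facet $\pm i$ with $i\in I$. Hence the only coordinate of $q$ attaining a bound is the $i_0$-th, and since $q_{i_0}=p_{i_0}+t^*>a_{i_0}$ we conclude $q_{i_0}=b_{i_0}$; that is, $q$ lies in the relative interior of facet $+i_0$. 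Running the identical argument with $-v$ in place of $v$ yields a point of $f$ in the relative interior of facet $-i_0$. As $i_0\in J=[d]\setminus I$ was arbitrary, this is precisely the asserted conclusion. (The degenerate cases $k=0$, where $J=\emptyset$ and there is nothing to prove, and $k=d$, where $f=\R^d$ and $W=\{0\}$, are covered trivially.)

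\textbf{Main obstacle.} There is no essential difficulty: the lemma reduces entirely to the ``first-exit'' observation, namely that walking inside $f$ from an interior point of $C$ in a direction whose nonzero components lie (almost entirely) in $I$ would force $f$ to meet a forbidden facet. The only points needing care are routine: (i) verifying that each set $\{t\ge 0:p+tv\in C\}$ is a \emph{bounded} interval, which holds because $C$ is convex and the walking direction pushes some coordinate to $\pm\infty$; and (ii) checking that the exit point we land on is a \emph{relative-interior} point of the facet rather than of a lower-dimensional face, which is exactly what the strict interiority of all the remaining coordinates of $q$ guarantees.
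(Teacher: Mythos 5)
Your proof is correct, but it takes a genuinely different route from the paper's. The paper argues by induction on $d$: it fixes $i\in[d]\setminus I$, slices $C$ and $f$ with the hyperplane $x_i=p_i$ through an interior point $p$ of $f\cap C$, uses Lemma~\ref{lemma_I_not_too_large} to check that the slice $f\cap h$ has dimension exactly $k-1$ and is still $I$-oriented in the smaller box, and then repeats with a second index to recover the facets $\pm i$ themselves (which is why it needs the separate base case $k=1$). You instead extract the underlying linear algebra directly: $I$-orientedness forces the direction space $V$ of $f$ to be transverse to the coordinate subspace $W=\langle e_i:i\in I\rangle$ (your first-exit walk along a vector of $V\cap W$ is exactly the right contradiction), whence $\pi_J|_V$ is an isomorphism onto $\R^J$ and you can walk inside $f$ so that only the $i_0$-th coordinate moves among the $J$-coordinates; the first-exit point then lands in the relative interior of facet $\pm i_0$ because every other coordinate is pinned strictly between its bounds (the $J\setminus\{i_0\}$ ones because they do not move, the $I$ ones by $I$-orientedness). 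Your argument avoids both the induction and the appeal to the preceding lemma, handles all $k$ uniformly, and makes explicit the structural fact that $f$ is a graph over the $J$-coordinates near $C$ — which is essentially the content of the Claim inside the paper's proof of Lemma~\ref{lemma_Ioriented_intersect}. The paper's inductive slicing, on the other hand, is the template reused throughout that appendix, so it keeps the three box lemmas stylistically parallel. Either way the statement is fully proved.
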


\begin{proof}
If $k=0$ there is nothing to prove. If $k=1$ the claim is simple, since $f$ is a line, which must enter and exit $C$ through two distinct facets, and there are only two facets left.

For $k\ge 2$ we proceed by induction on $d$. Let $p$ be a point of $f$ lying in the interior of $C$. Pick an index $i\in[d]\setminus I$, and let $h$ be the hyperplane given by $x_i=p_i$. As before, let $C'\eqdef C\cap h$, and label its facets $\pm j$ for $j\in[d]\setminus\{i\}$ in the natural way. Then $f'\eqdef f\cap h$ intersects the interior of $C'$, but does not intersect any of its facets $\pm j$, $j\in I$. The dimension of $f'$ is either $k-1$ or $k$, but it cannot be $k$ by Lemma~\ref{lemma_I_not_too_large}, so it is $k-1$. Hence, $f'$ is $I$-oriented with respect to $C'$. Hence, by induction, $f'$ intersects the interior of all the facets of $C'$ labeled $\pm j$ for $j\notin I\cup \{i\}$. Therefore, $f$ intersects the interior of the equally-named facets of $C$.

To prove that $f$ intersects the interior of the facets $\pm i$, repeat the above argument with a different index $j\in [d]\setminus I$. Such a $j\neq i$ is guaranteed to exist since $k\ge 2$.
\end{proof}

\begin{lemma}\label{lemma_Ioriented_intersect}
Let $I_1, \ldots, I_m$ be $m$ pairwise-disjoint subsets of $[d]$ whose union equals $[d]$. Let $f_1,\ldots, f_m$ be flats that are $I_1$-, $\ldots$, $I_m$- oriented with respect to $C$, respectively. Then $f_1\cap \cdots \cap f_m$ contains a single point, which lies in the interior of $C$.
\end{lemma}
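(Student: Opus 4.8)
The statement is the natural ``transversal'' complement to Lemma~\ref{lemma_I_not_too_large} and the preceding lemma: if each $f_j$ is $I_j$-oriented and the $I_j$ partition $[d]$, then the $f_j$ cut $C$ in complementary directions and must meet at a unique interior point. I would prove it by induction on $d$, peeling off one coordinate at a time, exactly as in the two lemmas just above it.

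\textbf{Key steps.} First, the base cases. If $m=1$ then $I_1=[d]$, so $f_1$ is $0$-dimensional, i.e.\ a single point in the interior of $C$, and there is nothing more to prove. If $d=1$ (so one of the $I_j$ is $\{1\}$ and the rest are empty), then the $I_j=\emptyset$ flats are $0$-flats, forcing $m\le 2$; the only interesting case is $m=1$, already handled, or the degenerate case where some $I_j=\emptyset$ with $m\ge 2$, which cannot occur once $d\ge 1$ and the $I_j$ partition $[d]$ nontrivially — so really the induction is clean. For the inductive step $d\ge 2$: pick an index, say $1\in I_1$ (some $I_j$ is nonempty; relabel so it is $I_1$). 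Since $f_1$ is $I_1$-oriented with $1\in I_1$, $f_1$ does not meet the facets $\pm 1$ of $C$; hence the first coordinate is constant on $f_1\cap C$, equal to some value $t$ strictly between $a_1$ and $b_1$ (because $f_1$ meets the interior). Let $h$ be the hyperplane $x_1=t$, and set $C'\eqdef C\cap h$, an axis-parallel box in $\R^{d-1}$ whose facets are labelled $\pm j$, $j\in[d]\setminus\{1\}$, in the natural way. Now intersect each flat with $h$: put $f_j'\eqdef f_j\cap h$. For $f_1$, note $f_1\subseteq h$ already (first coordinate is identically $t$ on $f_1$, by the previous lemma $f_1$ meets every non-$I_1$ facet, pinning it into $h$), so $f_1'=f_1$ is $(I_1\setminus\{1\})$-oriented with respect to $C'$. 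For $j\ge 2$, $f_j$ does meet facets $\pm 1$, so $f_j$ is not contained in $h$; the intersection $f_j'=f_j\cap h$ has dimension exactly one less than $f_j$ (it cannot be the same dimension, by Lemma~\ref{lemma_I_not_too_large}, since $\dim f_j = d-|I_j|$ and $f_j\cap h$ still meets the interior of $C'$), and it avoids precisely the facets $\pm i$, $i\in I_j$; so $f_j'$ is $I_j$-oriented with respect to $C'$. The sets $I_1\setminus\{1\}, I_2,\ldots, I_m$ are pairwise disjoint and partition $[d]\setminus\{1\}$. By the induction hypothesis in dimension $d-1$, $f_1'\cap\cdots\cap f_m'$ is a single point in the interior of $C'$; but this intersection is exactly $f_1\cap\cdots\cap f_m$ (the extra constraint $x_1=t$ is automatically satisfied by $f_1$), and the interior of $C'$ sits in the interior of $C$ relative to the remaining coordinates while $x_1=t\in(a_1,b_1)$, so the point lies in the interior of $C$. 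This closes the induction.

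\textbf{Main obstacle.} The delicate point is the bookkeeping that the intersection with the slicing hyperplane $h$ \emph{drops} the dimension of each $f_j$ for $j\ge 2$ by exactly one (so that $f_j'$ is again a flat of the right codimension to be $I_j$-oriented) while \emph{not} dropping the dimension of $f_1$ (because $f_1\subseteq h$). For the $f_j$ with $j\ge 2$ this is where Lemma~\ref{lemma_I_not_too_large} does the work: the sliced flat still pierces the interior of the $(d-1)$-box $C'$, so if its dimension had not dropped it would be forced to hit too many facets of $C'$, a contradiction. One also has to check the degenerate possibility that several $I_j$ are empty; but emptiness of $I_j$ forces $f_j$ to be a full hyperplane-complement, i.e.\ $\dim f_j=d$, which means $f_j=\R^d$ and contributes nothing, so one may simply discard those $j$ at the outset and assume every $I_j$ is nonempty (unless $m=1$, the already-trivial case). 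With those two checks in place the induction runs without further friction.
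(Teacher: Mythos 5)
There is a genuine gap in your inductive step. You assert that because $f_1$ is $I_1$-oriented with $1\in I_1$, ``the first coordinate is constant on $f_1\cap C$'' and hence $f_1\subseteq h$. This is false: the flats in this lemma are \emph{not} assumed to be axis-parallel. The definition of $I$-oriented only says that $f_1$ misses the facets $\pm 1$ of $C$ (the bounded faces, not the full hyperplanes $x_1=a_1$, $x_1=b_1$). For example, in $d=2$ with $C=[0,1]^2$, the line $\{(0.5+0.1s,\,s):s\in\R\}$ is $\{1\}$-oriented but its first coordinate varies over $f_1\cap C$. The fact that $f_1$ meets every facet $\pm j$ with $j\notin I_1$ does not ``pin'' it into a coordinate slice. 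This matters for the application: in the proof of the transference lemma the lemma is invoked for the affine hulls $\ahull(P_i)$, which are genuinely non-axis-parallel, so one cannot quietly restrict to the axis-parallel case. Once $f_1\not\subseteq h$, your reduction to dimension $d-1$ breaks down, because $f_1\cap h$ drops dimension and is no longer of the codimension required to be $(I_1\setminus\{1\})$-oriented.

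The paper's proof circumvents exactly this difficulty. It first treats the special case where every $|I_j|=1$ (so all $f_j$ are hyperplanes and $m=d$): it slices $C$ by $h(z)=\{x_1=z\}$ for \emph{every} $z\in(a_1,b_1)$, applies induction to $f_2,\ldots,f_d$ (which do cross the facets $\pm1$, so they slice cleanly) to get a point $p(z)$ in each slice, and then observes that the segment $\{p(z)\}$ runs from one side of $f_1$ to the other and hence meets $f_1$ — a sweeping/intermediate-value argument rather than a single privileged slice. The general case is then reduced to this one by showing that an $I$-oriented $k$-flat is the intersection of $d-k$ hyperplanes, each $\{i\}$-oriented for a distinct $i\in I$ (obtained by extruding $f$ in the directions $e_j$, $j\in I\setminus\{i\}$). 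To repair your argument you would need something like this extrusion step, or the sweep over all $z$; as written, the induction does not close.
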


\begin{proof}
Without loss of generality assume there is no $j$ for which $I_j = \emptyset$, since that implies $f_j=\R^d$.

We first prove the claim for the special case where each $I_j$ has size $1$ (so the flats $f_j$ are hyperplanes and $m=d$). In this case we proceed by induction on $d$. Say for simplicity that $I_i=\{i\}$ for each $1\le i\le d$.

For each $a_1<z<b_1$ let $h(z)$ be the hyperplane satisfying $x_1 = z$. Consider the $(d-1)$-dimensional axis-parallel box $C'(z) = C \cap h(z)$. Let $i\ge 2$. Since $f_i$ intersects facets $\pm 1$ of $C$, by convexity it intersects $C'(z)$. Hence $f'_i(z) \eqdef f_i\cap C'(z)$ is a $(d-2)$-flat (a hyperplane within $h(z)$) that intersects the interior of $C'(z)$ but avoids its facets $\pm i$ (since they are contained in the equally-named facets of $C$). Hence, $f'_i(z)$ is $\{i\}$-oriented with respect to $C'(z)$.

Therefore, by induction on $d$, the hyperplanes $f'_2(z), \ldots, f'_d(z)$ intersect at a point $p(z)$ in the interior of $C'(z)$. The points $p(z)$, $a_1<z<b_1$, form a line segment. This line segment goes from one side of $f_1$ to the other one, so it must intersect $f_1$. This proves the special case.

The general case can be reduced to the above special case by applying the following claim:

\setcounter{claim}{0}
\begin{claim}
Let $f$ be a $k$-flat that is $I$-oriented with respect to $C$. Then $f$ can be written as the intersection of $d-k$ hyperplanes, each of which is $\{i\}$-oriented with respect to $C$ for a different $i\in I$.
\end{claim}

\begin{proof}
Let $e_1, \ldots, e_d$ be the standard unit vectors in $\R^d$.

We first show that that if $g$ is a $J$-oriented $\ell$-flat and $j\in J$, then $g+\R e_j\eqdef \{x + \alpha e_j \mid x \in g, \alpha \in \R\}$ (the extrusion of $g$ in the direction $e_j$) is a $(J\setminus\{j\})$-oriented $(\ell+1)$-flat. Indeed, suppose for a contradiction that $g+\R e_j$ intersects a facet $\pm i$ in $J\setminus\{j\}$. That means that there is some point $y$ in $g$ (but outside of $C$) such that $y + \alpha e_j$ falls on facet $\pm i$. Let $q$ be a point of $g$ in the interior of $C$. Then the segment $qy$, which lies in $g$, intersects one of the facets $\pm j$---contradiction.

Now let us come back to the given $k$-flat $f$. For each $i\in I$, let $h_i$ be the hyperplane obtained by extruding $f$ in all directions $e_j$, $j\in I\setminus\{i\}$. By the above argument, $h_i$ is $\{i\}$-oriented with respect to $C$. Hence, these are the desired $d-k$ hyperplanes.
\end{proof}

This concludes the proof of Lemma~\ref{lemma_Ioriented_intersect}.
\end{proof}

\subsection{Back to the stretched grid}

The following lemma is the main motivation behind the stretched grid. It says that, under $\pi$, straight-line segments become very close to stair-paths.

\begin{lemma}\label{lemma_1close}
Let $a,b\in\B$. Then every point of the line segment $ab$ is $1$-close to some point of the stair-path $\sigma(a,b)$ and vice versa.
\end{lemma}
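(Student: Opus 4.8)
The plan is to prove Lemma~\ref{lemma_1close} by induction on the dimension $d$, peeling off one coordinate at a time exactly as in the recursive definitions of both the straight segment (via parametrization) and the stair-path $\sigma(a,b)$. For $d=1$ the statement is trivial, since $ab=\sigma(a,b)$ as sets and ``$1$-close in direction~$1$'' is automatic. For $d\ge 2$, after possibly swapping $a$ and $b$, assume $a_d\le b_d$, so $\sigma(a,b)$ consists of the vertical segment $aa'$ (where $a'=(a_1,\ldots,a_{d-1},b_d)$) followed by the stair-path $\sigma(a',b)$ inside the horizontal hyperplane $h(b_d)$.

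First I would estimate, for a point $p$ on the line segment $ab$, how far $\overline p$ has moved from $\overline a$ by the time $p$ reaches a given height. Writing $p = a + t(b-a)$, the key quantitative input is that in the stretched grid the constants satisfy $K_i \ge 2d^2 K_{i-1}^m$, so the last coordinate dominates: as long as $p_d$ is below, say, $K_d^{c}$ for the appropriate exponent, the ratio $p_j/a_j$ in every direction $j\le d-1$ stays within a factor $K_j^{1/d^2}$ or so of $1$ — this is essentially the content of Lemma~\ref{lemma_property_Gs}, restated in terms of $\pi$-coordinates, and it should be recycled rather than reproved. Concretely, I would split the line segment $ab$ into the portion $P_{\mathrm{low}}$ below height $b_d$ and the portion above (which is a single point if $a_d\le b_d$, namely just the endpoint region near $b$ — actually the whole segment lies at heights between $a_d$ and $b_d$). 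For a point $p\in ab$ at height $y\in[a_d,b_d]$, I claim $p$ is $1$-close to the point $q$ of $\sigma(a,b)$ at the same height, where $q$ lies on $aa'$ if we interpret ``same height'' along the vertical part, and then on $\sigma(a',b)$. The bound $|\,\overline p - \overline a\,|$ being tiny in stretched distance handles the comparison with the vertical part $aa'$ (whose horizontal coordinate is exactly $\overline a$); then for the horizontal part I would apply the induction hypothesis to $a'$ and $b$ inside $h(b_d)$, after checking $a',b\in\B$.

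For the ``vice versa'' direction, given a point $q$ on $\sigma(a,b)$ I must produce a point $p$ on $ab$ that is $1$-close to it. If $q$ lies on the vertical segment $aa'$, at height $y\le b_d$, I take $p$ to be the point of $ab$ at height $y$; the horizontal displacement of $p$ from $\overline a = \overline q$ is again controlled by the stretched-grid estimate, giving $1$-closeness (here the threshold built into the $K_i$'s, together with the slack absorbed into the constant, is what makes the bound exactly $1$). If $q$ lies on $\sigma(a',b)$, it lies at height $b_d$; I take $p$ to be the endpoint-side portion — more carefully, I apply the induction hypothesis in $h(b_d)$ to get a point $p'$ on the segment $a'b$ that is $1$-close to $q$ in directions $1,\ldots,d-1$, and then note that $p'$ and $q$ share the last coordinate $b_d$, while I still need a point genuinely on $ab$: here I take $p=b$ if $q$ is near $b$, or more precisely I observe that the straight segment $a'b$ is itself $1$-close to $ab$ in all coordinates because $a$ and $a'$ differ only in the last coordinate and the stretched-grid domination makes the cone over $a'b$ from $a$ collapse horizontally. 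Chaining the two $1$-closeness relations (segment $ab$ to segment $a'b$, and segment $a'b$ to $\sigma(a',b)$) and being careful that ``$1$-close'' composes with only a bounded loss — which is why the constants have the generous factor $2d^2$ and why downstream we only ever ask for $(2d+3)$-farness — yields the claim.

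The main obstacle is bookkeeping the constants so that the final closeness bound is exactly $1$ rather than some $d$-dependent quantity: one must verify that the per-step horizontal drift introduced when the line segment rises through one ``stretched decade'' is at most $1/d^2$ in stretched distance (this is Lemma~\ref{lemma_property_Gs} in disguise), and that summing at most $d$ such steps, plus the at-most-$d$ steps of the stair-path, stays below $1$ — which forces the choice $K_i\ge 2d^2K_{i-1}^m$ and explains the slack. A secondary subtlety is that $\sigma(a,b)$ has at most $d$ axis-parallel pieces but the correspondence between a point of $ab$ at height $y$ and a point of $\sigma(a,b)$ at height $y$ is not a bijection on the nose (the stair-path is ``flat'' on its horizontal pieces while the segment keeps rising), so the matching must be set up piece-by-piece and the induction hypothesis invoked on the appropriate horizontal slice; handling the degenerate cases where $a$ and $b$ share a coordinate, or where a stair-piece has length zero, is routine but must be stated.
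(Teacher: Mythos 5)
Your decomposition is the same as the paper's: induct on $d$, match the portion of $ab$ lying more than one stretched unit below $b_d$ with the vertical piece of the stair-path (using the drift bound of Lemma~\ref{lemma_property_Gs}), and match the top portion with the $(d-1)$-dimensional stair-path $\sigma(a',b)$ by induction. You also correctly isolate the crux: writing $c$ for the lowest point of $ab$ that is $1$-close to $b$ in direction $d$, the horizontal projection of the top portion is the segment $\overline{c}\,\overline{b}$, and $\overline{c}$ does not coincide with $\overline{a}=\overline{a'}$, so the induction hypothesis does not apply verbatim. (Your ``match by height'' phrasing should be discarded in favor of this split at $c$: a point of $ab$ in the top stretched unit can have horizontal coordinates far from $\overline a$, while the point of $\sigma(a,b)$ at the same height still sits on the vertical piece with horizontal coordinates exactly $\overline a$.)

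The gap is in how you resolve that crux. You propose to chain two closeness relations --- top portion of $ab$ to the segment $a'b$, then $a'b$ to $\sigma(a',b)$ via the induction hypothesis --- and to absorb the loss into the ``generous'' constants $K_i\ge 2d^2K_{i-1}^m$. But chaining adds stretched distances: the inductive call returns a bound of exactly $1$, and the endpoint drift contributes a strictly positive $\epsilon$ on top of it, giving $1+\epsilon$. Enlarging the $K_i$ cannot repair this: the $1$ produced by the induction hypothesis is a stretched distance, which is scale-free and does not shrink as the $K_i$ grow; the factor $2d^2$ only controls the size of the drift (the $1/d^2$ in Lemma~\ref{lemma_property_Gs}), not the composition of closeness relations. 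Iterated over the $d$ levels of the induction, your argument therefore proves a bound strictly larger than $1$ --- possibly acceptable for the downstream uses in Corollary~\ref{cor_dclose} and the transference lemma, given the slack in $(2d+3)$, but not the lemma as stated. The paper's device is to strengthen the induction hypothesis rather than the conclusion: one proves that if $a'$ is very close to $a$ and $b'$ is very close to $b$, then every point of the segment $ab$ is $1$-close to a point of $\sigma(a',b')$ and vice versa. The mismatch $\overline{c}\ne\overline{a'}$ is then absorbed into the hypothesis of the inductive call (applied to the segment $\overline{c}\,\overline{b}$ against the stair-path $\sigma(\overline{a},\overline{b})$), and the conclusion remains an exact bound of $1$ at every level. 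This is the missing idea in your write-up.
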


\begin{proof}[Proof sketch]
Here we give a sketch of the proof. The full proof can be found in \cite{N_thesis}. Say without loss of generality that $a_d\le b_d$. Let $c$ be the lowest point of the segment $ab$ that is $1$-close to $b$ in direction $d$. Let $c'$ be the point directly above $a$ at the same height as $b$. As mentioned above (Lemma~\ref{lemma_property_Gs}), for each $1\le i\le d-1$ we have $|c_i-c'_i|\le 1/d^2$.

Hence, let us split the segment $ab$ into segments $ac$ and $cb$, and let us split the stair-path $\sigma(a,b)$ into the vertical segment $ac'$ and the $(d-1)$-dimensional stair-path $\sigma(c',b)$. Then every point of $ac$ is $1$-close to a point of $ac'$ and vice versa, with plenty of room to spare in the first $d-1$ coordinates. And every point of $cb$ is $1$-close in direction $d$ to a point of $\sigma(c',b)$. For directions $1,\ldots,d-1$ we would like to argue by induction on $d$. The problem is that $\overline c$ and $\overline{c'}$ do not exactly coincide.

The way to solve this problem is to prove by induction a stronger claim: If $a'$ is very close to $a$ and $b'$ is very close to $b$, then every point of the line segment $ab$ is $1$-close to some point of the stair-path $\sigma(a', b')$ and vice versa. We omit the details which are technical and not very hard.
\end{proof}

\begin{corollary}\label{cor_dclose}
Let $P\subseteq \B$ be a point set. Then every point of $\conv(P)$ is $d$-close to a point of $\sconv(P)$ and vice versa.
\end{corollary}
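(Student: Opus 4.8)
The plan is to derive Corollary~\ref{cor_dclose} from the segment-level comparison (Lemma~\ref{lemma_1close}) by pushing it through the ``walk'' characterizations of the convex and stair-convex hulls (Lemma~\ref{lemma_walk}). I will handle both directions by the same bookkeeping. Take $p\in\conv(P)$. By Lemma~\ref{lemma_walk}(1) there are $q_1,\dots,q_k\in P$ with $k\le d+1$ and points $r_1=q_1,r_2,\dots,r_k=p$ such that $r_i$ lies on the segment $r_{i-1}q_i$ for $i\ge 2$. I will construct, by induction on $i$, points $r'_i$ with $r'_1=q_1$, $r'_i\in\sigma(r'_{i-1},q_i)$, and $r'_i$ being $(i-1)$-close to $r_i$. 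Since $\sconv(P)$ is stair-convex and contains $r'_{i-1}$ together with $q_i\in P$, it contains $\sigma(r'_{i-1},q_i)$, so every $r'_i\in\sconv(P)$; then $r'_k$ is a point of $\sconv(P)$ that is $(k-1)$-close, hence $d$-close, to $p$. For the reverse direction I do the mirror construction: starting from $p\in\sconv(P)$, Lemma~\ref{lemma_walk}(2) gives a stair-walk $r_1=q_1,\dots,r_k=p$ with $r_i\in\sigma(r_{i-1},q_i)$, and I build an ordinary walk $r'_i$ on the segments $r'_{i-1}q_i\subseteq\conv(P)$ with $r'_i$ being $(i-1)$-close to $r_i$.

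The inductive step in both constructions rests on one elementary estimate about perturbing a single endpoint of a segment: if $a,a',b\in\B$ with $a'$ being $c$-close to $a$, and $w=(1-t)a+tb$ is a point of the segment $ab$, then $s\eqdef(1-t)a'+tb$ on the segment $a'b$ is also $c$-close to $w$. Indeed, in each coordinate $j$ (all coordinates in $\B$ are positive), $s_j/w_j=\bigl((1-t)a'_j+tb_j\bigr)/\bigl((1-t)a_j+tb_j\bigr)$ is a weighted mediant of $a'_j/a_j\in[K_j^{-c},K_j^{c}]$ and $1$, hence lies in $[K_j^{-c},K_j^{c}]$ as well, which is exactly $c$-closeness in direction $j$. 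Now, in the forward direction, given $r'_{i-1}$ that is $(i-2)$-close to $r_{i-1}$, write $r_i=(1-t)r_{i-1}+tq_i$ and set $s\eqdef(1-t)r'_{i-1}+tq_i$ on the segment $r'_{i-1}q_i$; by the estimate $s$ is $(i-2)$-close to $r_i$, and by Lemma~\ref{lemma_1close} applied to the segment $r'_{i-1}q_i$ (both endpoints in $\B$) there is a point $r'_i\in\sigma(r'_{i-1},q_i)$ that is $1$-close to $s$, hence $(i-1)$-close to $r_i$. The reverse direction is identical with the roles of segments and stair-paths interchanged: now Lemma~\ref{lemma_1close} is invoked to produce a point $w$ of the segment $r_{i-1}q_i$ that is $1$-close to the stair-path point $r_i$, the perturbation estimate then supplies a point $r'_i$ of $r'_{i-1}q_i$ that is $(i-2)$-close to $w$, and triangle inequality for $c$-closeness (which is a scaled $\ell_\infty$ distance on the image of $\pi$) gives $(i-1)$-closeness of $r'_i$ to $r_i$. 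The base case is trivial since $r'_1=r_1=q_1$, and one checks that all auxiliary points stay in $\B$ automatically, $\B$ being an axis-parallel box and hence closed under convex combinations and under stair-paths.

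I expect no genuine obstacle here: all of the real content is already carried by Lemmas~\ref{lemma_1close} and~\ref{lemma_walk}, and the present argument merely propagates the $1$-close bound of Lemma~\ref{lemma_1close} through the at most $d+1$ steps of a walk, spending one unit of closeness per step, which is why the final bound is $d$ rather than $1$. The only point needing a moment's care is the mediant estimate above, and that is a one-line computation.
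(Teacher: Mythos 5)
Your proof is correct and follows essentially the same route as the paper, whose entire proof of this corollary is the one-line remark ``By Lemma~\ref{lemma_walk}, applying Lemma~\ref{lemma_1close} $k-1$ times for $k\le d+1$.'' The only thing you add is the explicit endpoint-perturbation (mediant) estimate needed to make the induction airtight, and that estimate is verified correctly.
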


\begin{proof}
By Lemma~\ref{lemma_walk}, applying Lemma~\ref{lemma_1close} $k-1$ times for $k\le d+1$.
\end{proof}

We finally get to the transference lemma:

\begin{proof}[Proof of Lemma~\ref{lem_transference}]
By the multipartite Kirchberger theorem and its stair-convex analogue, it is enough to focus on the case $|P|=T(d,r)$, so let us assume this is the case.

For the first direction, suppose $P_1, \ldots, P_r$ is a stair-Tverberg partition of $P$, and let $p$ be its stair-Tverberg point. For each $i$, let $k_i\eqdef |P_i|$, and let $I_i\subseteq [d]$ be the set of coordinates that $p$ shares with $P_i$. By the proof of Lemma~\ref{lemma_st_multiK}, $|I_i|=d+1-k_i$, and the sets $I_i$ form a partition of $[d]$. Let $f_i$ be the $(k_i-1)$-flat satisfying the equations $x_j=p_j$ for $j\in I_i$ (so $f_i$ is ``tangent'' to $\sconv(P_i)$ at point $p$).

Let $C$ be an axis-parallel box containing $p$ such that each facet of $C$ is at stretched distance $d+1$ from $p$. Hence, each $f_i$ is $I_i$-oriented with respect to $C$.

\setcounter{claim}{0}
\begin{claim}
We have $\sconv(P_i)\cap C = f_i \cap C$ for each $i$.
\end{claim}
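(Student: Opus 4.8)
The plan is to prove the two inclusions $f_i\cap C\subseteq\sconv(P_i)$ and $\sconv(P_i)\cap C\subseteq f_i$ separately, with the stair-Carath\'eodory theorem (Lemma~\ref{lemma_Carath_stair}) as the main tool; the whole argument is driven by a scale gap, since every point of $C$ lies within stretched distance $d+1$ of $p$ in each coordinate, while any two points of $P$ are $(2d+3)$-far apart. Recall from the proof of Lemma~\ref{lemma_st_multiK} that $\{I_i\}$ partitions $[d]$ with $|I_i|=d+1-k_i$; equivalently (using stair-general position) for each coordinate $\ell\in[d]$ there is a \emph{unique} point $p^{(\ell)}\in P$ with $p^{(\ell)}_\ell=p_\ell$, and $p^{(\ell)}\in P_i$ exactly when $\ell\in I_i$. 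Hence for any $y\in C$ and any $q\in P$ we have $\sgn(q_\ell-y_\ell)=\sgn(q_\ell-p_\ell)$ whenever $q\ne p^{(\ell)}$ (because $2d+3>d+1$, the far coordinate cannot cross $y_\ell$), while $q_\ell=p_\ell$ when $q=p^{(\ell)}$. Since the stair-type(s) of a point relative to a base point depend only on the signs of the coordinate differences, and since for $q\in P_i$ every $\ell$ with $q=p^{(\ell)}$ lies in $I_i$ (so $q_\ell=p_\ell$ is exactly the value shared by all points of $f_i$ in coordinate $\ell$), the types of each $q\in P_i$ relative to any $y\in f_i\cap C$ coincide with its types relative to $p$. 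As $p\in\sconv(P_i)$ forces $P_i$ to contain a point of each type $0,\dots,d$ relative to $p$, the same holds relative to every $y\in f_i\cap C$, and so $f_i\cap C\subseteq\sconv(P_i)$.

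Before the reverse inclusion I would record the combinatorial structure of the types relative to $p$. For $q\in P_i$ let $T_q\subseteq\{0,\dots,d\}$ be its set of types relative to $p$, and let $A_q=\{\ell:q=p^{(\ell)}\}\subseteq I_i$; the sets $A_q$ partition $I_i$. A direct check shows that if $j<j'$ both lie in $T_q$ then $q_{j'}=p_{j'}$, i.e.\ $T_q\setminus\{\min T_q\}\subseteq A_q$. Combined with $\sum_q|T_q|\ge d+1$ (every type is covered) and $\sum_q(|T_q|-1)\le\sum_q|A_q|=|I_i|=d+1-k_i$, this forces equality everywhere: the sets $T_q$, $q\in P_i$, \emph{partition} $\{0,\dots,d\}$, and $T_q\setminus\{\min T_q\}=A_q$ for every $q\in P_i$.

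For $\sconv(P_i)\cap C\subseteq f_i$, suppose not and choose $y\in(\sconv(P_i)\cap C)\setminus f_i$ minimizing $|S|$, where $S=\{j\in I_i:y_j\ne p_j\}\ne\emptyset$. If $|S|\ge2$, pick any $j^*\in S$; since $y_{j^*}$ is within stretched distance $d+1$ of $p_{j^*}=p^{(j^*)}_{j^*}$ while every other point of $P_i$ has $j^*$-coordinate at least $2d+3$ away, $p^{(j^*)}$ is the extremal point of $P_i$ in direction $j^*$ on the side of $y_{j^*}$ facing $p_{j^*}$, so Corollary~\ref{cor_sconv_closed} lets us reset $y_{j^*}$ to $p_{j^*}$, staying inside $\sconv(P_i)\cap C$ while shrinking $S$ --- contradicting minimality. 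Thus $S=\{j^*\}$, and $y$ agrees with $p$ on all of $I_i$ except at $j^*$. Put $q^*=p^{(j^*)}\in P_i$ and $j_0=\min T_{q^*}$; since $j^*\in A_{q^*}=T_{q^*}\setminus\{\min T_{q^*}\}$ we have $j_0<j^*$. If $y_{j^*}>p_{j^*}$, then $q^*$ --- the unique point of $P_i$ of type $j^*$ relative to $p$ --- loses type $j^*$ relative to $y$, because type $j^*$ would require $q^*_{j^*}=p_{j^*}\ge y_{j^*}$; and because $j^*\notin A_q$ for every $q\ne q^*$ and $y$ agrees with $p$ off $j^*$ on $I_i$, the sign facts show each such $q$ has the same type-$j^*$ status relative to $y$ as relative to $p$. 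So no point of $P_i$ has type $j^*$ relative to $y$, whence $y\notin\sconv(P_i)$ by Lemma~\ref{lemma_Carath_stair} --- contradiction. If instead $y_{j^*}<p_{j^*}$, the same argument applies to type $j_0$ in place of type $j^*$: the coordinate $j^*>j_0$ now spoils the requirement $q^*_{j^*}\le y_{j^*}$ needed for $q^*$ to have type $j_0$, while no other point of $P_i$ gains type $j_0$; again $y\notin\sconv(P_i)$. Either way we reach a contradiction, so $\sconv(P_i)\cap C\subseteq f_i$, which together with the first inclusion proves the claim. I expect this reverse inclusion to be the main obstacle: the two sub-cases $y_{j^*}\gtrless p_{j^*}$ are genuinely asymmetric because the definition of stair-type treats coordinates unequally, so one must pin down \emph{which} type of the block $T_{q^*}$ is destroyed --- its largest element $j^*$ in one case, its smallest element $j_0$ in the other --- and the reduction to $|S|=1$ via axis-parallel closedness is what keeps the bookkeeping (tracking the shared-coordinate sets $A_q$) under control.
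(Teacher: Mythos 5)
Your proof is correct, but it takes a genuinely different route from the paper's, which disposes of the claim in a few lines. For the inclusion $\sconv(P_i)\cap C\subseteq f_i$ the paper simply invokes Lemma~\ref{lemma_share_coords}: any $x\in\sconv(P_i)$ shares at least $d+1-k_i$ coordinates with $P_i$, and since $x\in C$ the $(2d+3)$-separation forces those shared values to be exactly $p_j$ for $j\in I_i$ (any other shared coordinate would put two points of $P$ too close together), i.e.\ $x\in f_i$. Your alternative---reducing to $|S|=1$ by axis-parallel closedness and then pinning down the exact Carath\'eodory type ($j^*$ or $j_0=\min T_{q^*}$) that gets destroyed---is substantially longer, but it is sound and yields a nice structural byproduct the paper never states: the type-sets $T_q$, $q\in P_i$, partition $\{0,\dots,d\}$ with $A_q=T_q\setminus\{\min T_q\}$. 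For the reverse inclusion the roles are swapped: the paper argues that a strict inclusion would place a relative-boundary point of $\sconv(P_i)\cap f_i$ inside $C$, which by Corollary~\ref{cor_sconv_closed} shares an extra coordinate with $P_i$ and again violates the separation; your type-preservation argument (the sign pattern of $q-y$ matches that of $q-p$ for every $q\in P_i$ and $y\in f_i\cap C$, so Lemma~\ref{lemma_Carath_stair} transfers membership from $p$ to $y$) is more direct and sidesteps the paper's slightly delicate topological step. Both proofs ultimately rest on the same scale gap between the $(2d+3)$-separation of $P$ and the stretched radius $d+1$ of $C$.
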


\begin{proof}
Suppose for a contradiction that $\sconv(P_i)\cap C$ contains a point $x\notin f_i$. Then $x$ shares with $P_i$ a \emph{different} subset $I'_i\subset [d]$ of $d+1-k_i$ coordinates. Hence, $I'_i$ intersects some $I_j$, $j\neq i$. Say $\ell\in I'_i\cap I_j$. Since both $p$ and $x$ are contained in $C$, this means that a point of $P_i$ is too close to a point of $P_j$ in coordinate $\ell$---contradiction. 

Similarly, suppose for a contradiction that $\sconv(P) \cap C$ is a strict subset of $f_i \cap C$. Then the relative boundary of $\sconv(P_i)\cap f_i$ contains some point $x\in C$. By axis-parallel closedness (Corollary~\ref{cor_sconv_closed}), $x$ shares with $P_i$ an additional coordinate not in $I_i$, yielding a similar contradiction.
\end{proof}

We now examine how $\conv(P_i)$ intersects $C$. We first note that $\conv(P_i)\cap C$ is not empty, by Corollary~\ref{cor_dclose} and by our choice of the size of $C$.

By an argument similar to the one above, we have $\conv(P_i) \cap C = \ahull(P_i) \cap C$, where $\ahull$ denotes the affine hull. Indeed, otherwise $C$ would intersect a facet of $\conv(P_i)$, given by $\conv(P')$ for some strict subset $P'\subsetneq P_i$. Then, by Corollary~\ref{cor_dclose}, $\sconv(P')$ would contain a point $q$ that is $(2d+1)$-close to $p$. This point $q$ shares one more coordinate with $P_i$ than $p$ does, leading to a contradiction as before.

Finally, we claim that $\ahull(P_i)$ is $I_i$-oriented with respect to $C$, just like $f_i$. Indeed, suppose for a contradiction that $\ahull(P_i)$ intersects a facet $j$ for $\pm j\in I_i$, at a point $q$. By Corollary~\ref{cor_dclose}, there exists a point $z\in \sconv(P_i)$ that is $d$-close to $q$; in particular, $z_j\neq p_j$. Let $w$ be the point on facet $j$ of $C$ satisfying $w_\ell=x_\ell$ for all $\ell\neq j$. Then, by axis-parallel closedness of $\sconv(P_i)$ (repeated application of Corollary~\ref{cor_sconv_closed}), we have $w\in \sconv(P_i)$, contradicting the fact that $\sconv(P_i)$ is $I_i$-oriented. See Figure~\ref{fig_in_box}.

\begin{figure}
\centerline{\includegraphics{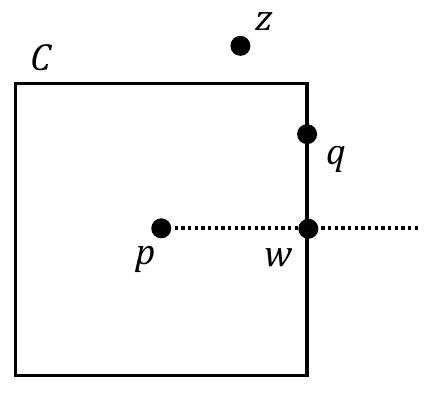}}
\caption{\label{fig_in_box}We have $p\in\sconv(P_i)$. Suppose for a contradiction that there exists a point $q\in\conv(P_i)$ at a facet $j$ of $C$ (here $j=+1$). Then there exists a point $z\in\sconv(P_i)$ not too far away. But then the point $w$, which lies on face $j$ and is aligned with $p$, also belongs to $\sconv(P_i)$.}
\end{figure}

Hence, Lemma~\ref{lemma_Ioriented_intersect} applies, so $\bigcap \conv(P_i)$ contains a point in the interior of $C$, as desired.

For the second direction, suppose that $P_1, \ldots, P_r$ is a Tverberg partition of $P$, and let $p$ be its Tverberg point. Let $C$ be an axis-parallel box containing $p$ such that each facet of $C$ is at stretched distance $d+1$ from $p$. By Corollary~\ref{cor_dclose}, for each $1\le i\le r$, there is a point $y_i\in \sconv(P_i)$ in the interior of $C$. By Lemma~\ref{lemma_share_coords}, $y_i$ shares a set $I_i\subseteq [d]$ of coordinates with $P_i$, where $|I_i| \ge d+1-k_i$. However, since every two points of $P$ are $(2d+3)$-far apart, the sets $I_i$ must be pairwise disjoint. Further, $\sum (d+1-k_i) = d$, so $|I_i| = d+1-k_i$, and the sets $I_i$ form a partition of $[d]$.

As before, for each $i$ we have $\sconv(P_i)\cap C = f_i\cap C$ for some axis-parallel $(k_i-1)$-flat $f_i$ that is $I_i$-oriented with respect to $C$. Hence, Lemma~\ref{lemma_Ioriented_intersect} applies,\footnote{Actually, Lemma~\ref{lemma_Ioriented_intersect} is overkill in this case.} so $\bigcap \sconv(P_i)$ contains a point in the interior of $C$, as desired.
\end{proof}

\section{Mathematica code}\label{app_code}

The following Mathematica code generates all Tverberg types with $d=r=3$ in which the parts have sizes $3,3,3$ and the three triangles pairwise intersect.

\begin{Verbatim}[fontsize=\small]
  s333 = Select[Permutations[{1,1,1,2,2,2,3,3,3}], FirstPosition[#,1][[1]] <
    FirstPosition[#,2][[1]] < FirstPosition[#,3][[1]] &];
  contains[l_,subl_] := Length[LongestCommonSequence[l,subl]] == Length[subl];
  trIntersect[s_,a_,b_] := contains[s, {a,b,a,b,a}] || contains[s, {b,a,b,a,b}];
  Select[s333, trIntersect[#,1,2] && trIntersect[#,1,3] && trIntersect[#,2,3] &]
\end{Verbatim}

In the following code, we define a function \codefont{colorfulTv} that generates all colorful Tverberg types with given parameters $d$ and $r$. We use it to generate the list \codefont{s3334} of all colorful Tverberg types with $d=3$, $r=4$ with parts of sizes $3,3,3,4$. Then, from \codefont{s3334} we generate all the underlying $12$-ary predicates that involve a plane and the point of intersection of three triangles. We encode each such $12$-ary predicate by a sequence on the symbols $\{a,b,c,x\}$, where the symbols $a$, $b$, $c$ represent the three triangles, and $x$ represents the plane. For example, the predicate of Lemma~\ref{lem_highly_symm} is encoded by $abcxabcxabcx$. We make sure each encoding is \emph{canonical}, in the sense that the first $a$ in the sequence appears before the first $b$, which appears before the first $c$.

\begin{Verbatim}[fontsize=\small]
  extensions[r_,s_] := Join[s,#]& /@ Permutations[Complement[Range[r], {s[[-1]]}]];
  extendall[r_,slist_] := Flatten[extensions[r, #] & /@ slist, 1];
  colorfulTv[d_,r_] := Nest[extendall[r, #] &, {Range[r]}, d];
  s3334 = Select[colorfulTv[3, 4], Min[Table[Count[#, i], {i, 4}]] == 3 &];
  makeCanonical[s_,x_] := Module[{abcfirst, rules},
    abcfirst = Sort[{FirstPosition[s, #][[1]], #} & /@ Complement[Range[4], {x}]];
    rules = {x -> "x"}~Join~Table[abcfirst[[i,2]] -> {"a","b","c"}[[i]], {i, 3}];
    s /. rules];
  makePredicates[s_] := Module[{x, xpositions},
    x = Select[Range[4], Count[s, #] == 4 &][[1]];
    xpositions = Flatten[Position[s, x], 1];
    makeCanonical[Delete[s, #], x] & /@ xpositions];
  Union[Flatten[makePredicates /@ s3334, 1]]
\end{Verbatim}

\end{document}